\definecolor{lavender}{rgb}{0.9, 0.9, 0.98}
\newtheorem{theorem}{Theorem}
\newtheorem{remark}{Remark}
\newtheorem{corollary}{Corollary}[theorem]
\begin{document}
%
\title{Statistically Optimal Beamforming and Ergodic Capacity for RIS-aided MISO  Systems}
%
%
%

\author{Kali Krishna Kota, M. S. S. Manasa,  
        Praful D. Mankar, and Harpreet S. Dhillon
\thanks{K. K. Kota, M. S. S. Manasa, and P. D. Mankar are with the Signal Processing and Communication Research Center, International Institute of Information Technology, Hyderabad 500032, India (e-mail: kali.kota@research.iiit.ac.in, mss.manasa@research.iiit.ac.in, praful.mankar@iiit.ac.in). H. S. Dhillon is with Wireless@VT, Department of ECE, Virginia Tech, Blacksburg, VA (Email:  hdhillon@vt.edu). This paper has been submitted in parts to IEEE GLOBECOM 2023 \cite{kota2022optimal}}}

%
%

\markboth{}%
{Shell \MakeLowercase{\textit{et al.}}: Bare Demo of IEEEtran.cls for IEEE Journals}
%



\maketitle

\vspace{-1.5cm}
\begin{abstract}
This paper focuses on optimal beamforming to maximize the mean signal-to-noise ratio (SNR) for a passive reconfigurable intelligent surface (RIS)-aided multiple-input single-output (MISO) downlink system. We consider that both the direct and indirect (through RIS) links to the user experience correlated Rician fading. 
The assumption of passive RIS imposes the unit modulus constraint, which makes the beamforming problem non-convex. To tackle this issue, we apply semidefinite relaxation (SDR) for obtaining the optimal phase-shift matrix and propose an iterative algorithm to obtain the fixed-point solution for statistically optimal transmit beamforming vector and RIS-phase shift matrix. Further, to measure the performance of the proposed beamforming scheme, we analyze key system performance metrics such as outage probability (OP) and ergodic capacity (EC). 
Just like the existing works, the OP and EC  evaluations rely on the numerical computation of the proposed iterative algorithm, which does not clearly reveal the functional dependence of system performance on key parameters such as line-of-sight (LoS) components, correlated fading, number of reflecting elements, number of antennas at the base station (BS), and fading factor. In order to overcome this limitation, we derive closed-form expressions for the optimal beamforming vector and phase shift matrix along with OP for special cases of the general setup. These expressions are then used to gain useful insights into the system performance and to understand the implications of the proposed solutions. Our analysis reveals that the independent and identically distributed ({\rm i.i.d.}) fading is more beneficial than the correlated case in the presence of LoS components. This fact is analytically established for the setting in which the LoS is blocked. Furthermore, we demonstrate that the maximum mean SNR improves linearly/quadratically with the number of RIS elements in the absence/presence of LoS component under {\rm i.i.d.} fading.

\end{abstract}
\vspace{-0.4cm}
\begin{IEEEkeywords}
\vspace{-0.2cm}
Reconfigurable Intelligent Surfaces, Optimal Beamforming, Statistical Beamforming, Spatially Correlated Channel, Outage Analysis, Ergodic Capacity.
\end{IEEEkeywords}
%
\IEEEpeerreviewmaketitle

\section{Introduction}
\IEEEPARstart{R}{econfigurable} intelligent surfaces (RIS) is a planar array that consists of many sub-wavelength-sized antenna elements formed of meta-materials. One can control the phases of these elements to change the way they interact with the impinging electromagnetic wave, thereby controlling the local propagation environment to a certain extent \cite{Emil_SigProcessing_2022}. This ability can influence key characteristics of the propagation environment, such as reflection, refraction, and scattering, which were thus far assumed to be uncontrollable in wireless communications systems \cite{CuiTieJun_CodingMetamaterial_2014,LiuYuanwei_RISprinciples_2021}. If configured properly, RISs can reduce the effects of fading and interference by redirecting the impinging signals such that they add constructively at the receiver. The advantages of such technology are low power consumption, high spectral efficiency, improved coverage, and better reliability \cite{PanCunhua_2021_6G,HuangChongwen_2019_EE}. Though RIS is conceptually similar to several existing technologies, such as relays, backscatter communication, and massive multiple-input multiple-output (MIMO), a key differentiating feature is its low-cost implementation and lower power consumption, especially when all the elements are passive (which will be our assumption). Additionally, RIS operates in the full-duplex mode as a passive device  without additional RF chains requirement and energy consumption, which is not the case in the competing technologies mentioned above ~\cite{WuQingging_RIS_comparisons,DiRenzo_RISvsRelay_2020}.

Due to its ability to create a smart propagation environment, RIS is envisioned as an enabling technology for the various applications in future wireless networks such as terahertz communications, simultaneous wireless information exchange, wireless power transfer, non-orthogonal multiple access, physical layer security, etc \cite{pradhan2022robust,PanCunhua_2021_6G,HuangJie_ChannelModeling_2022}. However, including RIS in communication systems comes with its own unique challenges. Particularly important ones are: 1) channel estimation and 2) optimal design of transmit/receive beamformers and RIS phase shift matrix.
The challenges in channel estimation stem from the need for estimation of the cascaded channels (i.e., transmitter-RIS and RIS-receiver) using the composite channel seen by the receiver \cite{ZhengBeixiong_ChEst_Survey} and the lack of active RIS elements to aid  channel estimation  \cite{MiguelDajer_2022}. 
On the other hand, the optimal selection of beamformer and phase shift matrix based on the given knowledge of the channel usually leads to the non-convex optimization formulation mainly because of the {\em unit modulus constraint} for the passive RIS \cite{BasarErtugrul_20219} and often because of the underlying objective function. This paper focuses on optimal beamforming to maximize the transmission capacity.

With little misuse of terminology, we will refer to the jointly optimal selection of the transmit beamformer for the BS and the phase shift matrix for the RIS elements as {\em optimal beamforming} for easier reference. The approaches to optimal beamforming for multi-antenna systems are primarily categorized based on the knowledge of channel state information (CSI) \cite{Goldsmith_2003}. By leveraging the perfect knowledge of  CSI (PSCI) at the transmitter, significant research efforts have been devoted to selecting instantaneously optimal beamforming for time-varying RIS-aided channels with a focus on optimizing a variety of key performance indicators, \cite{WuQingqing_2019_PSCI_TxPow,HuangChongwen_2018_SR,YuXianghao_MISO_PSCI_2019,YuXianghao_MISO_PSCI_2020,NingBoyu_RISMIMO_PSCI_2020}. The authors of \cite{WuQingqing_2019_PSCI_TxPow} applied semi-definite relaxation (SDR) to obtain optimal phase shift matrix that minimizes the transmission power for a single user and multi-user RIS-aided MIMO communication systems, whereas the authors of \cite{HuangChongwen_2018_SR} presented an alternating  majorization-minimization method based algorithm for optimal beamforming to maximize the sum rate. Further, the authors of \cite{YuXianghao_MISO_PSCI_2019,YuXianghao_MISO_PSCI_2020} proposed iterative algorithms such as fixed point iteration, manifold optimization, and branch-and-bound techniques to maximize the {\rm SNR}. Further, \cite{NingBoyu_RISMIMO_PSCI_2020 } proposed a new sum-path-gain maximization criterion to obtain a  suboptimal solution, which is numerically shown to achieve near-optimal RIS-MIMO channel capacity.
While the aforementioned works rely on PCSI, a few studies in the literature also perform instantaneous optimal beamforming for time-varying RIS-aided channel based on imperfect knowledge of CSI (IPCSI) to account  for the error associated with channel estimation. For example,  \cite{ZhiKangda_2022_IPCSI} optimized the sum rate for RIS-aided massive MIMO system with zero-forcing detectors and IPCSI. A projected gradient-based iterative algorithm is presented in \cite{NemanjaStefan_2021_RateOpt_IPCSI} to maximize the transmission rate for  multi-stream multiple-input RIS-aided MIMO systems with IPCSI.
While these works are performance achieving, they all considered instantaneous PCSI/IPCSI knowledge at the transmitter, which is not always feasible and practical, particularly for the RIS channel where the channel estimation itself is a cumbersome task. Moreover, as pointed out above, the formulations of optimal beamforming for RIS-aided systems are non-convex which are often solved through alternating subproblems (for transmit/receiver beamformer and phase shift matrix) based iterative algorithms. Additionally, finding solutions for the optimal phase shift matrix subproblem often relies on solvers like CVX/MOSEK. Because of this, the instantaneous optimal beamforming for RIS-aided systems becomes a computationally challenging task.
Furthermore, the instantaneous requirement of the PCSI/IPCSI feedback and the RIS reconfiguration increases the system complexity and requires complex RIS design for quick reconfigurability. Due to these reasons, it is practical to perform the optimal beamforming 
using the  statistical knowledge of CSI (SCSI) such that the system performance gets improved in a statistical sense, which is the main theme of this paper. 

In literature, there have been some efforts on the optimal beamforming for RIS-aided systems using the SCSI for a variety of channel models for the {\em direct link} from the BS to user equipment (UE)  and the {\em indirect link} from the BS to UE via RIS.  
The authors of \cite{hu2020statistical} considered a RIS-aided MISO system wherein the direct and indirect links  consist of LoS components along with multipath fading. To capture this, the wireless fading along indirect-direct links is modeled using  the Rician-Rician model.  We will refer to the fading along the indirect and direct links using a pair in a similar way throughout this paper.
Therein, the information of the angle of departure (AoD) and the angle of arrival (AoA) from/to BS/RIS defining the responses of LoS components  is appropriately used to obtain the  closed-form expressions of transmit beamformer and phase shift matrix that maximizes the mean SNR (which is equivalent to maximizing the upper bound on EC). The proposed solution requires  solving these closed-form expressions in an alternative fashion to arrive at a fixed-point optimal solution. The authors of \cite{HanYu_2019_SCSI} first obtained an upper bound on EC achieved with PCSI-based maximum ratio transmit beamforming for a large-scale RIS-aided MISO system under Rician-Rayleigh fading. Next, the authors derived a closed-form solution for a statistically optimal phase shift matrix of RIS that maximizes this upper bound on EC. In \cite{GanXu_2021_SCSI}, SCSI-based iterative optimal beamforming algorithms are presented to maximize the sum ergodic capacities for RIS-assisted multi-user MISO uplink and downlink  systems under the Rician-Rician fading scenario. For this scenario, the SCSI-based optimization formulation usually becomes difficult to handle.  Therefore, authors utilized  optimization methods like the alternating direction method of multipliers, fractional programming, and alternating optimization methods for obtaining sub-optimal solutions for power allocation and phase shift matrix.
The authors of \cite{ZhiKangda_2021_SCSI_SPAWC} presented  an approach wherein an IPSCI-based maximum ratio combining at the BS and a SCSI-based RIS-phase shifts configuration is adopted  for the multi-user uplink system under Rician-Rayleigh fading, whereas in  \cite{ZhiKangda_SSCI_mMIMO_2021}, SCSI-based near-optimal RIS-phase shifts are obtained using genetic algorithm when the BS utilizes PSCI-based MRC for multi-user uplink massive MIMO system.
In a similar direction, there exist few additional works on SCSI-based optimal beamforming for a variety of systems, including
multi-user multi-cell downlink system with the direct link absent \cite{LuoCaihong_Cellular_MISO_SSCI_2021}, multi-pair user exchanging information via RIS when the direct links are absent \cite{PengZhangjie_2021_SCSI}, and probabilistic technique  of discrete RIS phase-shit optimization  \cite{pradhan2023probabilistic}. 

In most of the aforementioned and other similar works, the key factor that enables tractable formulations is an assumption of {\rm i.i.d}  fading coefficients, often coupled with the absence of  LoS component along the direct link. However, ensuring independent fading in multi-antenna systems, particularly in RIS-aided systems, may not always be practical. This is especially true when  a large number of RIS elements are placed in a compact uniform planar array (UPA). Thus, it is crucial to consider  correlated fading for designing SCSI-based optimal beamforming for RIS-aided systems. The authors of \cite{Papazafeiropoulos_2022_SCSI_CorrFading} considered SCSI-based maximization of EC with respect to the RIS phase shifts for correlated Rayleigh channel while assuming that the direct link is absent. Further, the authors of \cite{WangJinghe_CorrelatedFading_2021} considered correlated Rician-Rayleigh fading for SCSI-based optimal beamforming to maximize EC of RIS-aided MIMO systems.  The authors proposed an iterative algorithm wherein the  optimal transmit beamforming vector and phase shift matrix are solved alternatively using SDR.
Most of the algorithms presented above rely on alternating between the transmit beamformer and phase shift matrix sub-problems to reach a fixed point solution similar to the PSCI-based approach. Additionally, the optimal phase shift solutions in each iteration often depend on the numerical optimization solver. However, these numerical solutions often hinder further analytical investigation, such as understanding the exact functional dependence of the optimal solutions on key system parameters. {\em Hence, it is equally, or perhaps even more, important to obtain closed-form optimal solutions, which will be the objective of this paper}.

SCSI-based optimal beamforming for RIS-aided systems circumvents the system design complexity issues and often provides comparable performance to the instantaneous PCSI-based beamforming (particularly when a strong LoS component is present). Therefore, to compare its performance with the PCSI-based schemes, it is crucial to characterize the performance of the statistical beamforming scheme analytically. This is another important reason for deriving a closed-form solution for optimal beamforming that lends analytical tractability in the performance analysis.
The performance of the beamformer scheme is usually characterized using the OP and EC. 
The closed-from expressions of OPs are derived for RIS-aided SISO systems under Rician fading with the direct link absent when the phase shift matrices are chosen based on IPCSI  in \cite{BaoTingnan_2022_Outage} and SCSI in \cite{XUPeng_AsymOutProb_2022}. Further, the authors of \cite{VanChien} derived the coverage probabilities for an arbitrary and statistically optimal phase shift matrix for the RIS-aided SISO system under Rayleigh-Rayleigh fading scenario. However, OP and EC analysis under jointly optimal  transmit/receive beamforming and RIS phase shift matrix for general RIS-aided multi-antenna systems is not investigated. In this paper, we focus on the statistically optimal beamforming and its performance analysis for various fading models.  

{\em Contributions:} This paper focuses on designing SCSI-based optimal beamforming for maximizing the mean SNR of a RIS-aided MISO system under a correlated Rician-Rician fading model incorporating both direct and indirect links with LoS component and correlated multipath fading. For this setting, we propose an iterative algorithm and also analyzed its OP and EC. In addition, we also derived closed-form expressions for the optimal beamforming and its OP under various special cases of the above generalized fading model. Using these derived expressions, we provide useful insights related to their performance comparisons. {\em To the best of our knowledge, this is the first paper to consider such a general model for SCSI-based beamforming.} The closest work dealing with a similar problem is \cite{WangJinghe_CorrelatedFading_2021}, but it ignores the LoS component along the direct link. In the following, we summarize the key contributions of our work.
\begin{enumerate}
\item An iterative algorithm is developed for optimal beamforming to maximize the mean SNR under correlated  Rician-Rician fading.
~For this beamforming scheme, OP is shown to closely follow square of Rice distribution, which is then utilized to determine EC. The parameters of the outage depend on the numerical evaluation of the proposed algorithm. 
\item Next, we derived computationally efficient closed-form expressions for optimal beamforming and their OPs for $\rm{i.i.d.}$ Rician-Rayleigh, correlated Rayleigh-Rayleigh and $\rm{i.i.d.}$ Rayleigh-Rayleigh. Furthermore, for $\rm{i.i.d.}$ Rician-Rician case, we maximized a carefully formulated lower bound of the mean SNR to arrive at the closed-form solutions.   
\item We have shown that the correlated fading outperforms the {\rm i.i.d.} case under Rayleigh-Rayleigh scenario. Next, we present a comparative analysis of {\rm i.i.d.} fading under Rician-Rician, Rician-Rayleigh, and Rayleigh-Rayleigh settings through analytical expressions. 
\item Key takeaways based on our numerical analysis are: 1) the SCSI-based optimal beamforming  performs better under {\rm i.i.d.} fading compared to the correlated scenario in the presence of  LoS component, 2) the direct link presence becomes insignificant when the number of reflecting elements is large, 3) the achievable capacity decreases with the increase in the difference between AoDs of LoS components belonging to direct and indirect links under Rician-Rician fading. 
\end{enumerate}
{\em Notations:} $a^*$ and $|a|$ represent the conjugate and absolute value of $a$. $\left\lVert \mathbf{a} \right\lVert$ and $\mathbf{a}_i$ are the norm and the $i$-th element of vector $\mathbf{a}$, whereas $\mathbf{A}^T$, $\mathbf{A}^H$, $\|\mathbf{A}\|_F$, ${\rm trace}(\mathbf{A})$,  $\mathbf{A}_{i,:}$, $\mathbf{A}_{:,i}$ and $\mathbf{A}_{ij}$ are the transpose, Hermitian, Frobenius norm, trace, $i$-th row, $i$-th column and $ij$-th element of the matrix $\mathbf{A}$, respectively. The notation $\mathbb{C}^{M\times N}$ is the set of  $M \times N$ complex matrices, ${\rm I_M}$ is $M\times M$ identity matrix and $\mathbf{1}_{\rm M}$ is a $M\times 1$ vector with unit elements. $\mathbf{v_A}$ and $\lambda_\mathbf{A}$ are the principal eigenvector and eigenvalue of  $\mathbf{A}$.  $\odot$ is the hadamard product, $\rm{diag}(\mathbf{a})$ is a diagonal matrix such that vector $\mathbf{a}$ forms its diagonal, and $\mathcal{CN}(\boldsymbol{\mu},\mathbf{K})$ denotes complex  Gaussian distribution with mean $\boldsymbol{\mu}$ and covariance matrix $\mathbf{K}$.
\vspace{-.3cm}\section{System Model}\vspace{-.2cm}

We consider a RIS-aided MISO communication system consisting of a BS with $M$ antennas, a RIS with $N$ passive antenna elements, and a single antenna UE. The BS can transmit information to the UE by jointly utilizing the direct link (BS-UE) and the indirect link (BS-RIS-UE). We consider a more practical setup wherein the LoS components and multi-path fading are present along both the direct and indirect links. To incorporate this, we model BS-UE, BS-RIS and RIS-UE channels using Rician fading. 
It is worth noting that the spacing between RIS elements might not ensure the independence of the fading coefficients, especially when arranging a large number of elements in a compact UPA. Therefore, it is crucial to consider correlated fading for designing the SCSI-based optimal beamforming. Our main objective in this paper is to jointly optimize the transmit beamforming vector $\mathbf{f}$ and the RIS phase shift matrix $\mathbf{\Phi}$ by leveraging the statistical knowledge of CSI for the above setup. 
\vspace{-.4cm}\subsection{Spatially Correlated Rician Channel Model}\label{channel model}\vspace{-.2cm}
 We model the BS-UE channel $\mathbf{g}$, BS-RIS channel $\mathbf{H}$, and RIS-UE channel $\mathbf{h}$  using the Rician fading with a factor $K$. Such channels can be expressed as the superposition of a deterministic LoS and spatially correlated random multipath components. The direct link under the correlated fading channel can be expressed as 
\begin{equation}
    \mathbf{g} = \kappa_l\mathbf{\Bar{g}} + \kappa_n\mathbf{\Tilde{g}},\label{directlink}
\end{equation}
where $\kappa_l = \sqrt{\frac{K}{1+K}}$, $\kappa_n = \sqrt{\frac{1}{1+K}}$, $\mathbf{\Tilde{g}}\sim\mathcal{C}\mathcal{N}(0,\mathbf{R}_{\rm BT})$ is the multipath component with covariance matrix $\mathbf{R}_{\rm BT}$ and $\mathbf{\Bar{g}}$ is the deterministic LoS component which is defined by the response of the uniform linear array (ULA) at the BS as $\mathbf{\Bar{g}} = \mathbf{a}_M(\theta_{\rm bd}^{\rm d})$ such that $\theta_{bd}^{\rm d}$ is  AoD along the direct link from BS. The response vector of ULA is given by
    $\mathbf{a}_M(\theta) = \frac{1}{\sqrt{M}}[1,e^{-j\frac{2 \pi \lambda}{d}\sin(\theta)},\ldots,e^{-j\frac{2 \pi \lambda}{d}(M-1)\sin(\theta)}]^T$, where $d$ is the distance between the antenna elements and $\lambda$ is the operating wavelength.
 Similarly, we express the RIS-UE link as 
\begin{equation}
    \mathbf{h} = \kappa_l\mathbf{\Bar{h}} + \kappa_n\mathbf{\Tilde{h}} ,\label{indirect link_h}
\end{equation}
where $\mathbf{\Tilde{h}}\sim\mathcal{C}\mathcal{N}(0,\mathbf{R}_{\rm RT})$ is the multipath component with covariance matrix $\mathbf{R}_{\rm RT}$ at the RIS transmit end, $\mathbf{\Bar{h}} = \mathbf{a}_N(\theta_{\rm rd})$ and  $\theta_{\rm rd}$ is the AoD from RIS.
Now, the BS-RIS link can be given as
 \begin{equation}
    \mathbf{H} = \kappa_l\mathbf{\Bar{H}} + \kappa_n\mathbf{\Tilde{H}} ,\label{indirect link_H}
\end{equation}
where $\mathbf{\Tilde{H}}\sim\mathcal{C}\mathcal{N}(0,\mathbf{R}_{\rm RR})$ is the multipath component modeled using \textit{double-sided spatial correlation} as $\mathbf{\Tilde{H}} = \mathbf{\Tilde{R}}_{\rm RR}\mathbf{\Tilde{H}}_{\rm W}\mathbf{\Tilde{R}}_{\rm BT}$ such that $\mathbf{R}_{\rm RR} = \mathbf{\Tilde{R}}_{\rm RR}\mathbf{\Tilde{R}}_{\rm RR}^T$, $\mathbf{R}_{\rm BT} = \mathbf{\Tilde{R}}_{\rm BT}\mathbf{\Tilde{R}}_{\rm BT}^T$ and $\mathbf{\Tilde{H}_W}\sim\mathcal{C}\mathcal{N}(0,I)$. $\mathbf{R}_{\rm RR}$ is the correlation matrix at the RIS receive end. $\mathbf{\Bar{H}} = \mathbf{a}_N(\theta_{\rm ra})\mathbf{a}^T_M(\theta_{\rm bd}^{\rm i})$ is the LoS component given by the response matrix of UPA such that $\theta_{\rm ra}$ is the AoA at  RIS and $\theta_{\rm bd}^{\rm i}$ is the AoD at  BS.

To model the covariance matrix for MIMO channel, one can apply the widely used  \emph{Kronecker Separable Model
} \cite{KSM} for capturing the pairwise correlation between the antenna elements. However, recently in \cite{RIS_Corr_Fad}, it is shown that the above-mentioned model is inaccurate for the RIS channel and use the UPA geometry of RIS to derive a new model wherein the correlation between two antenna elements is given by ${\rm sinc}\frac{2d}{\lambda}$. 
Using this, we model the fading covariance matrices for the BS and for the RIS on receiving and transmitting ends such that their $ij$-th element is
\begin{equation*}
    \mathbf{R}_{{\rm BT},{ij}} = {\rm sinc}\frac{2d_{ij}}{\lambda} \text{~~and~~}\mathbf{R}_{{\rm RT},{ij}} = \mathbf{R}_{{\rm RR},{ij}} = {\rm sinc}\frac{2r_{ij}}{\lambda},
\end{equation*}
where $d_{ij}$ and $r_{ij}$ are the distances between the $i$-th and $j$-th antennas at BS and antenna elements at RIS, respectively.
\vspace{-.4cm}\subsection{Received Signal Model}\vspace{-.2cm}
Given the BS transmits symbol $x$, the signal received at UE is given by 
\begin{equation}
    y = l(d_1,d_2)\mathbf{h}^T\mathbf{\Phi Hf}x + l(d_0)\mathbf{g}^T\mathbf{f}x +n,\label{RxSig}
\end{equation}
where $n \sim \mathcal{N}(0,\sigma_n^2)$ is the complex Gaussian noise, $\mathbf{f}\in\mathbb{C}^N$ is the transmit beamformer, $\mathbf{\Phi}={\rm diag}(\boldsymbol{\psi})$ is the RIS phase shift matrix, and $l(d_1,d_2)$ and $l(d_0)$ are the far field path loss functions for the indirect and direct links, respectively. The transmit power constraint at the BS for $\mathbb{E}[xx^H] = P_s$ implies $\|\mathbf{f}\|^2 = 1$ where $P_s$ is the total transmission power. Further, the consideration of passive elements for RIS implies that the entries of $\boldsymbol{\psi}$ are complex with unit magnitude, i.e., $|\boldsymbol{\psi}_k| = 1 ~~ \forall k = 0,\ldots, N-1$. Also, in a far field scenario, the path loss for the RIS channel follows the ``product of distances" model such that $l(d_1,d_2) = (d_1d_2)^{-\alpha/2}$ \cite{LiuYuanwei_RISprinciples_2021} and BS-UE channel path loss follows $l(d_0) = (d_0)^{-\alpha/2}$, where $d_0$, $d_1$ and $d_2$ are the distances of BS-UE, BS-RIS, and RIS-UE links, respectively, and $\alpha$ is the path loss exponent.

Using \eqref{RxSig}, SNR can be written as
\begin{equation}
    \text{SNR} = \gamma |\mathbf{h}^T\mathbf{\Phi Hf} + \mu \mathbf{g}^T\mathbf{f}|^2,\label{SNR}
\end{equation}
where $\gamma = (d_1d_2)^{-\alpha}\frac{P_s}{\sigma_n^2} $ and $\mu = (\frac{d_0}{d_1d_2})^{-\alpha/2}$ is the path loss ratio (PLR) of direct and indirect links. It may be noted that the PLR captures the strength of the direct link compared to the indirect link and thus it will be an important parameter for the optimal beamforming. 
\vspace{-.4cm}\subsection{Problem Formulation}\label{prob from}\vspace{-.2cm}
This paper aims to maximize EC by jointly optimizing the transmit beamforming vector $\mathbf{f}$  and  RIS-phase shift matrix $\mathbf{\Phi}$. For given $\mathbf{f}$ and $\mathbf{\Phi}$, EC  is 
\begin{equation*}
    {\rm C} = \mathbb{E}[\log_2(1 + \gamma|\mathbf{h}^T\mathbf{\Phi Hf} + \mu \mathbf{g}^T\mathbf{f}|^2)].
\end{equation*}
However, the expectation of the log function is difficult to handle in the maximization problem. Thus, we apply Jensen's inequality 
and focus on maximizing the upper bound of capacity as 
\begin{align}
  {\rm C}\leq{\rm C_{ub}} &=  \log_2(1 + \Gamma(\mathbf{f},\mathbf{\Phi})),\text{~~where~~}
  \Gamma(\mathbf{f},\mathbf{\Phi}) = \gamma\mathbb{E}[|\mathbf{h}^T\mathbf{\Phi Hf} + \mu \mathbf{g}^T\mathbf{f}|^2]\label{obj_fun}
\end{align}
represents the mean {\rm SNR}. Henceforth, we will assume $\gamma=1$ without any loss of generality. Thus, the capacity maximization problem can be reformulated using its upper bound as
\begin{subequations}
\begin{align}
    \max_{\mathbf{f},\mathbf{\Phi}} ~~&  \Gamma(\mathbf{f},\mathbf{\Phi}),\label{objective}\\
   \text{s.t.} ~~& \|\mathbf{f}\|^2 = 1,\label{constraint_f}\\
    &|\boldsymbol{\psi}_k| = 1, ~~ \forall k = 0,\ldots,N-1.\label{constraint_phi}%
\end{align}
\label{optimization problem}%
\end{subequations}
\noindent where \eqref{constraint_f} is the unit norm constraint of the transmit beamformer and \eqref{constraint_phi} is the unit magnitude constraint on the passive RIS elements to ensure phase shifts without  amplification/attenuation.
The problem formulation given in \eqref{optimization problem} is non-convex due to \eqref{constraint_phi} and the objective function given in \eqref{obj_fun}. Further, the problem is coupled in terms of $\mathbf{f}$ and $\mathbf{\Phi}$ which makes it further difficult to solve the problem. Hence, to tackle this issue, we solve the optimization problem using alternating subproblems and provide iterative algorithms or closed-form solutions for optimal $\mathbf{f}$ and $\mathbf{\Phi}$ for various fading scenarios in section \ref{optimal beamforming}. Besides, we also decouple $\mathbf{f}$ and $\mathbf{\Phi}$ in some special cases of fading which allows us to get a closed-form solution $\mathbf{f}$ and $\mathbf{\Phi}$. 

In addition, we also analyze the \emph{outage probability} and \emph{ergodic capacity} to characterize the performance of the proposed SCSI-based beamforming schemes.
For a beamforming scheme with optimal beamformer $\mathbf{f}_{\rm opt}$ and phase shift matrix $\mathbf{\Phi}_{\rm opt}$, the OP and EC are given by
\begin{align}
    {\rm P_{out}}(\beta) &= \mathbb{P}[{\Gamma(\mathbf{f}_{\rm opt},\mathbf{\Phi}_{\rm opt})} \leq \beta]~~\text{and}\label{P_out}\\
      {\rm C} = \mathbb{E}[\log_2(1+\Gamma&(\mathbf{f}_{\rm opt},\mathbf{\Phi}_{\rm opt}))]=\frac{1}{\ln(2)}\int_0^\infty \frac{1}{1+u}\left(1-{\rm P_{out}}(u)\right){\rm d}u,\label{ergodic capacity}
\end{align}
respectively. In the next section, we present the algorithms/solutions for optimal beamforming and also perform outage analysis for variants of the channel model discussed in section \ref{channel model}.
\vspace{-.4cm}\section{Statistically Optimal Beamforming for RIS-aided systems }\label{optimal beamforming}\vspace{-.2cm}
In  our system, the RIS is positioned such that it has an LoS path with both the BS and the UE. Hence, Rician distribution is adopted to model the indirect links $\mathbf{H}$ and $\mathbf{h}$. Besides, we also consider an LoS Path at the direct link which is ignored in \cite{WangJinghe_CorrelatedFading_2021}. This makes the considered RIS-aided MISO system more general. 
For such a system, our objective is to maximize the upper bound on EC as discussed in Section \ref{prob from}. The objective function of problem \eqref{optimization problem}, i.e. the mean {\rm SNR}, can be written as
\begin{align}
   \Gamma(\mathbf{f},\mathbf{\Phi}) = |\kappa_l^2\boldsymbol{\psi}^T\mathbf{Ef} + \mu\kappa_l\mathbf{\Bar{g}}^T\mathbf{f}|^2 + \kappa_l^2\kappa_n^2\boldsymbol{\psi}^H\mathbf{Z}_1\boldsymbol{\psi} + \kappa_n^2[\mu^2+\boldsymbol{\psi}^H\mathbf{Z}_2\boldsymbol{\psi}]\mathbf{f}^H\mathbf{R}_{\rm BT}\mathbf{f},\label{eq:Mean_SNR_R1}
\end{align}
where $\mathbf{E}={\rm diag}(\mathbf{\Bar{h}})\mathbf{\Bar{H}}$, $\mathbf{Z}_1 = \mathbf{R}_{\rm RT}\odot\mathbf{\Bar{H}ff}^H\mathbf{\Bar{H}}^H$ and $\mathbf{Z}_2 = \mathbf{R}_{\rm RR}\odot( \kappa_n^2\mathbf{R}_{\rm RT}+\kappa_l^2\mathbf{\Bar{h}}^*\mathbf{\Bar{h}}^T)$. The proof of \eqref{eq:Mean_SNR_R1} is given in Appendix \ref{AppA}. As mentioned earlier, the non-convex nature of the problem makes it challenging to directly obtain optimal $\mathbf{f}$ and $\mathbf{\Phi}$. Therefore, we tackle this issue by dividing the problem into optimal beamformer and phase shift matrix sub-problems as follows\\
\emph{1) Optimal Beamformer}: For a given phase shift matrix $\mathbf{\Phi}$, the optimization problem with respect to the beamforming vector $\mathbf{f}$ becomes
\begin{subequations}
\begin{align}
    \max_{\mathbf{f}} ~~& \mathbf{f}^H\mathbf{F}\mathbf{f}, \label{subprob_f_R1}\\
\text{s.t.} ~~& \|\mathbf{f}\|^2 = 1,
\end{align}
\end{subequations}
where the objective function follows from \eqref{eq:Mean_SNR_R1} with
\begin{equation}
    \mathbf{F} = \mathbf{F}_1 + \mathbf{F}_2 + \mathbf{F}_3,\label{M}
\end{equation}
such that $\mathbf{F}_1= (\kappa_l^2\boldsymbol{\psi}^T\mathbf{E} + \mu\kappa_l\mathbf{\Bar{g}}^T)^H(\kappa_l^2\boldsymbol{\psi}^T\mathbf{E} + \mu\kappa_l\mathbf{\Bar{g}}^T)$, $\mathbf{F}_2=\kappa_l^2\kappa_n^2\mathbf{\Bar{H}}^H\mathbf{\Phi}^H\mathbf{R}_{\rm RT}\mathbf{\Phi\Bar{H}}$, and $\mathbf{F}_3=\kappa_n^2\mathbf{R}_{\rm BT}[\mu^2+\boldsymbol{\psi}^H\mathbf{Z}_2\boldsymbol{\psi}]$.
It is to be noted that  $\mathbf{F}_1$, $\mathbf{F}_2$, and $\mathbf{F}_3$ are symmetric matrices which implies that $\mathbf{F}$ is also a symmetric matrix. Thus, this optimization problem is equivalent to the Rayleigh quotient maximization, whose solution, i.e., the optimal transmit beamformer, becomes the dominant eigenvector of $\mathbf{F}$ and can be given as 
    \begin{equation}
        \mathbf{f}_{\rm opt} = \mathbf{v_F},\label{OptSol_f_R1}
    \end{equation}
where $\mathbf{v_F}$ is the dominant eigenvector of $\mathbf{F}$.\\
\emph{2) Optimal Phase Shift Matrix}: For a given beamforming vector $\mathbf{f}$, the optimization problem with respect to the phase shift matrix $\mathbf{\Phi} = \rm{diag}(\boldsymbol{\psi})$ becomes
\begin{subequations}
\begin{align}
    \max_{\boldsymbol{\psi}}  ~~& |\boldsymbol{\psi}^H\mathbf{a}+\mathbf{b}|^2 + \boldsymbol{\psi}^H\mathbf{V}\boldsymbol{\psi} \label{subprob_phi_R1},\\
   \text{s.t.} ~~& |\boldsymbol{\psi}_k| = 1 ~~ \forall k = 0,\ldots,N-1,\label{constraint_psi}
\end{align}\label{prob psi}%
\end{subequations}
where the objective function follows by rewriting \eqref{eq:Mean_SNR_R1} with $\mathbf{V}=\kappa_l^2\kappa_n^2\mathbf{Z}_1+\kappa_n^2\mathbf{f}^H\mathbf{R}_{\rm BT}\mathbf{f}\mathbf{Z}_2$, $ \mathbf{a}=\kappa_l^2\mathbf{Ef}$, and $\mathbf{b}= \mu\kappa_l\mathbf{\Bar{g}}^T\mathbf{f}$. 
Since the above problem is non-convex, we model  \eqref{prob psi} as a semidefinite programming problem by introducing an auxiliary variable as below
\begin{subequations}
\begin{align}
    \max_{\Bar{\boldsymbol{\psi}}} ~~& \Bar{\boldsymbol{\psi}}^H\mathbf{A}\Bar{\boldsymbol{\psi}} + \|\mathbf{b}\|^2,\label{objective_psibar}\\
   \text{s.t.} ~~& |\Bar{\boldsymbol{\psi}_k}| = 1 ~~ \forall k = 0,\ldots,N-1,\label{constraint_psibar}
\end{align}\label{obj psibar}%
\end{subequations}
where $\mathbf{A} = \begin{bmatrix}
                 \mathbf{aa}^H+\mathbf{V} & \mathbf{ab}^H\\
                 \mathbf{ba}^H & 0\label{MatrixA}
                 \end{bmatrix}$ and $\Bar{\boldsymbol{\psi}} = [ \boldsymbol{\psi}~~t]^T$.
Further, defining $\mathbf{\Psi} = \Bar{\boldsymbol{\psi}}\Bar{\boldsymbol{\psi}}^H$ such that ${\rm diag}(\mathbf{\Psi}) = 1$ will ensure the constraint in \eqref{constraint_psibar} and also 
allow us to rewrite \eqref{objective_psibar} as ${\rm tr}(\mathbf{A\Psi})$. 
Next, we use the standard SDR technique to solve the above problem as below  
\begin{subequations}
\begin{align}
    \max_{\mathbf{\Psi}} ~~& {\rm tr}(\mathbf{A\Psi}) ,\\
   \text{s.t.} ~~& \mathbf{\Psi} \succeq 0, \hspace{0.2cm}{\rm diag}(\mathbf{\Psi}) = 1.
\end{align}\label{relaxed_phi}%
\end{subequations}
This problem can be solved by using standard solvers such as CVX/Mosek. Finally, the optimization problem given in \eqref{optimization problem} can now be solved by iterating over the beamforming and phase shift matrix subproblems using \eqref{subprob_f_R1} and \eqref{relaxed_phi}, respectively, as summarized in Algorithm \ref{Alg1}.
\begin{algorithm}\label{Alg1}
\SetKwComment{Comment}{$\triangleright$\ }{}
\KwInput{$\mu$, $\kappa_l$, $\kappa_n$, $\mathbf{\Bar{h}}$, $\mathbf{\Bar{H}}$, $\mathbf{\Bar{g}}$, $\mathbf{R}_{\rm RT}$, $\mathbf{R}_{\rm RR}$, $\mathbf{R}_{\rm BT}$, $\delta$}
  \KwInit{$\mathbf{f_0}$ , $\boldsymbol{\psi_0}$}
\SetKwRepeat{Repeat}{Repeat}{Untill:}
\Repeat{$|\Gamma(\mathbf{f_{i}},\mathbf{\Phi_{i}})-\Gamma(\mathbf{f_{i-1}},\mathbf{\Phi_{i-1}})| \leq \delta$}{Set $\mathbf{f}=\mathbf{f_{i-1}}$ and evaluate $\mathbf{A}$ using \eqref{MatrixA}.\\
  Using $\mathbf{A}$, solve  \eqref{relaxed_phi} for $\mathbf{\Psi}$. 
  Obtain $\boldsymbol{\psi_i}$ such that $\left[ \boldsymbol{\psi_i~}  t\right]^T\left[ \boldsymbol{\psi_i}^* ~ t^*\right]=\mathbf{\Psi}.$\\
  Set $\boldsymbol{\psi}=\boldsymbol{\psi_i}$ and evaluate $\mathbf{F}$ using \eqref{M}.\\
  Obtain $\mathbf{f_{i}}$ using \eqref{OptSol_f_R1} such that $ \mathbf{f_i} = \mathbf{v_F}$. \\
  $\mathbf{i} \gets \mathbf{i}+1$.}
\caption{SCSI-based optimal beamforming for R1 correlated fading}
\end{algorithm}
For the proposed beamforming scheme above, we now present OP and EC. In particular, we analyze the outage for a given $\mathbf{f}$ and $\mathbf{\Phi}$ due to not having closed-form expressions.
Defining  $\xi_1 = \mathbf{h}^T\mathbf{\Phi Hf}$ and $\xi_2 =  \mathbf{g}^T\mathbf{f}$ will allow to rewrite the OP given in \eqref{P_out}  as
\begin{equation*}
    \text{P}_{\rm out}(\beta) = \mathbb{P}\left[|\xi_1 + \mu \xi_2| \leq \sqrt{\beta/\gamma}\right].
\end{equation*}
 In  Appendix \ref{AppB}, we show that $\xi_1$ (closely) and $\xi_2$ follow complex Gaussian distributions as 
\begin{equation}
\xi_1\sim\mathcal{C}\mathcal{N}(\mu_1,\sigma_1^2)\hspace{0.2cm}\text{and}\hspace{0.2cm} 
    \xi_2\sim\mathcal{C}\mathcal{N}(\mu_2,\sigma_2^2),\label{eq:xi1xi2}
\end{equation}
\begin{align*}
\text{where}~~ \mu_1 =\kappa_l^2\mathbf{\Bar{h}}^T\mathbf{\Phi \bar{H}f} &\text{~and~}\sigma_1^2 =\kappa_l^2\kappa_n^2\boldsymbol{\psi}^H\mathbf{Z}_1\boldsymbol{\psi} + \kappa_n^2\mathbf{f}^H\mathbf{R}_{\rm BT}\mathbf{f}[\boldsymbol{\psi}^H\mathbf{Z}_2\boldsymbol{\psi}],\\ 
    \mu_2 = \kappa_l\mathbf{\Bar{g}}^T\mathbf{f} &\text{~and~}\sigma_2^2 =\kappa_n^2\mathbf{f}^H\mathbf{R}_{\rm BT}\mathbf{f}.
\end{align*} 
Using the independence of $\xi_1$ and $\xi_2$, we get
\begin{equation}
    \xi_1 + \mu\xi_2\sim\mathcal{C}\mathcal{N}(m,\sigma^2)
~~\text{where}~~m = \mu_1+\mu\mu_2,
  \text{~~and~~}  \sigma^2 = \sigma_1^2+\mu^2\sigma_2^2,  \label{parameters_m_sigma}
\end{equation} 
are the mean and variance of $\xi_1 + \mu \xi_2$, respectively.
Further, we also use the fact that the magnitude of a non-zero mean complex Gaussian follows the Rice distribution to obtain
\begin{equation}
    |\xi_1 + \mu\xi_2|\sim{\rm Rice}\left(|m|,\sigma\right).\label{SNR_dist}
\end{equation}
From \eqref{SNR_dist}, we can write OP as given in the following theorem using the  CDF of the Rice distribution with parameters $m$ and $\sigma^2$ given in \eqref{parameters_m_sigma} which are evaluated  using the optimal beamformer $\mathbf{f}_{\rm opt}$ and phase shift matrix $\mathbf{\Phi}_{\rm opt}$ obtained through Algorithm \ref{Alg1}. 
\begin{theorem}\label{Theo1}
OP of the SCSI-based optimal beamforming scheme for the RIS-aided MISO system under correlated Rician-Rician fading is given by
\begin{align}
    {\rm P_{out}}(\beta)\approx 1-Q_1\left(\frac{|m|}{\sqrt{\sigma/2}},\frac{\sqrt{\beta/\gamma}}{\sqrt{\sigma/2}}\right),\label{Pout_R1}
\end{align}
where $m =  \kappa_l^2\mathbf{\Bar{h}}^T\mathbf{\Phi_{\rm opt} \mathbf{\Bar{H}} f_{\rm opt}} + \mu \kappa_l\mathbf{\Bar{g}}^T\mathbf{f}_{\rm opt}$, $\sigma^2=  \kappa_l^2\kappa_n^2\boldsymbol{\psi}_{\rm opt}^H\mathbf{Z}_1\boldsymbol{\psi}_{\rm opt} + \kappa_n^2\mathbf{f}_{\rm opt}^H\mathbf{R}_{\rm BT}\mathbf{f}_{\rm opt}[\boldsymbol{\psi}_{\rm opt}^H\mathbf{Z}_2\boldsymbol{\psi}_{\rm opt}]+ \mu^2 \kappa_n^2\mathbf{f}_{\rm opt}^H\mathbf{R}_{\rm BT}\mathbf{f}_{\rm opt}$,  
 $\mathbf{f}_{\rm opt}$ and $\mathbf{\Phi}_{\rm opt}$ are solutions of Algorithm \ref{Alg1}, and  $Q_1(\cdot)$ is a Marcum Q-function.
\end{theorem}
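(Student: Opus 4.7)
The plan is to translate the outage event into a condition on the magnitude of a single complex scalar, identify the distribution of that scalar, and then read off its CDF. Starting from the SNR expression in \eqref{SNR}, I would introduce $\xi_1 = \mathbf{h}^T\mathbf{\Phi}_{\rm opt}\mathbf{H}\mathbf{f}_{\rm opt}$ and $\xi_2 = \mathbf{g}^T\mathbf{f}_{\rm opt}$ so that the outage event $\Gamma(\mathbf{f}_{\rm opt},\mathbf{\Phi}_{\rm opt})\le\beta$ collapses to $|\xi_1 + \mu\xi_2|\le\sqrt{\beta/\gamma}$. The remaining task is then to pin down the distribution of $\xi_1 + \mu\xi_2$.

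For $\xi_2$, linearity in the Gaussian vector $\mathbf{\Tilde{g}}$ from \eqref{directlink} immediately gives $\xi_2\sim\mathcal{CN}(\kappa_l\mathbf{\Bar{g}}^T\mathbf{f}_{\rm opt},\,\kappa_n^2\mathbf{f}_{\rm opt}^H\mathbf{R}_{\rm BT}\mathbf{f}_{\rm opt})$ exactly. The subtle part is $\xi_1$: expanding $\mathbf{h}=\kappa_l\mathbf{\Bar{h}}+\kappa_n\mathbf{\Tilde{h}}$ and $\mathbf{H}=\kappa_l\mathbf{\Bar{H}}+\kappa_n\mathbf{\Tilde{H}}$ produces a deterministic term $\kappa_l^2\mathbf{\Bar{h}}^T\mathbf{\Phi}_{\rm opt}\mathbf{\Bar{H}}\mathbf{f}_{\rm opt}$, two cross terms that are linear in a single Gaussian vector (hence exactly complex Gaussian), and one bilinear term $\kappa_n^2\mathbf{\Tilde{h}}^T\mathbf{\Phi}_{\rm opt}\mathbf{\Tilde{H}}\mathbf{f}_{\rm opt}$ which is a product of two independent Gaussian factors and is therefore not exactly Gaussian. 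I would invoke a central-limit-type approximation, justified by the summation over the many RIS reflecting elements, to treat the bilinear contribution as approximately complex Gaussian; this is the source of the ``$\approx$'' in the statement and of the qualifier ``closely'' used just above \eqref{eq:xi1xi2}. The mean $\mu_1$ and variance $\sigma_1^2$ then follow by direct computation of $\mathbb{E}[\xi_1]$ and $\mathbb{E}[|\xi_1-\mu_1|^2]$, and the quadratic forms $\boldsymbol{\psi}^H\mathbf{Z}_1\boldsymbol{\psi}$ and $\boldsymbol{\psi}^H\mathbf{Z}_2\boldsymbol{\psi}$ already appearing in \eqref{eq:Mean_SNR_R1} provide an immediate consistency check.

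Since $\xi_1$ and $\xi_2$ are independent (the direct and cascaded channels involve disjoint Gaussian sources $\mathbf{\Tilde{g}}$ versus $\mathbf{\Tilde{h}},\mathbf{\Tilde{H}}$), the linear combination $\xi_1+\mu\xi_2$ inherits an (approximate) complex Gaussian law with $m=\mu_1+\mu\mu_2$ and $\sigma^2=\sigma_1^2+\mu^2\sigma_2^2$, exactly as in \eqref{parameters_m_sigma}. Applying the classical fact that the modulus of a $\mathcal{CN}(m,\sigma^2)$ random variable is Rice distributed with noncentrality $|m|$ and scale $\sigma/\sqrt{2}$, and expressing the Rice CDF through the first-order Marcum $Q$-function as $F(x)=1-Q_1\!\left(|m|/(\sigma/\sqrt{2}),\,x/(\sigma/\sqrt{2})\right)$, yields the stated outage expression upon substituting $x=\sqrt{\beta/\gamma}$ and $\mathbf{f}_{\rm opt},\mathbf{\Phi}_{\rm opt}$ obtained from Algorithm~\ref{Alg1}.

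The main obstacle is justifying the Gaussian approximation for the bilinear term $\mathbf{\Tilde{h}}^T\mathbf{\Phi}_{\rm opt}\mathbf{\Tilde{H}}\mathbf{f}_{\rm opt}$: as a sum of products of two correlated Gaussian vectors it is not exactly Gaussian, so one must argue via a CLT across the $N$ reflecting elements while verifying that the Kronecker/sinc correlation structure does not concentrate the mass on too few degrees of freedom to invalidate the CLT. Everything downstream of that approximation — computing the first two moments, combining $\xi_1$ and $\xi_2$ via independence, and converting the complex Gaussian magnitude to the Rice/Marcum-$Q$ form — is routine.
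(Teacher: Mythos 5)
Your proposal is correct and follows essentially the same route as the paper: the paper likewise rewrites the outage event as $|\xi_1+\mu\xi_2|\leq\sqrt{\beta/\gamma}$, shows in Appendix~\ref{AppB} that $\xi_2$ is exactly complex Gaussian while $\xi_1$ is approximately so (the bilinear term $\kappa_n^2\mathbf{\Tilde{h}}^T\mathbf{\Phi}\mathbf{\Tilde{H}}\mathbf{f}$ being handled by the same central-limit argument over the $N$ elements), combines them via independence as in \eqref{parameters_m_sigma}, and reads off the Rice CDF in Marcum-$Q$ form. The only point you flag as an obstacle --- rigorously validating the CLT under the sinc correlation structure --- is not addressed in the paper either, which simply asserts the approximation and verifies it numerically.
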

Using Theorem \ref{Theo1} and \eqref{ergodic capacity}, we determine EC of the proposed beamforming scheme  in Algorithm \ref{Alg1} in the following corollary.
\begin{corollary}
\label{cor:R1_EC}
EC of the SCSI-based optimal beamforming scheme for the RIS-aided MISO system under correlated Rician-Rician fading is given by
\begin{align}
    {\rm C} \approx \frac{1}{\ln(2)}\int_0^\infty \frac{1}{1+u}Q_1\left(\frac{|m|}{\sqrt{\sigma/2}},\frac{\sqrt{{u}/{\gamma}}}{\sqrt{\sigma/2}}\right){\rm d}u.
\end{align}
where $m$ and $\sigma^2$ are given in Theorem \eqref{Theo1}.
\end{corollary}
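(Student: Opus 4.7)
The proof is essentially a direct substitution that links Theorem 1 with the integral representation of ergodic capacity already given in equation (11) of the paper. My plan is therefore short.

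First, I would recall the identity established in (11), namely
\begin{equation*}
    \mathrm{C} = \mathbb{E}[\log_2(1+\Gamma(\mathbf{f}_{\rm opt},\mathbf{\Phi}_{\rm opt}))] = \frac{1}{\ln(2)}\int_0^\infty \frac{1-\mathrm{P}_{\rm out}(u)}{1+u}\,{\rm d}u,
\end{equation*}
which holds for any non-negative random variable and follows from writing $\ln(1+X) = \int_0^X \frac{1}{1+u}\,{\rm d}u$ and exchanging the order of expectation and integration (Fubini/Tonelli), so that the integrand becomes $\mathbb{P}[X>u]/(1+u) = (1-\mathrm{P}_{\rm out}(u))/(1+u)$. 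This identity is already invoked in the paper, so I would simply cite (11).

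Second, I would substitute the outage-probability expression from Theorem 1, namely
\begin{equation*}
    \mathrm{P}_{\rm out}(u) \approx 1 - Q_1\!\left(\frac{|m|}{\sqrt{\sigma/2}},\frac{\sqrt{u/\gamma}}{\sqrt{\sigma/2}}\right),
\end{equation*}
into the integrand of (11). The ``$1$'' and the ``$-1$'' cancel and what remains is the Marcum-$Q_1$ term, producing exactly the stated expression for $\mathrm{C}$. The approximation symbol carries over from Theorem 1, which in turn comes from the (tight) Gaussian approximation of $\xi_1$ established in Appendix B; no new approximation is introduced at this step.

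There is really no technical obstacle: integrability of the integrand on $[0,\infty)$ needs a brief check, but this is immediate because $Q_1(a,b) \le 1$ and $Q_1(a,b) \to 0$ as $b \to \infty$, while near $u=0$ the factor $1/(1+u)$ is bounded, so the integral converges. The only subtlety worth a sentence is that $m$ and $\sigma^2$ inside $Q_1$ depend on $(\mathbf{f}_{\rm opt},\mathbf{\Phi}_{\rm opt})$ but not on the integration variable $u$, so they can be treated as constants and pulled through, exactly as in the statement of the corollary.
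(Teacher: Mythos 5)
Your proposal is correct and matches the paper's own (implicit) argument exactly: the corollary is obtained by substituting the outage probability of Theorem \ref{Theo1} into the integral representation of EC in \eqref{ergodic capacity}, with the approximation inherited from the Gaussian approximation of $\xi_1$. No further comment is needed.
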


Now, we will study the impact of the limiting cases of fading factor $K$ on the SCSI-based optimal  beamforming  and its performance.   \newline
{\em 1) Case $K\to\infty$:} The multipath fading vanishes  as $K$ becomes large for which the resulting channels are described by their deterministic LoS components such that $\mathbf{g}=\mathbf{\Bar{g}}$, $\mathbf{h} = \mathbf{\Bar{h}}$, and $\mathbf{H} = \mathbf{\Bar{H}}$.  For this case, the  {\rm SNR} reduces to a deterministic value for which  \eqref{obj_fun} becomes 
\begin{equation}
\Gamma(\mathbf{f},\mathbf{\Phi}) = |\mathbf{\Bar{h}}^T\mathbf{\Phi\Bar{H}f} + \mu\mathbf{\Bar{g}}^T\mathbf{f}|^2 = |\boldsymbol{\psi}^T\mathbf{Ef} + \mu\mathbf{\Bar{g}}^T\mathbf{f}|^2.\label{eq:R1_Cor_K_inftt}
\end{equation}
 To maximize \eqref{eq:R1_Cor_K_inftt} with constraints \eqref{constraint_f} and \eqref{constraint_phi}, we can easily find
 \begin{align}
     \mathbf{f}_{\rm opt}=\frac{\mathbf{E}^H\boldsymbol{\psi}^*_{\rm opt} + \mu\mathbf{\Bar{g}}^*}{\|\mathbf{E}^H\boldsymbol{\psi}^*_{\rm opt} + \mu\mathbf{\Bar{g}}^*\|}\text{~~and~~}\boldsymbol{\psi}_{\rm opt}=\exp\left(-j\angle{\mathbf{E}\mathbf{f}_{\rm opt}} + j\angle{\mathbf{\Bar{g}}^T\mathbf{f}_{\rm opt}}\right).\label{eq:R1_K_infty}
 \end{align}
 As the optimal $\mathbf{f}_{\rm opt}$ and $\boldsymbol{\psi}_{\rm opt}$ given in \eqref{eq:R1_K_infty} depend on each other, they can be solved alternatively for a fixed point solution.
 \newline {\em 2) Case $K\to 0$:} In this case, the LoS components become insignificant and the channels get completely characterized   by the multipath fading  such that $\mathbf{g}=\mathbf{\tilde{g}}$, $\mathbf{h} = \mathbf{\tilde{h}}$, and $\mathbf{H} = \mathbf{\tilde{H}}$. This is equivalent to the correlated Rayleigh-Rayleigh fading model for which the optimal beamforming and its performance will be presented in  Section \ref{R3}.

The outage and capacity performances given in Theorem \ref{Theo1} and Corollary \ref{cor:R1_EC} allow us to numerically evaluate the system performance as they rely on $\mathbf{f}_{\rm opt}$ and $\mathbf{\Phi}_{\rm opt}$ which are obtained using Algorithm \ref{Alg1}.  Thus, though they are expressed in a simple analytical form, their evaluation is limited by a computationally complex algorithm. Such numerical solutions may not always provide useful insights into the exact functional dependence of the optimal solution on the key system parameters.
Thus, it is desirable to have a closed-form expression for the optimal beamformer $\mathbf{f}$ and $\mathbf{\Phi}$ so that the outage and capacity performances can be characterized analytically without relying on the numerical evaluation of an algorithm. Additionally, having closed-form expressions for optimal $\mathbf{f}$ and $\mathbf{\Phi}$ will help to reduce the implementation complexity. Motivated by this, we investigate special cases of the channel model given in Section \ref{channel model} for deriving closed-form expressions or computationally efficient solutions  in the following subsections.
\vspace{-.5cm}
\subsection{IID Rician-Rician Fading}\label{R1_IID}
In this subsection, we focus on solving the optimization problem in \eqref{optimization problem} while considering an {\rm i.i.d.} Rician-Rician fading model for direct and indirect links which implies that $\mathbf{\Tilde{g}}\sim\mathcal{C}\mathcal{N}(0,\mathbf{I_M})$, $\mathbf{\Tilde{h}}\sim\mathcal{C}\mathcal{N}(0,\mathbf{I_N})$ and $\mathbf{\Tilde{H}}_{:,i}\sim\mathcal{C}\mathcal{N}(0,\mathbf{I_N})$. For this case, by simplifying the steps given in Appendix \ref{AppA} with identity covariance matrices, we can write the  mean {\rm SNR} given in \eqref{obj_fun} as
\begin{equation}
    \Gamma(\mathbf{f},\mathbf{\Phi}) = |\kappa_l^2\mathbf{\Bar{h}}^T\mathbf{\Phi\Bar{H}f} + \mu\kappa_l\mathbf{\Bar{g}}^T\mathbf{f}|^2 + \kappa_l^2\kappa_n^2\|\mathbf{\Bar{H}f}\|^2 +  (\kappa_l^2\kappa_n^2+\kappa_n^4)N + \mu^2\kappa_n^2.  \label{OptProb_R1_IID}
\end{equation}
 Maximization of mean SNR given in \eqref{OptProb_R1_IID}  has already been addressed by the authors of \cite{hu2020statistical}. 
However, they proposed an alternating optimization-based approach as given in Algorithm \ref{Alg2}. While  Algorithm \ref{Alg2} is computationally efficient compared Algorithm \ref{Alg1}, the system performance analysis is still limited by the numerical evaluation of  the optimal beamforming. 
\begin{algorithm}[t!]\label{Alg2}
\caption{SCSI-based optimal beamforming for IID R1  case \cite{hu2020statistical}}
\SetKwComment{Comment}{$\triangleright$\ }{}
\KwInput{$\mu$, $\kappa_l$, $\kappa_n$, $\mathbf{\Bar{h}}$, $\mathbf{\Bar{H}}$, $\mathbf{\Bar{g}}$, $\delta$}
\KwOutput{$\mathbf{f}_{\rm opt}$ , $\boldsymbol{\psi_{\rm opt}}$}
  \KwInit{$\mathbf{f}_0$ , $\boldsymbol{\psi}_0$}
\SetKwRepeat{Repeat}{Repeat}{Untill:}
\Repeat{$|\Gamma(\mathbf{f}_{i},\mathbf{\Phi}_{i})-\Gamma(\mathbf{f}_{i-1},\mathbf{\Phi}_{i-1})| \leq \delta$}{$\mathbf{Y}=\begin{bmatrix}
      \kappa_l^2\boldsymbol{\psi}_{i-1}^T{\rm diag}(\mathbf{\Bar{h}})\mathbf{\Bar{H}} + \mu\kappa_l\mathbf{\Bar{g}}^T~~
      \kappa_l\kappa_n\mathbf{\Bar{H}}
  \end{bmatrix}^T$ and set $\mathbf{f}_{i}=\mathbf{v_Y}$.\\
  Set $\boldsymbol{\psi}_{i}=\exp\{-j\angle{\left({\rm diag}(\mathbf{\bar{h}})\mathbf{\bar{H}}\mathbf{f}_{i} - \mathbf{\bar{g}}^T\mathbf{f}_{i}\right)}\}$.\\
$\mathbf{i} \gets \mathbf{i}+1$.}
\end{algorithm}
Thus, to obtain the closed form solution, we  maximize a carefully constructed lower bound of  the mean SNR.  For this, we first optimize it w.r.t. $\mathbf{\Phi}$ as below. \\
\emph{1) Optimal Phase Shift Matrix}: Using the mean SNR given in \eqref{OptProb_R1_IID}, the optimization problem \eqref{optimization problem} with respect to the phase shift matrix $\mathbf{\Phi}$ for a given $\mathbf{f}$ becomes 
\begin{subequations}
\begin{align}
    \max_{\boldsymbol{\psi}}~~ &|\kappa_l^2 \boldsymbol{\psi}^T{\rm diag}(\mathbf{\bar{h}})\Bar{\mathbf{H}}\mathbf{f} +\mu \kappa_l \bar{\mathbf{g}}^T\mathbf{f}|^2\label{subprob_Phi_R1_IID} \\
    {\rm\text{s.t.}}~~&|\boldsymbol{\psi}_k|=1, ~~\forall k=0,\cdots,N-1,\label{constraint_Phi_R1_IID}
\end{align}
\end{subequations} 
where \eqref{subprob_Phi_R1_IID} follows from \eqref{OptProb_R1_IID}. 
The above objective function can be upper bounded as
\begin{align*}
     |\kappa_l^2 \boldsymbol{\psi}^T\rm{diag}(\bar{\mathbf{h}}) \bar{\mathbf{H}} \mathbf{f} +\mu \kappa_l \bar{\mathbf{g}}^T \mathbf{f}|^2 \leq  \kappa_l^4 |\boldsymbol{\psi}^T{\rm diag}(\mathbf{\bar{h}}) \bar{\mathbf{H}} \mathbf{f}|^2 +\mu^2 \kappa_l^2 |\bar{\mathbf{g}}^T\mathbf{f}|^2,
\end{align*}
where equality holds when $\boldsymbol{\psi}^T{\rm diag}(\mathbf{\bar{h}})\bar{\mathbf{H}}\mathbf{f} = c  \bar{\mathbf{g}}^T\mathbf{f}$ for a constant $c$. To achieve  equality, we set 
\begin{align}
 \boldsymbol{\psi}^T=c  \bar{\mathbf{g}}^T\mathbf{f} \mathbf{w},\label{psi^t} 
\end{align}
where $\mathbf{w}=\mathbf{f}^H\mathbf{E}^H/\|\mathbf{Ef}\|^2$ is the pseudoinverse of $\mathbf{Ef}$ such that $\mathbf{E}={\rm diag}(\mathbf{\bar{h}})\mathbf{\Bar{H}}$. However, the constraint in \eqref{constraint_Phi_R1_IID} needs to be satisfied. Interestingly, as  will be clear shortly, we have $|\mathbf{w}_i|=|\mathbf{w}_j|$ $\forall i,j$. Thus, we can set $c$ such that  \eqref{constraint_Phi_R1_IID} is ensured. 
Let $\mathbf{e}_n=\mathbf{E}_{n,:}$ be the $n$-th row of $\mathbf{E}$.
By  construction of $\bar{\mathbf{h}}$ and $\bar{\mathbf{H}}$ (see Section \ref{channel model}), we have $\mathbf{e}_n^H\mathbf{e}_n=\mathbf{e}_m^H\mathbf{e}_m$ using which we get
 \begin{align}
 \mathbf{f}^H\mathbf{e}_n^H\mathbf{e}_n\mathbf{f}=\mathbf{f}^H\mathbf{e}_m^H\mathbf{e}_m\mathbf{f} \Rightarrow \|\mathbf{e}_n\mathbf{f}\|^2=\|\mathbf{e}_m\mathbf{f}\|^2.\label{emf}
 \end{align}
 In addition, we also observe that 
\begin{align}
  \|\mathbf{E}\mathbf{f}\|^2=\sum\nolimits_{n=0}^{N-1}|\mathbf{e}_n\mathbf{f}|^2=N|\mathbf{e}_n\mathbf{f}|^2.\label{Ef}
\end{align}
Using \eqref{emf} and \eqref{Ef}, we can write 
\begin{equation}|\bar{\mathbf{g}}^T\mathbf{f}\mathbf{w}_k|=\frac{|\bar{\mathbf{g}}^T\mathbf{f}|}{N|\mathbf{e}_n\mathbf{f}|},\label{gfw}
\end{equation}
where $|\mathbf{w}_k|=\frac{1}{N|\mathbf{e}_n\mathbf{f}|}$ for $\forall k$. 
Finally, by substituting \eqref{gfw} in \eqref{psi^t}, we obtain the optimal RIS phase shift vector with unit magnitude elements as
  \begin{align}
      {\boldsymbol{\psi}}_{\rm opt}^T=\frac{N|\mathbf{e}_n\mathbf{f}|}{|\mathbf{\bar{g}}^T\mathbf{f}|}\mathbf{\bar{g}}^T\mathbf{f}\mathbf{w}\label{OptSol_Phi_R1_IID}.
  \end{align}
\emph{2) Optimal Beamformer}: For a given phase shift matrix $\mathbf{\Phi}_{\rm opt}$, the optimization problem with respect to the beamforming vector $\mathbf{f}$ becomes
\begin{subequations}
\begin{align} 
    \max_{\mathbf{f}}~~ &|\kappa_l^2 \boldsymbol{\psi}_{\rm opt}^T{\rm diag}(\mathbf{\bar{h}}) \bar{\mathbf{H}} \mathbf{f} +\mu \kappa_l \bar{\mathbf{g}}^T\mathbf{f}|^2 + \kappa_l^2\kappa_n^2\|\bar{\mathbf{H}}\mathbf{f}\|^2,\label{subprob_f_R1_IID}\\
   \text{s.t.}~~&\|\mathbf{f}\|=1.
\end{align}   
\end{subequations}
To solve \eqref{subprob_f_R1_IID}, we start by substituting $\boldsymbol{\psi}_{\rm opt}$ in the first term of \eqref{subprob_f_R1_IID} as follows
 \begin{align}
     |\kappa_l^2 {\boldsymbol{\psi}_{\rm opt}}^T\mathbf{Ef} +\mu \kappa_l \mathbf{\bar{g}}^T\mathbf{f}|^2
     &=\kappa_l^4 |{\boldsymbol{\psi}_{\rm opt}}^T\mathbf{Ef}|^2 +\mu^2 \kappa_l^2 |\mathbf{\bar{g}}^T\mathbf{f}|^2+2\kappa_l^3\mu |{\boldsymbol{\psi}_{\rm opt}}^T\mathbf{Ef}||\mathbf{\bar{g}}^T\mathbf{f}|,\nonumber\\
     &\stackrel{(a)}{=}N^2\kappa_l^4|\mathbf{e}_n \mathbf{f}|^2 +\mu^2 \kappa_l^2 |\mathbf{\bar{g}}^T\mathbf{f}|^2+2N\kappa_l^3\mu |\mathbf{e}_n\mathbf{f}||\mathbf{\bar{g}}^T\mathbf{f}|,\label{term1}
 \end{align}
where step (a) follows from $\mathbf{wEf}=1$ and \eqref{psi^t}. 
The second term of \eqref{subprob_f_R1_IID} is simplified as
 \vspace{-0.2cm}
 \begin{align}     \|\bar{\mathbf{H}}\mathbf{f}\|^2&=\mathbf{f}^H\bar{\mathbf{H}}^H\bar{\mathbf{H}}\mathbf{f}=\mathbf{f}^H\bar{\mathbf{H}}^H{\rm diag}(\bar{\mathbf{h}})^H{\rm diag}(\bar{\mathbf{h}})\bar{\mathbf{H}}\mathbf{f},=\mathbf{f}^H\mathbf{E}^H\mathbf{Ef}=N|\mathbf{e}_n\mathbf{f}|^2,\label{term2}
 \end{align}
where  the second equality follows from  ${\rm diag}(\bar{\mathbf{h}})^H{\rm diag}(\bar{\mathbf{h}})=\mathbf{I}_{\rm N}$ and the last equality follows from \eqref{Ef}, respectively. Combining \eqref{term1} and \eqref{term2}, the objective given in \eqref{subprob_f_R1_IID} becomes
\begin{align}
w_1|\mathbf{e}_n\mathbf{f}|^2+w_2|\bar{\mathbf{g}}^T\mathbf{f}|^2+w_3|\mathbf{e}_n\mathbf{f}||\bar{\mathbf{g}}^T\mathbf{f}| = 
w_1\mathbf{f}^H\mathbf{E}_1\mathbf{f}+w_2\mathbf{f}^H\mathbf{Gf}+w_3|\mathbf{f}^H\mathbf{E}_2\mathbf{f}|,\label{modified_objfun}
\end{align} 
 where $\mathbf{E}_1=\mathbf{e}_n^H\mathbf{e}_n$, $\mathbf{G}=\bar{\mathbf{g}}^*\bar{\mathbf{g}}^T$, $\mathbf{E}_2=\mathbf{e}_n^H\bar{\mathbf{g}}^T$,  $w_1=N^2\kappa_l^4+N\kappa_l^2\kappa_n^2$, $w_2=\mu^2\kappa_l^2$, and $w_3=2N\mu\kappa_l^3$.\\
 It is to be noted that $\mathbf{E}_1$ and $\mathbf{G}$ are  symmetric and positive semidefinite matrices, whereas the matrix $\mathbf{E}_2$ is a negative definite matrix. Thus, the presence of the third term in \eqref{modified_objfun} makes the problem \eqref{subprob_f_R1_IID} non-convex. For this reason, we ignore the last term in \eqref{modified_objfun} from the maximization problem. This new objective will be equivalent to maximizing the lower bound on the mean {\rm SNR}.  It  is to be noted that this lower bound will be tight due to the following two reasons: 1) $w_1 \gg w_3$ and $w_2 \gg w_3$ and 2) the eigenvalues of  $\mathbf{E}_1$ and $\mathbf{G}$ are larger than the  $|\mathbf{f}^H\mathbf{E}_2\mathbf{f}|$. 
Using these arguments, we simplify the   problem for  maximizing the lower bound of mean {\rm SNR} as 
\vspace{-0.2cm}
\begin{align}
    \max_{\mathbf{f}} ~~& \mathbf{f}^H\mathbf{Zf},\\
   \text{s.t.} ~~& \|\mathbf{f}\|^2 = 1,
\end{align}
where  $\mathbf{Z}=w_1\mathbf{E}_1+w_2\mathbf{G}$ is a symmetric matrix. Thus, this optimization problem is equivalent to the Rayleigh quotient maximization, whose solution, i.e. the optimal beamformer,  becomes the dominant eigenvector of $\mathbf{Z}$ and can be given as 
\vspace{-0.3cm}
\begin{equation}
    \mathbf{f}_{\rm opt} = \mathbf{v_Z}.\label{OptSol_f_R1_iid}
\end{equation}
Substituting above $\mathbf{f}_{\rm opt}$ and \eqref{term2} in \eqref{OptSol_Phi_R1_IID} will further simplify the optimal phase shift vector $\boldsymbol{\psi}_{\rm opt}$.
We summarize the optimal beamforming for the independent fading in the following theorem.
\vspace{-0.3cm}
\begin{theorem}\label{Theo2}
    The SCSI-based optimal transmit beamformer and RIS phase shift matrix that maximizes the lower bound of mean SNR given in \eqref{OptProb_R1_IID} under {\rm i.i.d.} Rician-Rician fading are
    \begin{align}
    \mathbf{f}_{\rm opt}=\mathbf{v_Z}\text{~~~and~~~~} {\boldsymbol{\psi}}_{\rm opt}^T= \frac{\mathbf{\bar{g}}^T \mathbf{v_Z}}{|\mathbf{\bar{g}}^T \mathbf{v_Z}|}\frac{\mathbf{v_Z}^H\mathbf{E}^H}{|\mathbf{e}_n \mathbf{v_Z}|},\label{eq:optima_fPhi_R1IIDLB}
    \end{align}
    where $\mathbf{Z}=(N^2\kappa_l^4+N\kappa_l^2\kappa_n^2)\mathbf{e}_n^H\mathbf{e}_n + 2N\mu\kappa_l^3 \bar{\mathbf{g}}^*\bar{\mathbf{g}}^T$.
\end{theorem}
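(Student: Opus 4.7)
The plan is to solve the coupled problem by the two-step alternating reduction already suggested by the structure of the mean SNR in \eqref{OptProb_R1_IID}. First, I would fix $\mathbf{f}$ and treat the optimization over $\boldsymbol{\psi}$. The two SNR-dependent terms that actually involve $\boldsymbol{\psi}$ are $\kappa_l^2\boldsymbol{\psi}^T{\rm diag}(\bar{\mathbf{h}})\bar{\mathbf{H}}\mathbf{f}$ and $\mu\kappa_l\bar{\mathbf{g}}^T\mathbf{f}$, both appearing inside a single magnitude. By the triangle inequality, this magnitude is maximized when the two scalars are phase-aligned, i.e., when $\boldsymbol{\psi}^T{\rm diag}(\bar{\mathbf{h}})\bar{\mathbf{H}}\mathbf{f}=c\,\bar{\mathbf{g}}^T\mathbf{f}$ for an appropriate constant $c$. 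The natural ansatz is $\boldsymbol{\psi}^T=c\,\bar{\mathbf{g}}^T\mathbf{f}\,\mathbf{w}$ with $\mathbf{w}=\mathbf{f}^H\mathbf{E}^H/\|\mathbf{E}\mathbf{f}\|^2$ a right-inverse of $\mathbf{E}\mathbf{f}$, where $\mathbf{E}={\rm diag}(\bar{\mathbf{h}})\bar{\mathbf{H}}$.

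The delicate point is verifying the unit-modulus constraint $|\boldsymbol{\psi}_k|=1$. This requires that $|\mathbf{w}_k|$ is the same for every $k$, which in turn requires $|\mathbf{e}_n\mathbf{f}|=|\mathbf{e}_m\mathbf{f}|$ for all $n,m$, where $\mathbf{e}_n$ is the $n$-th row of $\mathbf{E}$. Here I would exploit the explicit ULA/UPA structure in Section \ref{channel model}: each $\mathbf{e}_n$ is a unit-modulus steering vector up to a scalar phase, so $\mathbf{e}_n^H\mathbf{e}_n=\mathbf{e}_m^H\mathbf{e}_m$, giving the claim. Consequently $\|\mathbf{E}\mathbf{f}\|^2=N|\mathbf{e}_n\mathbf{f}|^2$, and setting $c$ to normalize each entry yields the closed form $\boldsymbol{\psi}_{\rm opt}^T=\frac{N|\mathbf{e}_n\mathbf{f}|}{|\bar{\mathbf{g}}^T\mathbf{f}|}\bar{\mathbf{g}}^T\mathbf{f}\,\mathbf{w}$.

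Next, I would plug this $\boldsymbol{\psi}_{\rm opt}$ back into \eqref{OptProb_R1_IID}. The first squared-magnitude term expands, after using $\mathbf{w}\mathbf{E}\mathbf{f}=1$, into $N^2\kappa_l^4|\mathbf{e}_n\mathbf{f}|^2+\mu^2\kappa_l^2|\bar{\mathbf{g}}^T\mathbf{f}|^2+2N\kappa_l^3\mu|\mathbf{e}_n\mathbf{f}||\bar{\mathbf{g}}^T\mathbf{f}|$, while $\|\bar{\mathbf{H}}\mathbf{f}\|^2$ collapses to $N|\mathbf{e}_n\mathbf{f}|^2$ since ${\rm diag}(\bar{\mathbf{h}})^H{\rm diag}(\bar{\mathbf{h}})=\mathbf{I}_N$. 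The $\mathbf{f}$-subproblem becomes $w_1\mathbf{f}^H\mathbf{E}_1\mathbf{f}+w_2\mathbf{f}^H\mathbf{G}\mathbf{f}+w_3|\mathbf{f}^H\mathbf{E}_2\mathbf{f}|$ with the coefficients defined in the text. The obstacle is that $\mathbf{E}_2=\mathbf{e}_n^H\bar{\mathbf{g}}^T$ is rank-one but not Hermitian, so the cross term is not a quadratic form in $\mathbf{f}$ and renders the problem non-convex.

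My strategy for this obstacle is to drop the cross term, yielding a strict lower bound, and argue that the bound is tight because (i) $w_1,w_2$ scale as $N^2\kappa_l^4$ and $\mu^2\kappa_l^2$ while $w_3$ only as $N\mu\kappa_l^3$, so $w_3\ll w_1,w_2$ in the regimes of interest, and (ii) the dominant eigenvalues of the PSD rank-one matrices $\mathbf{E}_1$ and $\mathbf{G}$ dominate $|\mathbf{f}^H\mathbf{E}_2\mathbf{f}|$. The relaxed problem $\max_{\|\mathbf{f}\|=1}\mathbf{f}^H\mathbf{Z}\mathbf{f}$ with $\mathbf{Z}=w_1\mathbf{E}_1+w_2\mathbf{G}$ is a textbook Rayleigh quotient; its solution is $\mathbf{f}_{\rm opt}=\mathbf{v_Z}$, the dominant eigenvector. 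Finally I substitute $\mathbf{f}_{\rm opt}=\mathbf{v_Z}$ back into $\boldsymbol{\psi}_{\rm opt}$, absorb the scalar normalization $\bar{\mathbf{g}}^T\mathbf{v_Z}/|\bar{\mathbf{g}}^T\mathbf{v_Z}|$, and use $\mathbf{w}=\mathbf{v_Z}^H\mathbf{E}^H/(N|\mathbf{e}_n\mathbf{v_Z}|^2)$ together with $\|\mathbf{E}\mathbf{v_Z}\|^2=N|\mathbf{e}_n\mathbf{v_Z}|^2$ to obtain the stated expression for $\boldsymbol{\psi}_{\rm opt}^T$.
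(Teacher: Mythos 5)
Your proposal is correct and follows essentially the same route as the paper: the same pseudoinverse ansatz $\boldsymbol{\psi}^T=c\,\bar{\mathbf{g}}^T\mathbf{f}\,\mathbf{w}$ for the phase-shift subproblem, the same steering-vector argument ($\mathbf{e}_n^H\mathbf{e}_n=\mathbf{e}_m^H\mathbf{e}_m$) to verify the unit-modulus constraint, and the same step of discarding the indefinite cross term $w_3|\mathbf{f}^H\mathbf{E}_2\mathbf{f}|$ with the same tightness justification so that the beamformer subproblem becomes a Rayleigh quotient solved by $\mathbf{v_Z}$. (Incidentally, your $\mathbf{Z}=w_1\mathbf{E}_1+w_2\mathbf{G}$ matches the paper's in-text derivation, whereas the theorem statement writes the coefficient of $\bar{\mathbf{g}}^*\bar{\mathbf{g}}^T$ as $2N\mu\kappa_l^3=w_3$ rather than $w_2=\mu^2\kappa_l^2$ --- an internal inconsistency of the paper, not a gap in your argument.)
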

Now, we perform outage and capacity analysis for the optimal beamforming scheme given in Theorem \ref{Theo2} for this case of fading scenario in the following corollaries.
\vspace{-0.3cm}
\begin{corollary}
\label{cor:outage_R1IID}
    OP of the SCSI-based optimal beamforming scheme for the RIS-aided MISO system under {\rm i.i.d.} Rician-Rician fading is given by
    \begin{align}
        {\rm P_{out}}(\beta) \approx 1-Q_1\left(\frac{|m|}{\sqrt{\sigma/2}},\frac{\sqrt{{\beta}/{\gamma}}}{\sqrt{\sigma/2}}\right),\label{Pout_R1_IID}
    \end{align}
   where $m=N\kappa_l^2\mathbf{e}_n\mathbf{v_Z}+ \mu\kappa_l\bar{\mathbf{g}}^T\mathbf{v_Z}$,  $\sigma^2= N\kappa_n^2(1+\kappa_l^2|\mathbf{e}_n\mathbf{v_Z}|^2)+\mu^2\kappa_n^2$,
    and $\mathbf{Z}$ is given in  \eqref{eq:optima_fPhi_R1IIDLB}.
\end{corollary}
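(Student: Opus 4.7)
The plan is to mirror the argument of Theorem~\ref{Theo1}, specialized to the i.i.d. Rician--Rician setting, after substituting the closed-form optima from Theorem~\ref{Theo2} for $\mathbf{f}_{\rm opt}$ and $\boldsymbol{\psi}_{\rm opt}$. The distributional machinery is already in place: Appendix~\ref{AppB} shows (approximately for $\xi_1$, exactly for $\xi_2$) that $\xi_1:=\mathbf{h}^T\boldsymbol{\Phi}\mathbf{H}\mathbf{f}\sim\mathcal{CN}(\mu_1,\sigma_1^2)$ and $\xi_2:=\mathbf{g}^T\mathbf{f}\sim\mathcal{CN}(\mu_2,\sigma_2^2)$ with the parameters listed just before Theorem~\ref{Theo1}, and their independence gives $\xi_1+\mu\xi_2\sim\mathcal{CN}(m,\sigma^2)$ with $m=\mu_1+\mu\mu_2$ and $\sigma^2=\sigma_1^2+\mu^2\sigma_2^2$. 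The outage expression then follows, exactly as in \eqref{Pout_R1}, by applying the Rice CDF (expressed via the Marcum $Q$-function) to $|\xi_1+\mu\xi_2|$. The only task is therefore to reduce $m$ and $\sigma^2$ to the stated closed forms.

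First, I would specialize the parameter formulas to i.i.d. fading by setting $\mathbf{R}_{\rm BT}=\mathbf{I}_M$ and $\mathbf{R}_{\rm RT}=\mathbf{R}_{\rm RR}=\mathbf{I}_N$. This collapses the matrices $\mathbf{Z}_1$ and $\mathbf{Z}_2$ appearing in \eqref{eq:Mean_SNR_R1} to diagonal matrices, so that the quadratic forms $\boldsymbol{\psi}^H\mathbf{Z}_1\boldsymbol{\psi}$ and $\boldsymbol{\psi}^H\mathbf{Z}_2\boldsymbol{\psi}$ reduce, after invoking the unit-modulus constraint $|\boldsymbol{\psi}_k|=1$, to scalar expressions in $\|\bar{\mathbf{H}}\mathbf{f}\|^2$ and the constants $\kappa_l,\kappa_n,N$. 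The second-link variance simplifies at once as $\sigma_2^2=\kappa_n^2\mathbf{f}^H\mathbf{f}=\kappa_n^2$.

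Next, I would substitute $\mathbf{f}_{\rm opt}=\mathbf{v_Z}$ together with the corresponding $\boldsymbol{\psi}_{\rm opt}$ from Theorem~\ref{Theo2}. For the mean, the critical observation is already contained in \eqref{psi^t} and \eqref{OptSol_Phi_R1_IID}: by construction $\boldsymbol{\psi}_{\rm opt}^T\mathbf{E}\mathbf{v_Z}$ has magnitude $N|\mathbf{e}_n\mathbf{v_Z}|$ and is phase-aligned with $\bar{\mathbf{g}}^T\mathbf{v_Z}$, so that $\mu_1$ and $\mu\mu_2$ combine coherently to yield $m = N\kappa_l^2\mathbf{e}_n\mathbf{v_Z}+\mu\kappa_l\bar{\mathbf{g}}^T\mathbf{v_Z}$ up to a common phase that is immaterial once we take $|m|$ in the Marcum $Q$-function. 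For the variance, I would use the identity $\|\bar{\mathbf{H}}\mathbf{v_Z}\|^2=N|\mathbf{e}_n\mathbf{v_Z}|^2$ from \eqref{term2} together with $\|\mathbf{v_Z}\|^2=1$ to recover the stated $\sigma^2=N\kappa_n^2(1+\kappa_l^2|\mathbf{e}_n\mathbf{v_Z}|^2)+\mu^2\kappa_n^2$.

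The main obstacle is the phase-and-constant bookkeeping in the second step: carefully tracking the common phase of $\boldsymbol{\psi}_{\rm opt}^T\mathbf{E}\mathbf{v_Z}$ and $\bar{\mathbf{g}}^T\mathbf{v_Z}$ to justify the coherent addition producing $|m|$, and verifying that the Hadamard structure of $\mathbf{Z}_2$ indeed collapses under the unit-modulus constraint on $\boldsymbol{\psi}_{\rm opt}$ into the constants appearing in the stated $\sigma^2$. Once these algebraic reductions are in hand, the conclusion is a direct transcription of the Rice/Marcum argument used in Theorem~\ref{Theo1}.
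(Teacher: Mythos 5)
Your proposal is correct and follows essentially the same route as the paper: specialize the Rice parameters $(m,\sigma^2)$ from \eqref{parameters_m_sigma} to the i.i.d. case (identity covariances), substitute $\mathbf{f}_{\rm opt}=\mathbf{v_Z}$ and $\boldsymbol{\psi}_{\rm opt}$ from Theorem~\ref{Theo2}, and use $\|\bar{\mathbf{H}}\mathbf{v_Z}\|^2=N|\mathbf{e}_n\mathbf{v_Z}|^2$ together with $\kappa_l^2+\kappa_n^2=1$ to reach the stated closed forms. Your explicit tracking of the phase alignment between $\boldsymbol{\psi}_{\rm opt}^T\mathbf{E}\mathbf{v_Z}$ and $\bar{\mathbf{g}}^T\mathbf{v_Z}$ is in fact more careful than the paper's one-line "substitute and simplify."
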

\begin{proof}
From \eqref{SNR_dist}, we have $ |\xi_1 + \mu\xi_2|\sim{\rm Rice}\left(|m|,\sigma\right)$ whose parameters given in \eqref{parameters_m_sigma} becomes  $m=\kappa_l^2\mathbf{\Bar{h}}^T\mathbf{\Phi \bar{H}f}+ \kappa_l\mathbf{\Bar{g}}^T\mathbf{f}$ and $\sigma^2 = \kappa_l^2\kappa_n^2\|\mathbf{\Bar{H}f}\|^2 +  (\kappa_l^2\kappa_n^2+\kappa_n^4)N + \mu^2\kappa_n^2$ for {\rm i.i.d}. fading scenario. Further, substituting $\mathbf{f}_{\rm opt}$ and $\boldsymbol{\psi}_{\rm opt}$ from \eqref{eq:optima_fPhi_R1IIDLB} and simplifying, completes the proof.
\end{proof}
\vspace{-0.3cm}
\begin{corollary}
    EC of the SCSI-based beamforming scheme proposed in  Theorem \ref{Theo2} can be determined approximately using \eqref{ergodic capacity} with ${\rm P_{out}}(\beta)$ given in Corollary \ref{cor:outage_R1IID}.
\end{corollary}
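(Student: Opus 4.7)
The plan is to observe that this corollary is essentially a direct substitution of Corollary \ref{cor:outage_R1IID} into the integral identity \eqref{ergodic capacity}, so no new argument is needed beyond assembling the pieces already established.

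First, I would recall the general identity, already given as \eqref{ergodic capacity}, that for a non-negative random variable $X$ (here $X=\Gamma(\mathbf{f}_{\rm opt},\mathbf{\Phi}_{\rm opt})$ with $\mathbf{f}_{\rm opt}$ and $\mathbf{\Phi}_{\rm opt}$ as in Theorem \ref{Theo2}),
\begin{equation*}
\mathbb{E}[\log_2(1+X)] = \frac{1}{\ln 2}\int_0^\infty \frac{1}{1+u}\,\mathbb{P}[X>u]\,du = \frac{1}{\ln 2}\int_0^\infty \frac{1}{1+u}\bigl(1-{\rm P_{out}}(u)\bigr)\,du.
\end{equation*}
This follows from Fubini's theorem applied to $\log_2(1+X)=\int_0^X \frac{1}{(1+u)\ln 2}\,du$, which is a standard manipulation and is stated as \eqref{ergodic capacity} in the paper, so it can be invoked directly.

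Next, I would substitute the expression from Corollary \ref{cor:outage_R1IID},
\begin{equation*}
1-{\rm P_{out}}(u) \approx Q_1\!\left(\frac{|m|}{\sqrt{\sigma/2}},\,\frac{\sqrt{u/\gamma}}{\sqrt{\sigma/2}}\right),
\end{equation*}
with $m=N\kappa_l^2\mathbf{e}_n\mathbf{v_Z}+\mu\kappa_l\bar{\mathbf{g}}^T\mathbf{v_Z}$ and $\sigma^2=N\kappa_n^2(1+\kappa_l^2|\mathbf{e}_n\mathbf{v_Z}|^2)+\mu^2\kappa_n^2$, into the integral above to obtain the claimed expression for EC.

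There is no real obstacle to overcome, since the approximation symbol carries over directly from Corollary \ref{cor:outage_R1IID} (whose own approximation stems from modelling $\xi_1$ as complex Gaussian in Appendix \ref{AppB}). The integral involving the Marcum $Q_1$ function does not simplify to elementary form, so the result should be left as a one-dimensional integral that can be evaluated numerically. The only thing worth flagging in writing up the proof is that the validity of Fubini's step requires $\mathbb{E}[\log_2(1+X)]<\infty$, which holds trivially because the mean SNR $\mathbb{E}[X]=\Gamma(\mathbf{f}_{\rm opt},\mathbf{\Phi}_{\rm opt})$ is finite by Jensen and so $\mathbb{E}[\log_2(1+X)]\le \log_2(1+\mathbb{E}[X])<\infty$.
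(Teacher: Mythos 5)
Your proposal is correct and matches the paper's (implicit) argument exactly: the corollary is stated without proof precisely because it amounts to substituting the outage expression from Corollary \ref{cor:outage_R1IID} into the integral identity \eqref{ergodic capacity}, which is what you do. Your added remark justifying the Fubini step via finiteness of the mean SNR is a harmless extra detail not present in the paper.
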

\vspace{-0.3cm}
\vspace{-0.3cm}
\subsection{Correlated Rician-Rayleigh fading}
\label{R2}
In this subsection, we assume the LoS component along the direct link to be absent, i.e. $\mathbf{g} = \mathbf{\Tilde{R}}_{\rm BT} \mathbf{\Tilde{g}}_{\rm W}$, reducing the considered model to correlated Rician-Rayleigh fading. This fading scenario with multiple receiving antennas was considered in \cite{WangJinghe_CorrelatedFading_2021} where the authors  proposed an SDR-based iterative algorithm for SCSI-based optimal beamforming. We will present a similar scheme along with its performance analysis. 

The mean SNR given in \eqref{obj_fun} for this fading scenario is given by
\begin{align}
    \Gamma(\mathbf{f}, \mathbf{\Phi}) = |\kappa_l^2\mathbf{\Bar{h}}^T\mathbf{\Phi\Bar{H}f}|^2 + \kappa_l^2\kappa_n^2\boldsymbol{\psi}^H\mathbf{Z}_1\boldsymbol{\psi} +\mathbf{f}^H\mathbf{R}_{\rm BT}\mathbf{f}[\mu^2+\kappa_n^2\boldsymbol{\psi}^H\mathbf{Z}_2\boldsymbol{\psi}].\label{OptProb_R2}
\end{align}
The above expression follows from steps given in Appendix \ref{AppA} with $\mathbf{g}=\mathbf{\tilde{g}}$.
 As $\mathbf{f}$ and $\mathbf{\Phi}$ are coupled, we tackle this scenario by dividing the problem into optimal beamformer and phase shift matrix sub-problems as below.\newline
 \emph{1) Optimal Beamformer}: For a given $\mathbf{\Phi}$, the optimization problem w.r.t $\mathbf{f}$ becomes
 \vspace{-0.3cm}
    \begin{subequations}
    \begin{align}
    \max_{\mathbf{f}} ~~& \mathbf{f}^H\mathbf{F}_{\rm s}\mathbf{f}, \label{subprob_f_R2}\\
   \text{s.t.} ~~& \|\mathbf{f}\|^2 = 1,
    \end{align}
    \end{subequations}
where the objective function follows from \eqref{OptProb_R2} with $\mathbf{F}_{\rm s} = \mathbf{F}_{\rm 1s} + \mathbf{F}_{\rm 2s} + \mathbf{F}_{\rm 3s}$
such that $\mathbf{F}_{\rm 1s} = \kappa_l^4\mathbf{E}^H\boldsymbol{\psi}^*\boldsymbol{\psi}^T\mathbf{E}$, $\mathbf{F}_{\rm 2s} = \kappa_l^2\kappa_n^2\mathbf{\Bar{H}}^H\mathbf{\Phi}^H\mathbf{R}_{\rm RT}\mathbf{\Phi\Bar{H}}$, $\mathbf{F}_{\rm 3s} = \mathbf{R}_{\rm BT}[\mu^2+\kappa_n^2\boldsymbol{\psi}^H\mathbf{Z}_2\boldsymbol{\psi}]$.
Note that $\mathbf{F}_{\rm 1s}$, $\mathbf{F}_{\rm 2s}$, and $\mathbf{F}_{\rm 3s}$ directly follow from $\mathbf{F}_1$, $\mathbf{F}_2$, and $\mathbf{F}_3$ given in Section \ref{optimal beamforming} by setting $\mathbf{\Bar{g}} = 0$. Thus, the symmetricity of $\mathbf{F}_{\rm s}$ also follows $\mathbf{F}$. Hence, we have
\vspace{-0.3cm}
\begin{equation}
    \mathbf{f}_{\rm opt} = \mathbf{v}_{\mathbf{F}_{\rm s}}.\label{OptSol_f_R2}
\end{equation}
\emph{2) Optimal Phase Shift Matrix}: For a given $\mathbf{f}$, the optimization problem w.r.t  $\mathbf{\Phi}$ becomes
\begin{subequations}
    \begin{align}
    \max_{\boldsymbol{\psi}}  ~~& \boldsymbol{\psi}^H \mathbf{A}_{\rm s} \boldsymbol{\psi}, \label{subprob_phi_R2}\\
   \text{s.t.} ~~& |\mathbf{\psi}_k| = 1 ~~ \forall k = 0,\ldots,N-1,
    \end{align}
\end{subequations}
where $\mathbf{A}_{\rm s}=\kappa_l^4 \mathbf{E}^*\mathbf{f}^* \mathbf{f}^T\mathbf{E}^T + \kappa_l^2 \kappa_n^2 \mathbf{Z}_1 + \kappa_n^2 \mathbf{f}^H\mathbf{R}_{\rm BT} \mathbf{f}~\mathbf{Z}_2$.  
Next, we reformulate the above problem using SDR as given in \eqref{relaxed_phi} with $\mathbf{A}=\mathbf{A}_{\rm s}$ and $\mathbf{\Psi}=\boldsymbol{\psi\psi}^H$ and obtain optimal  $\boldsymbol{\psi}$ for a given $\mathbf{f}$.  

As the  above solutions of $\mathbf{f}$ and $\mathbf{\Phi}$ are coupled, the optimal beamformer for this fading case can be obtained using Algorithm \ref{Alg1} by simply setting  $\mathbf{F}=\mathbf{F}_{\rm s}$, $\mathbf{A}=\mathbf{A}_{\rm s}$, and $\mathbf{\Psi}=\boldsymbol{\psi\psi}^H$. Further, OP and EC of optimal beamforming under this fading scenario can be evaluated using Theorem \ref{Theo1} and Corollary \ref{cor:R1_EC}, respectively, with modified parameters 
\begin{align}
    m=\kappa_l^2\mathbf{\Bar{h}}^T \mathbf{\Phi}\Bar{\mathbf{H}}\mathbf{f}_{\rm opt} \text{~~and~~} \sigma^2=\kappa_l^2 \kappa_n^2 \boldsymbol{\psi}_{\rm opt} ^H \mathbf{Z}_1 \boldsymbol{\psi}_{\rm opt}  +  ( \mu^2 + \kappa_n^2 \boldsymbol{\psi}_{\rm opt} ^H \mathbf{Z}_2 \boldsymbol{\psi}_{\rm opt} )\mathbf{f}^H_{\rm opt}  \mathbf{R_{\rm BT}} \mathbf{f}_{\rm opt} .
\end{align}
The above parameters can be obtained by modifying the parameters of the distribution of  $\xi_2$ given in \eqref{eq:xi1xi2} with $\mathbf{g}=\mathbf{\tilde{g}}$ and further using it to get mean and variance of $\xi_1+\mu\xi_2$. $\mathbf{f}_{\rm opt}$ and $\boldsymbol{\psi}_{\rm opt}$ are obtained from \autoref{Alg1} with modified parameters as mentioned above. 

\vspace{-.4cm}
\subsection{IID Rician-Rayleigh Fading}\label{R2_IID}
In this subsection, we consider {\rm i.i.d.} Rician-Rayleigh fading along the indirect-direct links such $\mathbf{g}=\mathbf{\tilde{g}}\sim\mathcal{C}\mathcal{N}(0,\mathbf{I_M})$, $\mathbf{\Tilde{h}}\sim\mathcal{C}\mathcal{N}(0,\mathbf{I_N})$ and $\mathbf{\Tilde{H}}_{:,i}\sim\mathcal{C}\mathcal{N}(0,\mathbf{I_N})$. For this scenario, the mean SNR, i.e. the objective function \eqref{objective}, can be obtained using the steps given in Appendix \ref{AppA} as 
\begin{equation}
    \Gamma(\mathbf{f}, \mathbf{\Phi}) = |\kappa_l^2\mathbf{\Bar{h}}^T\mathbf{\Phi\Bar{H}f}|^2 + \kappa_l^2\kappa_n^2\|\mathbf{\Bar{H}f}\|^2 + N\kappa_l^2\kappa_n^2 + N\kappa_n^4 + \mu^2. 
\end{equation}
By substituting $\mathbf{\Bar{H}} = \mathbf{a}_N(\theta_{\rm ra})\mathbf{a}^T_M(\theta_{\rm bd}^{\rm i})$ and further simplifying, we can write the mean {\rm SNR} as
\begin{align}
    \Gamma(\mathbf{f}, \mathbf{\Phi}) = |\mathbf{a}^H_M(\mathbf{\theta}_{\rm bd}^{\rm i})\mathbf{f}|^2\left[|\kappa_l^2\boldsymbol{\psi}^T{\rm diag}(\mathbf{\bar{h}})\mathbf{a}_N(\mathbf{\theta}_{\rm ra})|^2+\kappa_l^2\kappa_n^2N\right]+ N\kappa_l^2\kappa_n^2 + N\kappa_n^4 + \mu^2.\label{OptProb_R2_IID}
\end{align}
It is clear from \eqref{OptProb_R2_IID} that $\mathbf{f}$ and $\mathbf{\Phi}$ are decoupled and the problem can be equivalently transformed into two independent sub-problems with respect to $\mathbf{f}$ and $\mathbf{\Phi}$. By simple applications of co-phasing and projections, the authors in \cite{hu2020statistical} obtain the optimal beamforming solutions for this fading scenario. But, for completeness, we reconstruct the results in the following theorem.
\vspace{-0.3cm}
\begin{theorem}\label{Theo4}
    The SCSI-based optimal transmit beamformer and RIS phase shift matrix that maximize the mean SNR given in \eqref{OptProb_R2_IID} under {\rm i.i.d.} Rician-Rayleigh fading are
   \begin{align}
    \mathbf{f}_{\rm opt} = \frac{1}{\sqrt{M}}\mathbf{a}_M^H(\mathbf{\theta}_{\rm bd}^{i}),
    \text{~~and~~}\boldsymbol{\psi_{\rm opt}} =  e^{-j\angle{{\rm diag}(\mathbf{\Bar{h}})\mathbf{a}_N(\mathbf{\theta}_{\rm ra})}}.\label{OptSol_R2_IID}
    \end{align}
\end{theorem}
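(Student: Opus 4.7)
The plan is to exploit the fact that, unlike the earlier fading scenarios, the mean SNR in \eqref{OptProb_R2_IID} has a clean product-plus-constants structure in which $\mathbf{f}$ and $\boldsymbol{\psi}$ appear in separate factors. First I would isolate the $\mathbf{f}$-dependence: only the scalar $|\mathbf{a}_M^H(\theta_{\rm bd}^{\rm i})\mathbf{f}|^2$ depends on $\mathbf{f}$, and it multiplies a bracket whose value depends only on $\boldsymbol{\psi}$. Since that bracket is strictly positive (it contains the deterministic term $\kappa_l^2\kappa_n^2 N>0$) and the trailing terms $N\kappa_l^2\kappa_n^2 + N\kappa_n^4 + \mu^2$ are constants in $(\mathbf{f},\boldsymbol{\psi})$, the joint maximization decouples into two independent subproblems: maximize $|\mathbf{a}_M^H(\theta_{\rm bd}^{\rm i})\mathbf{f}|^2$ over unit-norm $\mathbf{f}$, and maximize $|\boldsymbol{\psi}^T {\rm diag}(\bar{\mathbf{h}})\mathbf{a}_N(\theta_{\rm ra})|^2$ over phase vectors $\boldsymbol{\psi}$ with $|\boldsymbol{\psi}_k|=1$.

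For the transmit beamformer, the Cauchy--Schwarz inequality yields $|\mathbf{a}_M^H(\theta_{\rm bd}^{\rm i})\mathbf{f}|^2 \le \|\mathbf{a}_M(\theta_{\rm bd}^{\rm i})\|^2\,\|\mathbf{f}\|^2$, with equality iff $\mathbf{f}$ is a scalar multiple of $\mathbf{a}_M(\theta_{\rm bd}^{\rm i})$. Using the fact that the ULA steering vector defined in Section \ref{channel model} already has unit norm, imposing $\|\mathbf{f}\|=1$ fixes the scalar and gives the stated $\mathbf{f}_{\rm opt}$ (up to an inconsequential unit-modulus phase). For the RIS phase shift vector, I would set $\mathbf{u} = {\rm diag}(\bar{\mathbf{h}})\mathbf{a}_N(\theta_{\rm ra})$ and write $\boldsymbol{\psi}^T \mathbf{u} = \sum_{k=0}^{N-1} \boldsymbol{\psi}_k \mathbf{u}_k$. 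The triangle inequality combined with the unit-modulus constraint gives $|\boldsymbol{\psi}^T\mathbf{u}| \le \sum_k |\mathbf{u}_k|$, with equality iff every term $\boldsymbol{\psi}_k \mathbf{u}_k$ has the same phase. Choosing $\boldsymbol{\psi}_k = e^{-j\angle \mathbf{u}_k}$ co-phases the summands, achieves the upper bound, and respects $|\boldsymbol{\psi}_k|=1$, producing the stated $\boldsymbol{\psi}_{\rm opt}$.

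The argument is essentially obstacle-free; the only subtlety is the initial structural observation, namely that independence of the multipath fading collapses all cross terms so that $\mathbf{f}$ and $\boldsymbol{\psi}$ enter the mean SNR as a single product of two non-negative factors. Once that is established, the two subproblems are standard: Cauchy--Schwarz (with maximum-ratio transmission toward the LoS direction $\theta_{\rm bd}^{\rm i}$) for $\mathbf{f}$, and phase alignment (co-phasing against the cascaded LoS direction at the RIS) for $\boldsymbol{\psi}$. No iteration is needed, since the subproblems are genuinely decoupled rather than merely alternately solved.
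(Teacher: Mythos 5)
Your proposal is correct and follows essentially the same route as the paper: the paper's proof likewise observes that \eqref{OptProb_R2_IID} decouples in $\mathbf{f}$ and $\boldsymbol{\psi}$ and then asserts that the two separate maximizations ``clearly'' yield \eqref{OptSol_R2_IID}, while you simply make that last step explicit via Cauchy--Schwarz for $\mathbf{f}$ and triangle-inequality co-phasing for $\boldsymbol{\psi}$. The only minor quibble is your passing claim that the unit-norm steering vector makes the $1/\sqrt{M}$ prefactor redundant --- that is a notational inconsistency inherited from the paper rather than a gap in your argument.
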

\begin{proof}
    Since \eqref{OptProb_R2_IID} is decoupled in $\mathbf{f}$ and $\mathbf{\Phi}$, we can select $\mathbf{f}$ that maximizes $|\mathbf{a}^H_M(\mathbf{\theta}_{\rm bd}^{\rm i})\mathbf{f}|^2$ and $\mathbf{\Phi}$ that maximizes $|\kappa_l^2\boldsymbol{\psi}^T{\rm diag}(\mathbf{\bar{h}})\mathbf{a}_N(\mathbf{\theta}_{\rm ra})|^2$. For this, one can clearly see that the optimal solutions of $\mathbf{f}$ and $\mathbf{\Phi}$ would be the ones given in \eqref{OptSol_R2_IID}.
\end{proof}
Now, we present OP and EC that are achievable through the beamforming scheme given in Theorem \ref{Theo4} in the following corollaries.
\vspace{-0.3cm}
\begin{corollary}
\label{cor:outage_R2IID}
    OP of the SCSI-based optimal beamforming scheme for the RIS-aided MISO system under {\rm i.i.d.} Rician-Rayleigh fading is given by
    \begin{align}
        {\rm P_{out}}(\beta) \approx 1-Q_1\left(\frac{|m|}{\sqrt{\sigma/2}},\frac{\sqrt{\beta/\gamma}}{\sqrt{\sigma/2}}\right),\label{Pout_R2_IID}
    \end{align}
    where $m = \kappa_l^2N\sqrt{M}$ and
    $\sigma^2 = (M + 1) N\kappa_l^2 \kappa_n^2 + N \kappa_n^4 + \mu^2$.
\end{corollary}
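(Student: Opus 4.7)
The plan is to follow the same Rice--distribution recipe used in the proof of Corollary~2. Once the beamforming is fixed, the observation made just before Theorem~\ref{Theo1} guarantees that $\xi_1+\mu\xi_2$ is (approximately) complex Gaussian, so $|\xi_1+\mu\xi_2|$ is Rice-distributed and the OP is immediately the complementary Marcum--$Q_1$ CDF with parameters $|m|/\sqrt{\sigma/2}$ and $\sqrt{\beta/\gamma}/\sqrt{\sigma/2}$. The only new task is to evaluate $m$ and $\sigma^2$ at the specific $(\mathbf{f}_{\rm opt},\boldsymbol{\psi}_{\rm opt})$ from Theorem~\ref{Theo4}.

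First I would specialize the general parameters $(\mu_1,\sigma_1^2,\mu_2,\sigma_2^2)$ of \eqref{eq:xi1xi2} to the i.i.d. Rician--Rayleigh case. Because the Rayleigh direct link $\mathbf{g}=\mathbf{\tilde g}\sim\mathcal{CN}(0,\mathbf{I}_M)$ has no LoS part and is not rescaled by $\kappa_n$, we get $\xi_2\sim\mathcal{CN}(0,\|\mathbf{f}\|^2)=\mathcal{CN}(0,1)$, so $\mu_2=0$ and $\sigma_2^2=1$. For the Rician indirect link with $\mathbf{R}_{\rm RT}=\mathbf{R}_{\rm RR}=\mathbf{R}_{\rm BT}=\mathbf{I}$, the quadratic forms in $\mathbf{Z}_1,\mathbf{Z}_2$ collapse using $|\boldsymbol{\psi}_k|=|\bar h_n|=1$ and $\kappa_l^2+\kappa_n^2=1$, giving $\boldsymbol{\psi}^H\mathbf{Z}_1\boldsymbol{\psi}=\|\bar{\mathbf{H}}\mathbf{f}\|^2$ and $\boldsymbol{\psi}^H\mathbf{Z}_2\boldsymbol{\psi}=N$. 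Hence $\mu_1=\kappa_l^2\bar{\mathbf{h}}^T\mathbf{\Phi}\bar{\mathbf{H}}\mathbf{f}$ and $\sigma_1^2=\kappa_l^2\kappa_n^2\|\bar{\mathbf{H}}\mathbf{f}\|^2+N\kappa_n^2$.

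Next I would plug in the closed-form solutions from Theorem~\ref{Theo4}. Using the rank-one LoS structure $\bar{\mathbf{H}}=\mathbf{a}_N(\theta_{\rm ra})\mathbf{a}_M^T(\theta_{\rm bd}^{\rm i})$, the mean factors cleanly as $\bar{\mathbf{h}}^T\mathbf{\Phi}\bar{\mathbf{H}}\mathbf{f}=\bigl[\boldsymbol{\psi}^T{\rm diag}(\bar{\mathbf{h}})\mathbf{a}_N(\theta_{\rm ra})\bigr]\bigl[\mathbf{a}_M^T(\theta_{\rm bd}^{\rm i})\mathbf{f}\bigr]$. The phase-matching $\boldsymbol{\psi}_{\rm opt}$ drives the first bracket to $N$ (coherent sum of $N$ unit-magnitude terms), while the Cauchy--Schwarz-achieving $\mathbf{f}_{\rm opt}$ drives the second to $\sqrt{M}$ (since $\|\mathbf{a}_M\|^2=M$), producing $|m|=\kappa_l^2 N\sqrt{M}$. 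The same $\mathbf{f}_{\rm opt}$ makes $\|\bar{\mathbf{H}}\mathbf{f}_{\rm opt}\|^2=\|\mathbf{a}_N\|^2\,|\mathbf{a}_M^T\mathbf{f}_{\rm opt}|^2=MN$, so $\sigma^2=\sigma_1^2+\mu^2\sigma_2^2=\kappa_l^2\kappa_n^2 MN+N\kappa_n^2+\mu^2$. A final cosmetic rewrite $N\kappa_n^2=N\kappa_n^2(\kappa_l^2+\kappa_n^2)=N\kappa_l^2\kappa_n^2+N\kappa_n^4$ collects everything into the claimed $(M+1)N\kappa_l^2\kappa_n^2+N\kappa_n^4+\mu^2$.

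The one thing to be careful about is the normalization convention for the LoS vectors inside $\bar{\mathbf{H}}$ and $\bar{\mathbf{h}}$: they must be read with unit-modulus entries (so $\|\mathbf{a}_M\|^2=M$ and $\|\mathbf{a}_N\|^2=N$), which is the convention implicit in the mean-SNR expression \eqref{OptProb_R2_IID} and is precisely what produces the array-gain factors $M$ and $N$ in $|m|$ and $\sigma^2$. Beyond this bookkeeping, no alternating or fixed-point argument is needed, because \eqref{OptProb_R2_IID} already decouples in $\mathbf{f}$ and $\boldsymbol{\psi}$; the evaluation at the optimum is essentially a direct substitution into the Rice-distribution framework.
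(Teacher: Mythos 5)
Your proof is correct and follows essentially the same route as the paper's: specialize the Rice parameters $(m,\sigma^2)$ from \eqref{parameters_m_sigma} to the i.i.d.\ Rician--Rayleigh case, substitute the decoupled closed-form optima of Theorem~\ref{Theo4}, and use $\|\bar{\mathbf{H}}\mathbf{f}_{\rm opt}\|^2=MN$; your version simply spells out the intermediate quadratic forms that the paper leaves implicit. Your remark about the normalization of the array response vectors is well taken --- the stated constants indeed require reading $\mathbf{a}_M,\mathbf{a}_N$ with unit-modulus entries (so that $\mathbf{f}_{\rm opt}=\tfrac{1}{\sqrt{M}}\mathbf{a}_M^H$ has unit norm), despite the $1/\sqrt{M}$ prefactor in the paper's definition.
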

\begin{proof}
   For $\mathbf{g} = \Tilde{\mathbf{g}}$, the parameters of OP given in \eqref{parameters_m_sigma} becomes  $m=\kappa_l^2\mathbf{\Bar{h}}^T\mathbf{\Phi \bar{H}f}$ and $\sigma^2=\kappa_l^2\kappa_n^2\|\mathbf{\Bar{H}f}\|^2 + N\kappa_l^2\kappa_n^2 + N\kappa_n^4 + \mu^2$. Further, substituting $\mathbf{f}_{\rm opt}$ and $\boldsymbol{\psi}_{\rm opt}$ from \eqref{OptSol_R2_IID} and using $\|\mathbf{\Bar{H}f_{\rm opt}}\|^2 = MN$, we obtain $m$ and $\sigma^2$ as given in \eqref{Pout_R2_IID}.
\end{proof}
\vspace{-0.3cm}
\begin{corollary}
    EC of the SCSI-based  beamforming scheme proposed in  Theorem \ref{Theo4} can be determined approximately using \eqref{ergodic capacity} with  ${\rm P_{out}}(\beta)$ given in Corollary \ref{cor:outage_R2IID}.
\end{corollary}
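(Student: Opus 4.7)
The plan is to apply the general integral representation of ergodic capacity given in \eqref{ergodic capacity} together with the outage expression just established in Corollary \ref{cor:outage_R2IID}. I would first note that \eqref{ergodic capacity} is a distribution-free identity: for any nonnegative random variable $\Gamma(\mathbf{f}_{\rm opt},\mathbf{\Phi}_{\rm opt})$, the ergodic capacity $\mathbb{E}[\log_2(1+\Gamma)]$ admits the tail representation $\frac{1}{\ln 2}\int_0^\infty (1-{\rm P_{out}}(u))/(1+u)\,{\rm d}u$, obtained by combining $\mathbb{E}[X]=\int_0^\infty \Pr(X>u)\,{\rm d}u$ with the change of variable $u\mapsto \log_2(1+u)$. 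This identity holds irrespective of which fading scenario is considered and has already been invoked for the general Rician-Rician case in Corollary \ref{cor:R1_EC}, so it applies verbatim to the closed-form beamformer of Theorem \ref{Theo4}.

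Next, I would substitute the outage expression from Corollary \ref{cor:outage_R2IID}, namely ${\rm P_{out}}(u) \approx 1 - Q_1\!\left(|m|/\sqrt{\sigma/2},\, \sqrt{u/\gamma}/\sqrt{\sigma/2}\right)$ with $m = \kappa_l^2 N\sqrt{M}$ and $\sigma^2 = (M+1)N\kappa_l^2\kappa_n^2 + N\kappa_n^4 + \mu^2$, into \eqref{ergodic capacity}. The $(1-{\rm P_{out}}(u))$ term then collapses to the Marcum $Q_1$ factor and one obtains
\begin{align*}
{\rm C} \approx \frac{1}{\ln(2)}\int_0^\infty \frac{1}{1+u}\, Q_1\!\left(\frac{|m|}{\sqrt{\sigma/2}},\, \frac{\sqrt{u/\gamma}}{\sqrt{\sigma/2}}\right){\rm d}u,
\end{align*}
which is precisely the assertion of the corollary. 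The approximation symbol propagates from Corollary \ref{cor:outage_R2IID}; in this Rician-Rayleigh setting the direct-link term $\xi_2=\mathbf{g}^T\mathbf{f}$ is \emph{exactly} complex Gaussian, so the only source of inexactness is the Gaussian modeling of $\xi_1$ inherited from Appendix \ref{AppB}.

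The main obstacle is not analytical but practical: the integrand contains the Marcum $Q_1$ function for which no elementary antiderivative exists, so the right-hand side must be evaluated by quadrature. This is nonetheless a strict improvement over a full Monte Carlo average, since the channel randomness has already been integrated out analytically and only a single one-dimensional integral over the threshold parameter remains. In checking the final expression, I would verify that as $\kappa_n\to 0$ the variance $\sigma^2 \to \mu^2$, the Marcum $Q_1$ degenerates to an indicator at $u = \gamma|m|^2 = \gamma\kappa_l^4 N^2 M$, and the integral reduces to $\log_2(1+\gamma\kappa_l^4 N^2 M)$, which matches the deterministic LoS-only limit discussed for the general case, serving as a sanity check on the substitution.
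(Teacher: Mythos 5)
Your argument is correct and is exactly the paper's (implicit) reasoning: the corollary is an immediate consequence of substituting the outage expression of Corollary \ref{cor:outage_R2IID} into the general tail-integral identity \eqref{ergodic capacity}, with the approximation inherited from the Gaussian modeling of $\xi_1$. One caveat on your final sanity check: as $\kappa_n\to 0$ the variance tends to $\sigma^2\to\mu^2\neq 0$ (the direct link in this Rician-Rayleigh setting is pure Rayleigh and does not vanish with $\kappa_n$), so the Rice distribution does \emph{not} degenerate to a point mass at $|m|$ and the integral does not reduce to $\log_2(1+\gamma\kappa_l^4N^2M)$; that limit check as stated is incorrect, though it does not affect the validity of the main substitution argument.
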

\vspace{-0.7cm}
\subsection{Correlated Rayleigh-Rayleigh Fading}\label{R3}
\vspace{-0.1cm}
In this subsection, we assume that the LoS components along both the direct and indirect links are absent. This reduces the channel model presented in Section \ref{channel model} to a correlated Rayleigh-Rayleigh scenario wherein $\mathbf{H} = \mathbf{\Tilde{H}}$, $\mathbf{h} = \mathbf{\Tilde{h}}$, and $\mathbf{g} = \mathbf{\Tilde{g}}$. It may be noted that this scenario is a special case of the generalized fading model given in Section \ref{channel model}  as it is discussed  earlier to be a limiting case of fading factor, i.e. $K\to 0$. For this scenario, the mean {\rm SNR} becomes
\begin{equation}
    \Gamma(\mathbf{f}, \mathbf{\Phi}) = \mathbf{f}^H\mathbf{R}_{\rm BT}\mathbf{f}[\boldsymbol{\psi}^H(\mathbf{R}_{\rm RR}\odot\mathbf{R}_{\rm RT})\boldsymbol{\psi}+\mu^2].\label{OPtProb_R3}
\end{equation}
The above equation directly follows from the steps given in Appendix \ref{AppA} by setting $\kappa_l = 0$ and $\kappa_n = 1$.
It can be  seen from \eqref{OPtProb_R3} that the terms pertaining to $\mathbf{f}$ and $\mathbf{\Phi}$ are decoupled. 
This is expected as, in the absence of LoS components, the optimal choice of transmit beamforming vector $\mathbf{f}$ will depend on fading covariance matrix associated with BS and the optimal choice of phase shift matrix $\mathbf{\Phi}$ will depend on  fading  covariance matrices associated with RIS. 
Therefore, the optimal choice of $\mathbf{f}$ and $\mathbf{\Phi}$ can be selected independently of each other. 

As the covariance matrix is symmetric, we can set $\mathbf{f}$ equal to $\mathbf{v_{R_{\rm BT}}}$ for maximizing the term $\mathbf{f}^H\mathbf{R}_{\rm BT}\mathbf{f}$ with unit norm constraint \eqref{constraint_f}. For  $\mathbf{f}=\mathbf{v_{R_{\rm BT}}}$, the  maximum value of this term is equal to  $\lambda_{\mathbf{R}_{\rm BT}}$ which is nothing but the maximum eigenvalue value of $\mathbf{R}_{\rm BT}$.
Now, we will maximize the other term corresponding to the phase shift matrix as below
\begin{subequations}
    \begin{align}
    \max_{\boldsymbol{\psi}}  ~~& \boldsymbol{\psi}^H(\mathbf{R}_{\rm RR} \odot \mathbf{R}_{\rm RT})\boldsymbol{\psi}, \label{subprob_phi_R3}\\
   \text{s.t.} ~~& |\mathbf{\psi}_k| = 1 ~~ \forall k = 0,\ldots,N-1,
    \end{align}\label{OptProb_Phi_R3}
\end{subequations}
\vspace{-0.1cm}
The unit modulus constraint makes it difficult to solve the problem directly, as mentioned earlier. However, we could obtain the optimal $\mathbf{\Phi}$ using the fact that the matrix $\mathbf{R}_{\rm RR}\odot\mathbf{R}_{\rm RT}$ in the objective function is a real. 
The objective function \eqref{subprob_phi_R3} can be rewritten as
    \begin{align*}
        \boldsymbol{\psi}^H(\mathbf{R}_{\rm RR} \odot \mathbf{R}_{\rm RT})\boldsymbol{\psi}=& \sum_{i,j}\mathbf{R}_{{\rm RT},{ij}}\mathbf{R}_{{\rm RR},{ij}}\boldsymbol{\psi}_i^H\boldsymbol{\psi}_j
        = {\rm trace}(\mathbf{R}_{\rm RR}\odot\mathbf{R}_{\rm RT}) + \sum_{i\neq j}\mathbf{R}_{{\rm RT},{ij}}\mathbf{R}_{{\rm RR},{ij}}\boldsymbol{\psi}_i^H\boldsymbol{\psi}_j.\label{phi_int} 
    \end{align*}
Note ${\rm trace}(\mathbf{R}_{\rm RR}\odot\mathbf{R}_{\rm RT})$ is a real scalar quantity and is independent of $\mathbf{\Phi}$, whereas the second term is the summation of complex scalars. Thus, we need to co-phase the complex numbers to maximize the second term. To do this, we can simply set $\boldsymbol{\psi}_i = \boldsymbol{\psi}_j$, $\forall i,j=1,\dots,N$.
Hence, the optimal $\mathbf{\Phi}$ can be obtained as $\boldsymbol{\psi}_{{\rm opt},k} = e^{j{\theta}}$ for $k=1,\dots,N$ where $\theta\in[-\pi/2 , \pi/2]$. For this choice of $\boldsymbol{\psi}_{\rm opt}$, maximum value of the objective given in \eqref{OptProb_Phi_R3} becomes
\begin{equation}
    {\rm trace}(\mathbf{R}_{\rm RR}\odot\mathbf{R}_{\rm RT}) + \sum\nolimits_{i\neq j}\mathbf{R}_{\rm RT_{ij}}\mathbf{R}_{\rm RR_{ij}} = \sum\nolimits_{i,j} \mathbf{R}_{\rm RT_{ij}}\mathbf{R}_{\rm RR_{ij}}=\mathbf{1}_{\rm N}^T(\mathbf{R}_{\rm RR}\odot\mathbf{R}_{\rm RT})\mathbf{1}_{\rm N}.\label{Max_Obj_R3}
\end{equation}
For $\mathbf{R}_{\rm RT} = \mathbf{R}_{\rm RR}=\mathbf{R}$, \eqref{Max_Obj_R3} becomes $\|\mathbf{R}\|_{\rm F}^2$. The above results are summarized in  Theorem \ref{Theo5}.
\vspace{-0.3cm}
\begin{theorem}\label{Theo5}
    The SCSI-based optimal transmit beamformer and RIS phase shift matrix that maximize the mean SNR given in \eqref{OptProb_R2_IID} under {\rm i.i.d.} Rayleigh-Rayleigh fading are
   \begin{align}
    \mathbf{f}_{\rm opt} = \mathbf{v_{R_{\rm BT}}}
    \text{~~and~~}\boldsymbol{\psi}_{\rm opt} = \mathbf{1}_{\rm N} e^{j{\theta}},\label{OptSol_R3}
    \end{align}
    \vspace{-0.3cm}
    where $\theta\in[-\pi/2 , \pi/2]$ and the maximum mean {\rm SNR} value is
    \begin{equation}
        \Gamma(\mathbf{f}_{\rm opt},\mathbf{\Phi}_{\rm opt})=\begin{cases}
            \lambda_{\mathbf{R}_{\rm BT}}\left(\mu^2 + \|\mathbf{R}\|_F^2\right), ~~&\text{if~ }\mathbf{R}_{\rm RT} = \mathbf{R}_{\rm RR}=\mathbf{R}\\
             \lambda_{\mathbf{R}_{\rm BT}}\left(\mu^2 +\mathbf{1}_{\rm N}^T(\mathbf{R}_{\rm RR}\odot\mathbf{R}_{\rm RT})\mathbf{1}_{\rm N}\right), ~~&\text{otherwise},
        \end{cases}\label{eq:R3cor_maxSNR}
    \end{equation}
    and $\lambda_{\mathbf{R}_{\rm BT}}$ is the maximum eigen value of $\mathbf{R}_{\rm BT}$.
\end{theorem}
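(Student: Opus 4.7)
The plan is to prove Theorem 5 by exploiting the decoupled structure of the mean SNR expression in \eqref{OPtProb_R3} and then handling the transmit beamformer and the RIS phase shift vector separately. Since $\mathbf{f}^H\mathbf{R}_{\rm BT}\mathbf{f}$ depends only on $\mathbf{f}$ and $\boldsymbol{\psi}^H(\mathbf{R}_{\rm RR}\odot\mathbf{R}_{\rm RT})\boldsymbol{\psi}+\mu^2$ depends only on $\boldsymbol{\psi}$, the joint problem can be split into two independent maximizations, which I would treat in order.

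For the beamformer, I would observe that $\mathbf{R}_{\rm BT}$ is Hermitian positive semidefinite, so the constrained problem $\max_{\|\mathbf{f}\|=1}\mathbf{f}^H\mathbf{R}_{\rm BT}\mathbf{f}$ is a Rayleigh quotient, whose optimum is attained at the principal eigenvector $\mathbf{v}_{\mathbf{R}_{\rm BT}}$ with value $\lambda_{\mathbf{R}_{\rm BT}}$. This is the easy half.

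For the phase shift vector, the key idea is that the matrix $\mathbf{R}_{\rm RR}\odot\mathbf{R}_{\rm RT}$ has entries $\mathrm{sinc}(2r_{ij}/\lambda)^2$, which are real. Expanding
\begin{equation*}
\boldsymbol{\psi}^H(\mathbf{R}_{\rm RR}\odot\mathbf{R}_{\rm RT})\boldsymbol{\psi} = \sum_{i}[\mathbf{R}_{\rm RR}\odot\mathbf{R}_{\rm RT}]_{ii} + \sum_{i\neq j}[\mathbf{R}_{\rm RR}\odot\mathbf{R}_{\rm RT}]_{ij}\boldsymbol{\psi}_i^*\boldsymbol{\psi}_j,
\end{equation*}
the first term is a constant equal to $\mathrm{trace}(\mathbf{R}_{\rm RR}\odot\mathbf{R}_{\rm RT})$ independent of $\boldsymbol{\psi}$, while the second is a sum of complex numbers with real coefficients. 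Under the unit-modulus constraint, the magnitude of every term is fixed, so maximizing the sum amounts to co-phasing, i.e.\ forcing all terms $\boldsymbol{\psi}_i^*\boldsymbol{\psi}_j$ to be simultaneously real and positive. This is achieved by the choice $\boldsymbol{\psi}_i = e^{j\theta}$ for all $i$ and for any common $\theta$, which clearly satisfies $|\boldsymbol{\psi}_i|=1$. Substituting this choice yields $\boldsymbol{\psi}_i^*\boldsymbol{\psi}_j = 1$, so the off-diagonal sum collapses to $\sum_{i\neq j}[\mathbf{R}_{\rm RR}\odot\mathbf{R}_{\rm RT}]_{ij}$; combining with the trace term gives $\sum_{i,j}[\mathbf{R}_{\rm RR}\odot\mathbf{R}_{\rm RT}]_{ij} = \mathbf{1}_N^T(\mathbf{R}_{\rm RR}\odot\mathbf{R}_{\rm RT})\mathbf{1}_N$. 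Specializing to $\mathbf{R}_{\rm RT}=\mathbf{R}_{\rm RR}=\mathbf{R}$, the sum becomes $\sum_{i,j}|\mathbf{R}_{ij}|^2 = \|\mathbf{R}\|_F^2$, recovering the first case of \eqref{eq:R3cor_maxSNR}. Multiplying the maxima of the two decoupled pieces and adding $\mu^2$ (weighted by $\lambda_{\mathbf{R}_{\rm BT}}$) yields the claimed expression for $\Gamma(\mathbf{f}_{\rm opt},\mathbf{\Phi}_{\rm opt})$.

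The main obstacle is justifying the co-phasing step rigorously. In particular, co-phasing delivers the global maximum only if the real coefficients $[\mathbf{R}_{\rm RR}\odot\mathbf{R}_{\rm RT}]_{ij}$ are non-negative; otherwise one could anti-align to exploit negative coefficients. I would address this by invoking the RIS correlation model of \cite{RIS_Corr_Fad}, in which each entry is $\mathrm{sinc}(2r_{ij}/\lambda)^2$ for the Hadamard product (a non-negative quantity), so the coefficients are all non-negative and the simultaneous alignment $\boldsymbol{\psi}_i^*\boldsymbol{\psi}_j=1$ attains the bound $\bigl|\sum_{i\neq j}[\mathbf{R}_{\rm RR}\odot\mathbf{R}_{\rm RT}]_{ij}\boldsymbol{\psi}_i^*\boldsymbol{\psi}_j\bigr| \le \sum_{i\neq j}[\mathbf{R}_{\rm RR}\odot\mathbf{R}_{\rm RT}]_{ij}$. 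With this, both the optimal structure in \eqref{OptSol_R3} and the closed-form value in \eqref{eq:R3cor_maxSNR} follow, and the freedom in the common phase $\theta\in[-\pi/2,\pi/2]$ is explained by the invariance of the quadratic form under a global phase rotation of $\boldsymbol{\psi}$.
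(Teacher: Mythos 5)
Your proposal is correct and follows essentially the same route as the paper: decouple the objective in \eqref{OPtProb_R3}, solve the $\mathbf{f}$ subproblem as a Rayleigh quotient to get $\mathbf{v}_{\mathbf{R}_{\rm BT}}$ with value $\lambda_{\mathbf{R}_{\rm BT}}$, and co-phase the unit-modulus entries of $\boldsymbol{\psi}$ after splitting the quadratic form into its trace and off-diagonal parts. Your additional observation that co-phasing is only globally optimal because the entries of $\mathbf{R}_{\rm RR}\odot\mathbf{R}_{\rm RT}$ are non-negative under the ${\rm sinc}$-based correlation model is a point the paper leaves implicit, and it is a worthwhile clarification rather than a deviation.
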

Now, we present OP and EC achievable through the scheme given in Theorem \ref{Theo5} in the following corollaries.
\vspace{-0.3cm}
\begin{corollary}\label{cor:outage_R3cor}
    OP of the SCSI-based optimal beamforming scheme for the RIS-aided MISO system under correlated Rayleigh-Rayleigh fading is given by
    \begin{align}
        {\rm P_{out}}(\beta) \approx 1-Q_1\left(0,\frac{\sqrt{\beta/\gamma}}{\sqrt{\sigma/2}}\right),\label{Pout_R3}
    \end{align}
    where  $ m =0$ and $\sigma^2 =  \Gamma(\mathbf{f}_{\rm opt},\mathbf{\Phi}_{\rm opt})$, and  $\Gamma(\mathbf{f}_{\rm opt},\mathbf{\Phi}_{\rm opt})$ is given in \eqref{eq:R3cor_maxSNR}.
\end{corollary}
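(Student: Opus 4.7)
The plan is to specialize the general Rician--Rician distributional result underlying Theorem \ref{Theo1} to the limiting case $\kappa_l=0$, $\kappa_n=1$ (i.e.\ $K\to 0$), which is exactly the correlated Rayleigh--Rayleigh setting of this subsection. The starting point is the fact, established in \eqref{SNR_dist}, that $|\xi_1+\mu\xi_2|$ is Rice-distributed with parameters $(|m|,\sigma)$ given by \eqref{parameters_m_sigma}. The desired outage expression will then follow by writing the Rice CDF in terms of the Marcum $Q_1$ function and substituting the optimal $\mathbf{f}_{\rm opt}$ and $\boldsymbol{\psi}_{\rm opt}$ from Theorem \ref{Theo5}.

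First, I would specialize the mean $m=\mu_1+\mu\mu_2$. Since $\mu_1=\kappa_l^2\bar{\mathbf{h}}^T\mathbf{\Phi}\bar{\mathbf{H}}\mathbf{f}$ and $\mu_2=\kappa_l\bar{\mathbf{g}}^T\mathbf{f}$ both carry a factor of $\kappa_l$, they vanish at $\kappa_l=0$, so $m=0$ as claimed.

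Next, I would specialize the variance. With $\kappa_l=0$, the term $\kappa_l^2\kappa_n^2\boldsymbol{\psi}^H\mathbf{Z}_1\boldsymbol{\psi}$ drops from $\sigma_1^2$, and $\mathbf{Z}_2=\mathbf{R}_{\rm RR}\odot(\kappa_n^2\mathbf{R}_{\rm RT}+\kappa_l^2\bar{\mathbf{h}}^*\bar{\mathbf{h}}^T)$ collapses to $\mathbf{R}_{\rm RR}\odot\mathbf{R}_{\rm RT}$. Combining with $\sigma_2^2=\mathbf{f}^H\mathbf{R}_{\rm BT}\mathbf{f}$ yields
\begin{equation*}
\sigma^2=\mathbf{f}^H\mathbf{R}_{\rm BT}\mathbf{f}\bigl[\boldsymbol{\psi}^H(\mathbf{R}_{\rm RR}\odot\mathbf{R}_{\rm RT})\boldsymbol{\psi}+\mu^2\bigr],
\end{equation*}
which is exactly the mean-SNR objective \eqref{OPtProb_R3} for this scenario. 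Plugging in $\mathbf{f}_{\rm opt}=\mathbf{v}_{\mathbf{R}_{\rm BT}}$ (so that $\mathbf{f}_{\rm opt}^H\mathbf{R}_{\rm BT}\mathbf{f}_{\rm opt}=\lambda_{\mathbf{R}_{\rm BT}}$) and $\boldsymbol{\psi}_{\rm opt}=\mathbf{1}_N e^{j\theta}$ (so that $\boldsymbol{\psi}_{\rm opt}^H(\mathbf{R}_{\rm RR}\odot\mathbf{R}_{\rm RT})\boldsymbol{\psi}_{\rm opt}=\mathbf{1}_N^T(\mathbf{R}_{\rm RR}\odot\mathbf{R}_{\rm RT})\mathbf{1}_N$) gives $\sigma^2=\Gamma(\mathbf{f}_{\rm opt},\mathbf{\Phi}_{\rm opt})$ as given in \eqref{eq:R3cor_maxSNR}.

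Finally, since $\xi_1+\mu\xi_2$ is zero-mean complex Gaussian with variance $\sigma^2$, its magnitude is Rayleigh (equivalently, Rice with vanishing non-centrality), and the CDF can be written in the Marcum-$Q_1$ form consistent with Theorem \ref{Theo1} as $\Pr[|\xi_1+\mu\xi_2|\le x]=1-Q_1\bigl(0,\,x/\sqrt{\sigma/2}\bigr)$. Setting $x=\sqrt{\beta/\gamma}$ gives the stated expression. There is no real obstacle here beyond the bookkeeping of which terms vanish at $\kappa_l=0$; the only subtlety worth double-checking is the ``$\approx$'' carried over from Theorem \ref{Theo1}, which originates from the Gaussian approximation of $\xi_1$ in Appendix \ref{AppB}, not from the Rayleigh specialization itself.
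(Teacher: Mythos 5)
Your proposal is correct and follows essentially the same route as the paper's own proof: specialize the parameters $(m,\sigma^2)$ of \eqref{parameters_m_sigma} to the Rayleigh--Rayleigh case (the paper writes this as $\mathbf{g}=\tilde{\mathbf{g}}$, $\mathbf{h}=\tilde{\mathbf{h}}$, $\mathbf{H}=\tilde{\mathbf{H}}$, equivalent to your $\kappa_l=0$, $\kappa_n=1$), obtain $m=0$ and $\sigma^2=\mathbf{f}^H\mathbf{R}_{\rm BT}\mathbf{f}[\boldsymbol{\psi}^H(\mathbf{R}_{\rm RR}\odot\mathbf{R}_{\rm RT})\boldsymbol{\psi}+\mu^2]$, and substitute the optimal $\mathbf{f}_{\rm opt}$ and $\boldsymbol{\psi}_{\rm opt}$ from Theorem \ref{Theo5}. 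Your version is simply more explicit about which terms vanish and about the provenance of the ``$\approx$''.
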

\begin{proof}
    For $\mathbf{H} = \mathbf{\Tilde{H}}$, $\mathbf{h} = \mathbf{\Tilde{h}}$, and $\mathbf{g} = \mathbf{\Tilde{g}}$, the parameters given in \eqref{parameters_m_sigma} becomes $m = 0$ and $\sigma^2 = \mathbf{f}^H\mathbf{R}_{\rm BT}\mathbf{f}[\boldsymbol{\psi}^H(\mathbf{R}_{\rm RR}\odot\mathbf{R}_{\rm RT})\boldsymbol{\psi}+\mu^2]$. Further, by substituting $\mathbf{f}_{\rm opt}$ and $\mathbf{\boldsymbol{\psi}_{\rm opt}}$ from \eqref{OptSol_R3} and simplifying, we obtain \eqref{Pout_R3}.
\end{proof}
\vspace{-0.3cm}
\begin{corollary}
    EC of the SCSI-based  beamforming scheme proposed in  Theorem \ref{Theo5} can be determined approximately using \eqref{ergodic capacity} with ${\rm P_{out}}(\beta)$ given in Corollary \ref{cor:outage_R3cor}.
\end{corollary}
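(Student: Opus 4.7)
The plan is to specialize the general CCDF-to-capacity identity \eqref{ergodic capacity} to the correlated Rayleigh--Rayleigh case, using the outage expression already established in Corollary \ref{cor:outage_R3cor}.

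First I would invoke \eqref{ergodic capacity}, which expresses $\mathbb{E}[\log_2(1+X)]$ for any non-negative random variable $X$ via the integral $\frac{1}{\ln 2}\int_0^\infty (1-F_X(u))/(1+u)\,{\rm d}u$. This identity is exact and independent of the fading model, and is already recorded in Section III. Setting $X=\Gamma(\mathbf{f}_{\rm opt},\mathbf{\Phi}_{\rm opt})$ evaluated at the $\mathbf{f}_{\rm opt}$ and $\boldsymbol{\psi}_{\rm opt}$ given in Theorem \ref{Theo5} reduces the task to substituting the CDF, i.e.\ the $P_{\rm out}(u)$ established in Corollary \ref{cor:outage_R3cor} for this same fading scenario.

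Next I would plug in that outage. Because both the direct and indirect links are LoS-free, \eqref{parameters_m_sigma} gives $m=0$, so the magnitude $|\xi_1+\mu\xi_2|$ degenerates from a Rice to a Rayleigh law with second parameter $\sigma^{2}=\Gamma(\mathbf{f}_{\rm opt},\mathbf{\Phi}_{\rm opt})$ from \eqref{eq:R3cor_maxSNR}. Inserting $1-P_{\rm out}(u)\approx Q_{1}\!\left(0,\sqrt{u/\gamma}/\sqrt{\sigma/2}\right)$ into \eqref{ergodic capacity} yields
\[
{\rm C} \approx \frac{1}{\ln(2)}\int_0^\infty \frac{1}{1+u}\,Q_{1}\!\left(0,\,\frac{\sqrt{u/\gamma}}{\sqrt{\sigma/2}}\right){\rm d}u,
\]
which is exactly the statement. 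Optionally one could further use $Q_{1}(0,x)=\exp(-x^{2}/2)$ together with $\int_{0}^{\infty}\!\frac{e^{-au}}{1+u}{\rm d}u=e^{a}E_{1}(a)$ to write the right-hand side in closed form via the exponential integral, but this is not required for the claim as stated.

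The only subtle point, and thus the ``main obstacle'' worth flagging, is the provenance of the $\approx$ qualifier. The CCDF representation \eqref{ergodic capacity} is exact, so the entire approximation is inherited from the Gaussian approximation of $\xi_{1}$ used in Appendix \ref{AppB} to obtain \eqref{eq:xi1xi2}, which propagates through Corollary \ref{cor:outage_R3cor} to the present corollary. Since that step has already been justified where \eqref{eq:xi1xi2} is derived, no additional approximation-control argument is needed here; the proof is a one-line substitution.
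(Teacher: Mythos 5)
Your proposal is correct and follows the same route the paper takes (implicitly, since the corollary is stated without a separate proof): substitute the $P_{\rm out}$ of Corollary \ref{cor:outage_R3cor} into the exact CCDF representation \eqref{ergodic capacity}, with the $\approx$ inherited solely from the Gaussian approximation of $\xi_1$ in Appendix \ref{AppB}. Your optional remark that $Q_1(0,x)=e^{-x^2/2}$ yields a closed form via the exponential integral is a correct bonus beyond what the paper records.
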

\vspace{-0.6cm}
\subsection{IID Rayleigh-Rayleigh Fading}\label{R3_IID}
Here, we assume that  both the direct and indirect links undergo {\rm i.i.d} multipath fading with the absence of LoS components. This results in the {\rm i.i.d} Rayleigh-Rayleigh fading scenario such that 
$\mathbf{g}\sim\mathcal{C}\mathcal{N}(0,\mathbf{I_M})$, $\mathbf{h}\sim\mathcal{C}\mathcal{N}(0,\mathbf{I_N})$ and $\mathbf{H}_{:,i}\sim\mathcal{C}\mathcal{N}(0,\mathbf{I_N})$. For this, the mean {\rm SNR} reduces to
\vspace{-0.3cm}
\begin{equation}
  \Gamma(\mathbf{f},\mathbf{\Phi})=  \mu^2 + N\label{OptProb_R3_IID},
\end{equation}
which is independent of  $\mathbf{f}$ and $\mathbf{\Phi}$. This implies that the mean {\rm SNR}  is a constant value regardless of the choice of beamforming vector. Therefore, we can set $\mathbf{f}_{\rm opt} \in \mathcal{S}_\mathbf{f} = \{ \mathbf{f} \in \mathbb{C}^M : \|\mathbf{f}\| = 1 \}$ and $\boldsymbol{\psi}_{\rm opt} \in \mathcal{S}_{\boldsymbol{\psi}}  = \{ \boldsymbol{\psi} \in \mathbb{C}^N : |\boldsymbol{\psi}_k| = 1 ; \forall i = 1 \cdots N\}$.
Now, we present the outage performance of this proposed beamforming scheme in  the following corollary. 
\setcounter{theorem}{5}
\setcounter{corollary}{0}
\vspace{-0.3cm}
\begin{corollary}\label{cor:outage_R3IID}
For any $\mathbf{f}\in\mathcal{S}_\mathbf{f}$ and $\boldsymbol{\psi}\in\mathcal{S}_\mathbf{\boldsymbol{\psi}}$, OP for the RIS-aided MISO system under {\rm i.i.d.}. Rayleigh-Rayleigh fading is given by
\begin{align}
    {\rm P_{out}}(\beta) \approx 1-Q_1\left(0,\frac{\sqrt{\beta/\gamma}}{\sqrt{\sigma/2}}\right), \text{~where~}  m =  0 \text{~and~} \sigma^2=\mu^2 + N.\label{Pout_R3_IID}
\end{align}
\end{corollary}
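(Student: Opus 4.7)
The plan is to specialize the general outage formula already derived in Theorem~\ref{Theo1} to the present setting, mirroring almost verbatim the argument used in Corollary~\ref{cor:outage_R3cor}. The starting point is the observation in \eqref{SNR_dist} that $|\xi_1+\mu\xi_2|\sim\mathrm{Rice}(|m|,\sigma)$ with the generic parameters $m=\mu_1+\mu\mu_2$ and $\sigma^2=\sigma_1^2+\mu^2\sigma_2^2$ given in \eqref{parameters_m_sigma}. All I need to do is evaluate these four scalars under the i.i.d.\ Rayleigh--Rayleigh assumptions.

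First I would set $\kappa_l=0$ and $\kappa_n=1$, which reflects the absence of LoS components along every link. This immediately gives $\mu_1=\kappa_l^2\bar{\mathbf{h}}^T\mathbf{\Phi}\bar{\mathbf{H}}\mathbf{f}=0$ and $\mu_2=\kappa_l\bar{\mathbf{g}}^T\mathbf{f}=0$, hence $m=0$ for every admissible pair $(\mathbf{f},\boldsymbol{\psi})$. Second, I would compute the variance under the identity covariance assumption $\mathbf{R}_{\rm BT}=\mathbf{I}_M$ and $\mathbf{R}_{\rm RT}=\mathbf{R}_{\rm RR}=\mathbf{I}_N$. In $\sigma_1^2$ the first summand vanishes because its prefactor is $\kappa_l^2\kappa_n^2=0$, while in the second summand the matrix $\mathbf{Z}_2=\mathbf{R}_{\rm RR}\odot(\kappa_n^2\mathbf{R}_{\rm RT}+\kappa_l^2\bar{\mathbf{h}}^*\bar{\mathbf{h}}^T)$ collapses to $\mathbf{I}_N\odot\mathbf{I}_N=\mathbf{I}_N$, so $\boldsymbol{\psi}^H\mathbf{Z}_2\boldsymbol{\psi}=\|\boldsymbol{\psi}\|^2=N$ for any $\boldsymbol{\psi}\in\mathcal{S}_{\boldsymbol{\psi}}$, and $\mathbf{f}^H\mathbf{R}_{\rm BT}\mathbf{f}=\|\mathbf{f}\|^2=1$ for any $\mathbf{f}\in\mathcal{S}_{\mathbf{f}}$. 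This yields $\sigma_1^2=N$ and $\sigma_2^2=\kappa_n^2\mathbf{f}^H\mathbf{R}_{\rm BT}\mathbf{f}=1$, hence $\sigma^2=N+\mu^2$.

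Finally, substituting $m=0$ and $\sigma^2=\mu^2+N$ into the Marcum-$Q_1$ CDF used throughout the preceding corollaries produces \eqref{Pout_R3_IID}. The conceptual takeaway, worth stating explicitly, is that the variance is independent of the particular choice of $\mathbf{f}$ and $\boldsymbol{\psi}$ precisely because the identity covariance matrices and the constraints in $\mathcal{S}_{\mathbf{f}}$ and $\mathcal{S}_{\boldsymbol{\psi}}$ together neutralise any dependence on them; this is consistent with the mean SNR expression \eqref{OptProb_R3_IID} also being a constant.

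There is no real obstacle here: the proof is a direct substitution exercise, and the only step requiring a small amount of care is verifying that $\mathbf{Z}_2$ reduces to $\mathbf{I}_N$ and that the Hadamard structure does not contribute anything nontrivial when both covariance matrices are already identity. Once that is noted, the rest is bookkeeping.
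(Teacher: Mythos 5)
Your proposal is correct and follows essentially the same route as the paper: both specialize the generic Rice parameters in \eqref{parameters_m_sigma} to the i.i.d.\ Rayleigh--Rayleigh case (the paper via $\mathbf{H}=\tilde{\mathbf{H}}_{\rm W}$, $\mathbf{h}=\tilde{\mathbf{h}}_{\rm W}$, $\mathbf{g}=\tilde{\mathbf{g}}_{\rm W}$, you equivalently via $\kappa_l=0$, $\kappa_n=1$ and identity covariances), obtaining $m=0$ and $\sigma^2=\|\mathbf{f}\|^2(\mu^2+\|\boldsymbol{\psi}\|^2)=\mu^2+N$ under the constraints $\|\mathbf{f}\|=1$ and $|\boldsymbol{\psi}_k|=1$. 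Your term-by-term verification that $\mathbf{Z}_2$ collapses to $\mathbf{I}_N$ is just a more explicit rendering of the same bookkeeping.
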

\vspace{-0.3cm}
\begin{proof}
For $\mathbf{{H}} = \mathbf{\Tilde{H}}_{\rm W}$, $\mathbf{h} = \mathbf{\Tilde{h}}_{\rm W}, \mathbf{g} = \mathbf{\Tilde{g}}_{\rm W}$, the parameters given in \eqref{parameters_m_sigma} become $m = 0$ and $\|\mathbf{f}\|^2\left(\mu^2 + \|\boldsymbol{\psi}\|^2\right)$. Further, by substituting $\mathbf{f}_{\rm opt}$ and $\mathbf{\boldsymbol{\psi}_{\rm opt}}$ and simplifying, we obtain \eqref{OptProb_R3_IID}. 
\end{proof}
\vspace{-0.3cm}
\begin{corollary}
    For any $\mathbf{f}\in\mathcal{S}_\mathbf{f}$ and $\boldsymbol{\psi}\in\mathcal{S}_\mathbf{\boldsymbol{\psi}}$,  EC  under {\rm i.i.d.} Rayleigh-Rayleigh fading can be determined approximately using \eqref{ergodic capacity} with ${\rm P_{out}}(\beta)$ given in Corollary \ref{cor:outage_R3IID}.
\end{corollary}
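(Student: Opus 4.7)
The proof plan is essentially one line of substitution, since the corollary merely combines an already-derived outage expression with the ergodic-capacity identity \eqref{ergodic capacity}. My approach would be to take the generic relation
\begin{equation*}
{\rm C} \;=\; \frac{1}{\ln 2}\int_0^\infty \frac{1-{\rm P_{out}}(u)}{1+u}\,{\rm d}u
\end{equation*}
and plug in the outage expression from Corollary~\ref{cor:outage_R3IID}, namely ${\rm P_{out}}(u)\approx 1-Q_1\!\bigl(0,\sqrt{u/\gamma}/\sqrt{\sigma/2}\bigr)$ with $\sigma^2=\mu^2+N$. This yields
\begin{equation*}
{\rm C} \;\approx\; \frac{1}{\ln 2}\int_0^\infty \frac{1}{1+u}\,Q_1\!\left(0,\frac{\sqrt{u/\gamma}}{\sqrt{\sigma/2}}\right){\rm d}u,
\end{equation*}
which matches the form asserted by the corollary.

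If a more explicit closed form is desired, the first step would be to exploit the fact that for the zero-mean case the Marcum $Q$-function degenerates to a simple exponential, $Q_1(0,b)=e^{-b^{2}/2}$. Substituting this identity collapses the integrand to $e^{-u/(\gamma\sigma)}/(1+u)$, which is a textbook exponential integral and can be written in closed form as $e^{1/(\gamma\sigma)}E_1\!\bigl(1/(\gamma\sigma)\bigr)$, where $E_1(\cdot)$ denotes the standard exponential integral. One would then divide by $\ln 2$ to convert into bits per channel use.

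There is no genuine obstacle in this proof: the work has already been done upstream in Corollary~\ref{cor:outage_R3IID}, itself inherited from the complex-Gaussian characterisation of $\xi_1,\xi_2$ established in Appendix~\ref{AppB}. The only subtlety worth flagging is that the word \emph{approximately} propagates from the earlier corollary, because $\xi_1$ is only approximately complex Gaussian under the CLT-style argument used for the channel vectors; since this approximation was already validated, no new estimate is required here. The proof therefore reduces to citing \eqref{ergodic capacity} and Corollary~\ref{cor:outage_R3IID}, and optionally pointing out the exponential-integral simplification for numerical evaluation.
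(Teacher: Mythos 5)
Your proposal is correct and matches the paper's (implicit) argument exactly: the corollary is proved by direct substitution of the outage expression from Corollary \ref{cor:outage_R3IID} into the identity \eqref{ergodic capacity}, which is all the paper does. Your additional observation that $Q_1(0,b)=e^{-b^2/2}$ reduces the integral to $e^{1/(\gamma\sigma)}E_1\left(1/(\gamma\sigma)\right)/\ln 2$ is a valid extra simplification not present in the paper, but it does not change the nature of the proof.
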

\vspace{-0.3cm}
The optimal transmit beamforming vector $\mathbf{f}_{\rm opt}$, RIS phase shift matrix $\mathbf{\Phi}_{\rm opt}$, and the achievable OP along with its parameters $(m,\sigma^2)$ are summarized in Table \ref{OP_Table} for various fading cases studied in the above subsections. For easy referencing, we refer the Rician-Rician, Rician-Rayleigh, and Rayleigh-Rayleigh fading cases as R1, R2 and R3, respectively. The rows associated with fading cases that have algorithmic and closed-form beamforming solutions are highlighted in different colors.
It may be noted that the parameters $m$ and $\sigma^2$ are  useful to determine the mean SNR as $$\Gamma(\mathbf{f},\mathbf{\Phi})=\mathbb{E}[|\xi_1+\mu\xi_2|^2]=\sigma^2+|m|^2.$$
\begin{table}[h]
\vspace{-.4cm}
\caption{Summary of optimal beamforming and outages}
\label{OP_Table}
\vspace{-0.3cm}
\centering
\fontsize{19pt}{19pt}
\renewcommand{\arraystretch}{2.5}
\centering
\resizebox{\textwidth}{!}{
\begin{tabular}{|c|c|c|c|c|}

\hline
\multicolumn{1}{|c|}{} & \multicolumn{2}{|c|}{\textbf{Optimal Beamforming}} & \multicolumn{2}{|c|}{\textbf{Outage Probability} ${\rm P_{out}}(\beta)=1-Q_1\left(\frac{|m|}{\sqrt{\sigma/2}},\frac{\sqrt{\beta/\gamma}}{\sqrt{\sigma/2}}\right)$} \\
\cline{2-5}
\textbf{Fading Scenario}& {$\mathbf{f}_{\rm opt}$} &{$\mathbf{\Phi_{\rm opt}}$} & {$m$} & {$\sigma^2$}  \\

\hline
\multicolumn{1}{|c|}{\textbf{R1 Corr}} & \multicolumn{2}{|c|}{\Large{Algorithm \autoref{Alg1}}}&  $\kappa_l^2\mathbf{\Bar{h}}^T \mathbf{\Phi}_{\rm opt}\Bar{\mathbf{H}}\mathbf{f}_{\rm opt}  + \mu \kappa_l \Bar{\mathbf{g}}^T\mathbf{f}_{\rm opt}$ & $\kappa_l^2 \kappa_n^2 \boldsymbol{\psi}_{\rm opt}^H \mathbf{Z}_1 \boldsymbol{\psi}_{\rm opt} + \kappa_n^2 ( \mu^2 + \boldsymbol{\psi}_{\rm opt}^H  \mathbf{Z}_2 \boldsymbol{\psi}_{\rm opt})\mathbf{f}_{\rm opt}^H \mathbf{R_{\rm BT}} \mathbf{f}_{\rm opt}$\\

\hline
\multicolumn{1}{|c|}{\textbf{R1 IID} } & \multicolumn{2}{|c|}{\Large{Algorithm}  \ref{Alg2}}& $\kappa_l^2\mathbf{\Bar{h}}^T \mathbf{\Phi}_{\rm opt}\Bar{\mathbf{H}}\mathbf{f}_{\rm opt}  + \mu \kappa_l \Bar{\mathbf{g}}^T\mathbf{f}_{\rm opt}$ & $\kappa^2_l \kappa^2_n \|\Bar{\mathbf{H}}\mathbf{f}_{\rm opt}\|^2 + \kappa^2_n ( \mu^2 + N(\kappa^2_l  + \kappa^2_n))$\\ \hline

\rowcolor{lavender}
\textbf{R1 IID LB} & $\mathbf{v_Z}$ & $ \frac{\mathbf{\bar{g}}^T \mathbf{v_Z}}{|\mathbf{\bar{g}}^T \mathbf{v_Z}|}\frac{\mathbf{v_Z}^H\mathbf{E}^H}{|\mathbf{e}_n \mathbf{v_Z}|}$  & $N \kappa_l^2 \mathbf{e}_n \mathbf{v_Z} + \mu \kappa_l \Bar{\mathbf{g}}^T \mathbf{v_Z}$ & $N \kappa_n^2 (1 + \kappa_l^2 |\mathbf{e}_n \mathbf{v_Z}|^2) + \mu^2 \kappa_n^2$ \\ 

\hline
\multicolumn{1}{|c|}{\textbf{R2 Corr}} & \multicolumn{2}{|c|}{\Large{Algorithm \autoref{Alg1}} with \mbox{\normalsize $\mathbf{F}=\mathbf{F}_{\rm s}$, $\mathbf{A}=\mathbf{A}_{\rm s}$,  $\mathbf{\Psi}=\boldsymbol{\psi\psi}^H$} } & $\kappa_l^2\mathbf{\Bar{h}}^T \mathbf{\Phi}_{\rm opt}\Bar{\mathbf{H}}\mathbf{f}_{\rm opt}$ & $\kappa_l^2 \kappa_n^2 \boldsymbol{\psi}_{\rm opt}^H \mathbf{Z}_1 \boldsymbol{\psi}_{\rm opt} +  ( \mu^2 + \kappa_n^2 \boldsymbol{\psi}_{\rm opt}^H \mathbf{Z}_2 \boldsymbol{\psi}_{\rm opt})\mathbf{f}_{\rm opt}^H \mathbf{R_{\rm BT}} \mathbf{f}_{\rm opt}$ \\

\rowcolor{lavender}
\hline
\textbf{R2 IID} & $\frac{\mathbf{a}_M^H(\mathbf{\theta}_{\rm bd}^{i})}{\sqrt{M}}$ & $e^{-j\angle{{\rm diag}(\mathbf{\Bar{h}})\mathbf{a}_N(\mathbf{\theta}_{ra})}}$ & $N \sqrt{M} \kappa_l^2$ & $(M + 1) N\kappa_l^2 \kappa_n^2 + N \kappa_n^4 + \mu^2$ \\ \hline

\rowcolor{lavender}
\textbf{R3 Corr} & $\mathbf{v_{R_{\rm BT}}}$ & $\mathbf{1}_{\rm N} e^{j{\theta}}$ & $0$ & $\lambda_{\mathbf{R}_{\rm BT}} (\mu^2 + \|\mathbf{R}\|_F^2)$  \\ \hline

\rowcolor{lavender}
\textbf{R3 IID} & $\mathbf{f}_{\rm opt} \in \mathcal{S}_\mathbf{f}$ & $\boldsymbol{\psi}_{\rm opt} \in \mathcal{S}_{\boldsymbol{\psi}}$ & $0$ & $\mu^2 + N$  \\
\hline

\multicolumn{5}{|c|}{\textbf{*}~~$\mathbf{Z}_1 = \mathbf{R}_{\rm RT}\odot\mathbf{\Bar{H}f_{\rm opt}f_{\rm opt}}^H\mathbf{\Bar{H}}^H,~\mathbf{Z}_2 = \mathbf{R}_{\rm RR}\odot( \kappa_n^2\mathbf{R}_{\rm RT}+\kappa_l^2\mathbf{\Bar{h}}^*\mathbf{\Bar{h}}^T),~\lambda_{\mathbf{R}_{\rm BT}} = \lambda_{\rm max}\{\mathbf{R}_{\rm BT}\},$ $\mathbf{f}_{\rm opt}~\&~\boldsymbol{\psi}_{\rm opt}$ \Large{are obtained using the corresponding algorithms,}} \\

\multicolumn{5}{|c|}{\hspace{-13cm}$\mathbf{Z} ~~~=(N^2\kappa_l^4+N\kappa_l^2\kappa_n^2)\mathbf{e}_n^H\mathbf{e}_n + 2N\mu\kappa_l^3 \bar{\mathbf{g}}^*\bar{\mathbf{g}}^T$, \Large{and}~$\mathbf{e}_n=\mathbf{E}_{n,:}$ is the $n$-th row of $\mathbf{E} = {\rm diag}(\mathbf{\bar{h}})\mathbf{\Bar{H}}.$} \\
\hline
\end{tabular}}
\vspace{-0.5cm}
\end{table}
The following remarks present important insights derived using the  summary given in Table \ref{OP_Table} .
\begin{remark}\label{remark1}
From Table \ref{OP_Table}, it can be noted that the parameter $m=\mathbb{E}[\xi_1+\mu\xi_2]$ is equal to zero for Rayleigh-Rayleigh (R3) fading case as the coefficients of channels  $\mathbf{g}$, $\mathbf{h}$, and $\mathbf{H}$ are zero-mean complex Gaussian distributed. 
Therefore, the mean SNR is given by  
$\Gamma(\mathbf{f},\mathbf{\Phi})=\mathbf{E}[|\xi_1+\mu\xi_2|^2]=\sigma^2$. However, the  maximum mean SNRs achievable via SCSI-based optimal beamforming corresponding to  {\rm i.i.d.} and correlated scenarios of R3 fading case are $\mu^2+N$ and $\lambda_{\mathbf{R}_{\rm BT}}(\mu^2+\|\mathbf{R}\|_F^2)$. 
Thus, using $\lambda_{\mathbf{R}_{\rm BT}}>1$  and $\|\mathbf{R}\|_F^2>N$, it is safe to conclude that the maximum mean SNR under correlated scenario is higher than it is under {\rm i.i.d} scenario.  This is quite evident as the optimal beamforming scheme for correlated scenarios can leverage the information of covariance matrices to maximize the mean SNR.  
However, under {\rm i.i.d.} scenario, the mean SNR does not rely on the choice of beamforming vectors because of the  circularly symmetric fading. Therefore, we can say that the {\rm i.i.d.} and  fully correlated fadings are the extreme scenarios wherein the corresponding achievable mean SNR realizes its extreme values such that
$$\mu^2+N\leq \Gamma(\mathbf{f}_{\rm opt},\mathbf{\Phi}_{\rm opt})\leq M(\mu^2+N^2),$$
where the upper bound corresponds to the fully correlated scenario for which $\lambda_{\mathbf{R}_{\rm BT}}=M$ and $\|\mathbf{R}\|_F^2=N^2$. It is interesting to note  that the maximum mean SNR under correlated R3 fading increases with order  between linear and quadratic in  the number of RIS elements
$N$.  
\end{remark}
\vspace{-0.4cm}
\begin{remark}\label{remark2}
    Using the parameters $m$ and $\sigma^2$ given in Table \ref{OP_Table}, the maximum  mean SNRs achievable via SCSI-based optimal beamforming under {\rm i.i.d.} scenario can be determined as 
    \begin{align*}
       \Gamma(\mathbf{f}_{\rm opt},\mathbf{\Phi}_{\rm opt})= \begin{cases}
            | N \kappa_l^2 \mathbf{e}_n \mathbf{v_Z} + \mu \kappa_l \Bar{\mathbf{g}}^T \mathbf{v_Z}|^2 + N \kappa_n^2 (1 + \kappa_l^2 |\mathbf{e}_n \mathbf{v_Z}|^2) + \mu^2 \kappa_n^2, ~~&\text{for R1 (LB) case}\\
            N^2M\kappa_l^4+ N((M+1)\kappa_l^2\kappa_n^2 + \kappa_n^2)+\mu^2, ~~&\text{for R2 case}\\
            \mu^2+N, ~~&\text{for R3 case}
        \end{cases}%
    \end{align*}
    
The maximum mean SNRs under R1 and R2 cases increase quadratically with the number of RIS elements $N$, while  it increases linearly with $N$ under R3. This is because R1 and R2 cases include LoS components which allow efficient selection of beamformer for greater gains. In contrast, the absence of LoS components in R3 only increases the number of paths with $N$, resulting in linear improvement in mean SNR with $N$. In other words, the inclusion of deterministic LoS components in R1 and R2 increases the mean $m$ and variance $\sigma^2$  of $\xi_1 + \mu \xi_2$ compared to the R3 case, as can be verified from Table \ref{OP_Table}. This allows the beamformer to efficiently maximize $\Gamma(\mathbf{f},\mathbf{\Phi})=\mathbb{E}[|\xi_1+\mu\xi_2|^2]=\sigma^2+|m|^2$ further w.r.t $\mathbf{f}$ and $\mathbf{\Phi}$. \newline
In addition, it can be seen that the maximum mean SNR under the R1 case depends on the dot products of vectors $\mathbf{e}_n$ and $\mathbf{\bar{g}}$ with $\mathbf{v_Z}$, which are determined by the AoDs/AoAs of the LoS components in the direct and indirect links. 
Furthermore, as expected, the maximum mean SNR in R2 reduces to that of R3 as $K\to 0$, i.e. as the significance of LoS component in R2 diminishes.
       
\end{remark}
\vspace{-0.7cm}
\section{Numerical Results and Discussion}\vspace{-.1cm}
This section presents the performance of the proposed SCSI-based beamforming schemes for different fading scenarios and their comparisons with the PCSI-based SNR maximization scheme. 
First, we verify the derived outage performances of these schemes via simulations for both {\rm i.i.d.} and correlated scenarios. Then, we will discuss their achievable ECs for various system parameters. For the numerical analysis, we set the parameters as follows: the number of BS antennas $M = 4,$ number of RIS elements $N = 32$, Rician factor $K = 2$, the AoD from  BS to  the user via  direct link $\theta_{\rm{bd}}^{\rm{d}} = 0^\circ$, the AoD from  BS to  RIS $\theta_{\rm{ra}} = \pi/4$, the AoD from  RIS to  user $\theta_{\rm{rd}} = 8 \pi/5$, $\gamma = 1$, and PLR $\mu = 5~\rm{dB}$ unless mentioned otherwise. 
\begin{figure}[ht!]
\centering\vspace{-.5cm}
\begin{minipage}{.4\textwidth}
  \centering
  \includegraphics[width=\textwidth]{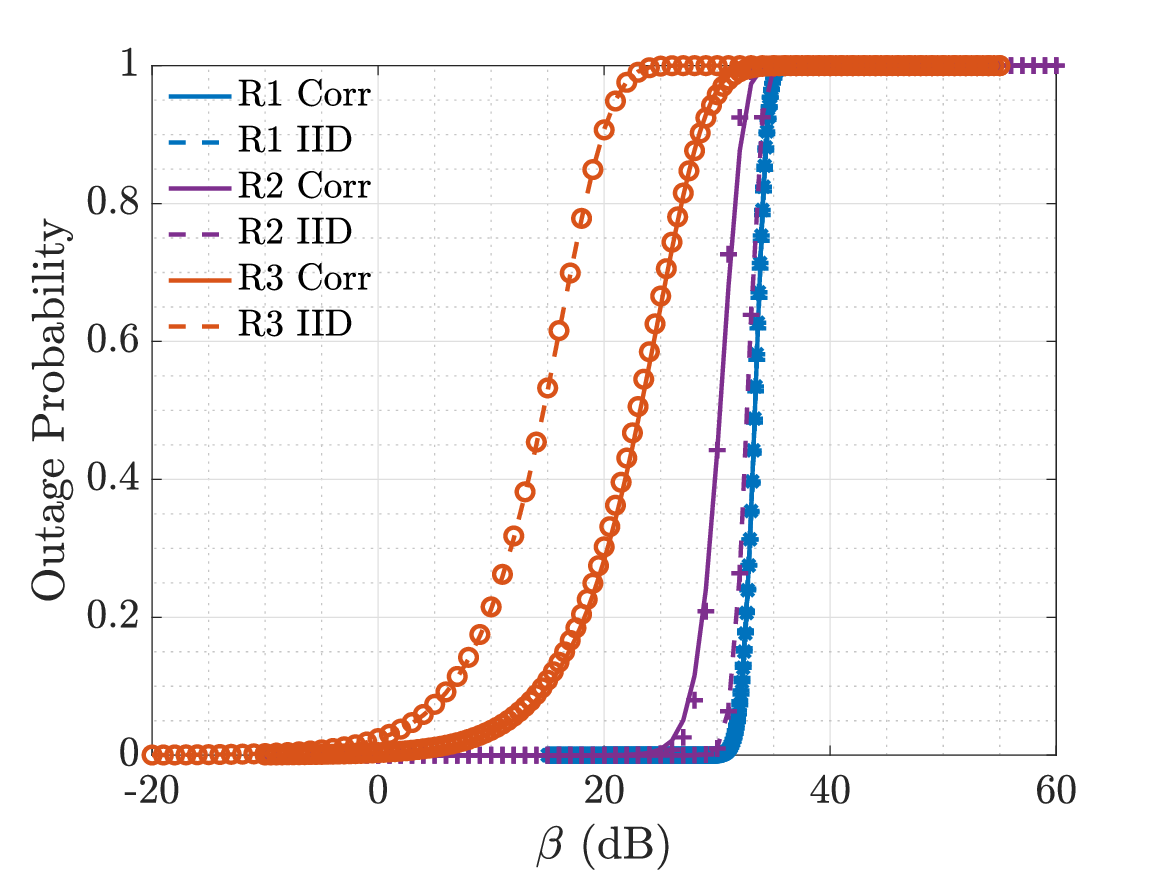}
\end{minipage}%
\begin{minipage}{.4\textwidth}
  \centering
  \includegraphics[width=\textwidth]{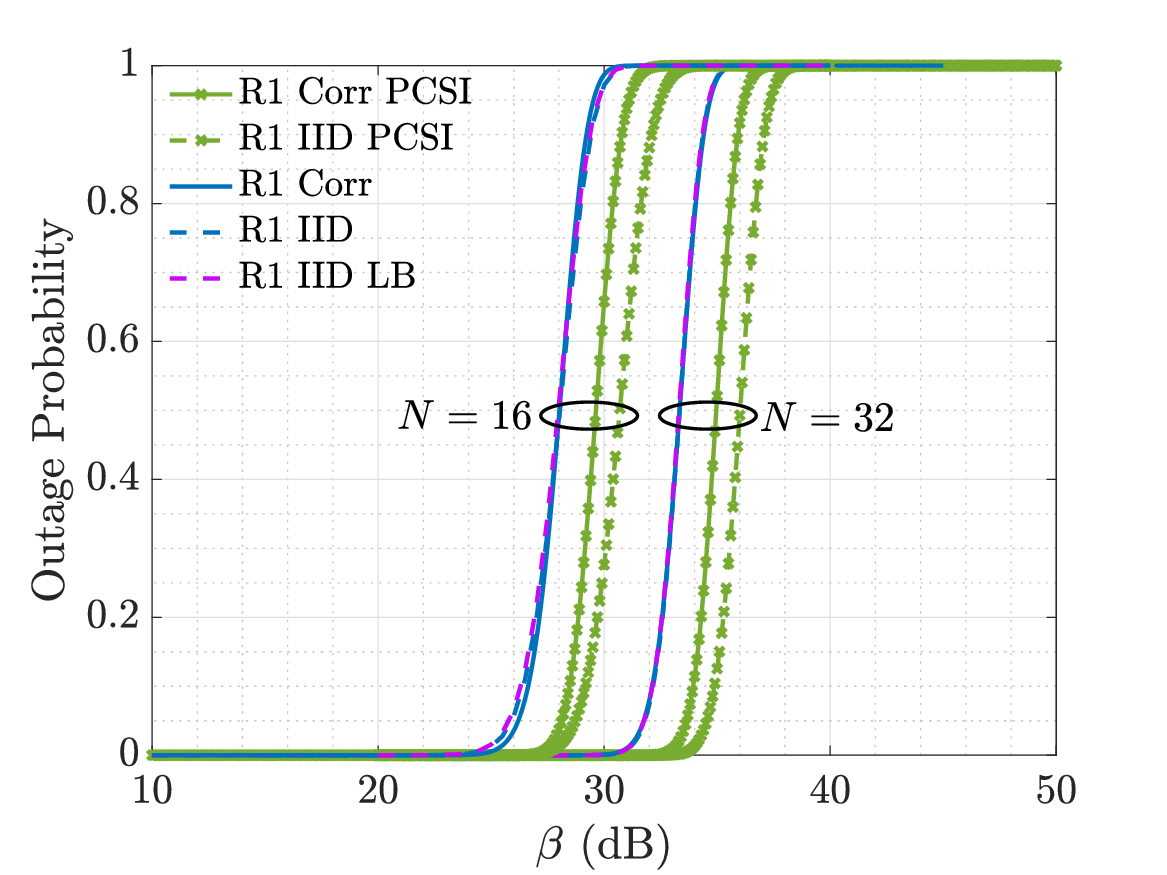} 
\end{minipage}\vspace{-.2cm}
\caption{Left: OP verification for R1, R2, R3 cases. Right: LB OP accuracy for the R1 case and its comparison with the PCSI scheme. Markers represent the simulation results, whereas the solid lines represent the analytical results.}\vspace{-.5cm}
\label{Fig1}
\end{figure}

Fig. \ref{Fig1} shows the outage performance of the proposed beamforming schemes for different fading scenarios. It may be noted that OPs for R1 {\rm i.i.d.} lower bound, R2 {\rm i.i.d.}, and R3 are derived in \eqref{Pout_R1_IID}, \eqref{Pout_R2_IID}, and \eqref{Pout_R3} respectively. However, OPs for R1 Corr and R2 Corr are derived for given transmit beamformer $\mathbf{f}$ and RIS phase shift matrix $\mathbf{\Phi}$ (refer to Tabel \ref{OP_Table}). Thus, we use the statistically optimal $\mathbf{f}$ and $\mathbf{\Phi}$ obtained through the proposed algorithms to evaluate the outage for the corresponding scenarios. Fig. \ref{Fig1} (Left) shows the accuracy of the derived approximate OP expressions under all fading scenarios where they closely match the simulation results. It can be observed that OP improves under R3, R2, and R1 cases in order. This improvement is because the statistically fixed beamformer performs better in the presence of strong LoS components, which is the case in R1 and R2. Furthermore, it can be seen that the correlated fading scenario provides better OP than {\rm i.i.d.} under R3, whereas the trend is  opposite under R2. This is because the statistical beamformer can exploit the correlated channel under R3, as is evident from \autoref{OP_Table}. In the presence of LoS under R2, the statistically fixed beamformer can utilize the independent fading along with the direct link for higher gains. However, OP is equal for both correlated and {\rm i.i.d.} scenarios under R1. Fig. \ref{Fig1} (Right) verifies the LB OP accuracy derived for the R1 {\rm i.i.d.} case and compares it with the PCSI scheme. The figure shows that the outage LB is tight for $N = 16$ and $N = 32$. As noted above, the SCSI-based scheme performs equally good for correlated and {\rm i.i.d.} scenarios under R1 fading. However, the PCSI scheme provides a better outage under the {\rm i.i.d.} scenario. This is because the instantaneous beamformer under the PCSI scheme with {\rm i.i.d.} scenario utilizes the large spectral range of the channel matrix. 
 \begin{figure}[ht!]
\centering\vspace{-.5cm}
\begin{minipage}{0.4\textwidth}
  \centering
  \includegraphics[width=\textwidth]{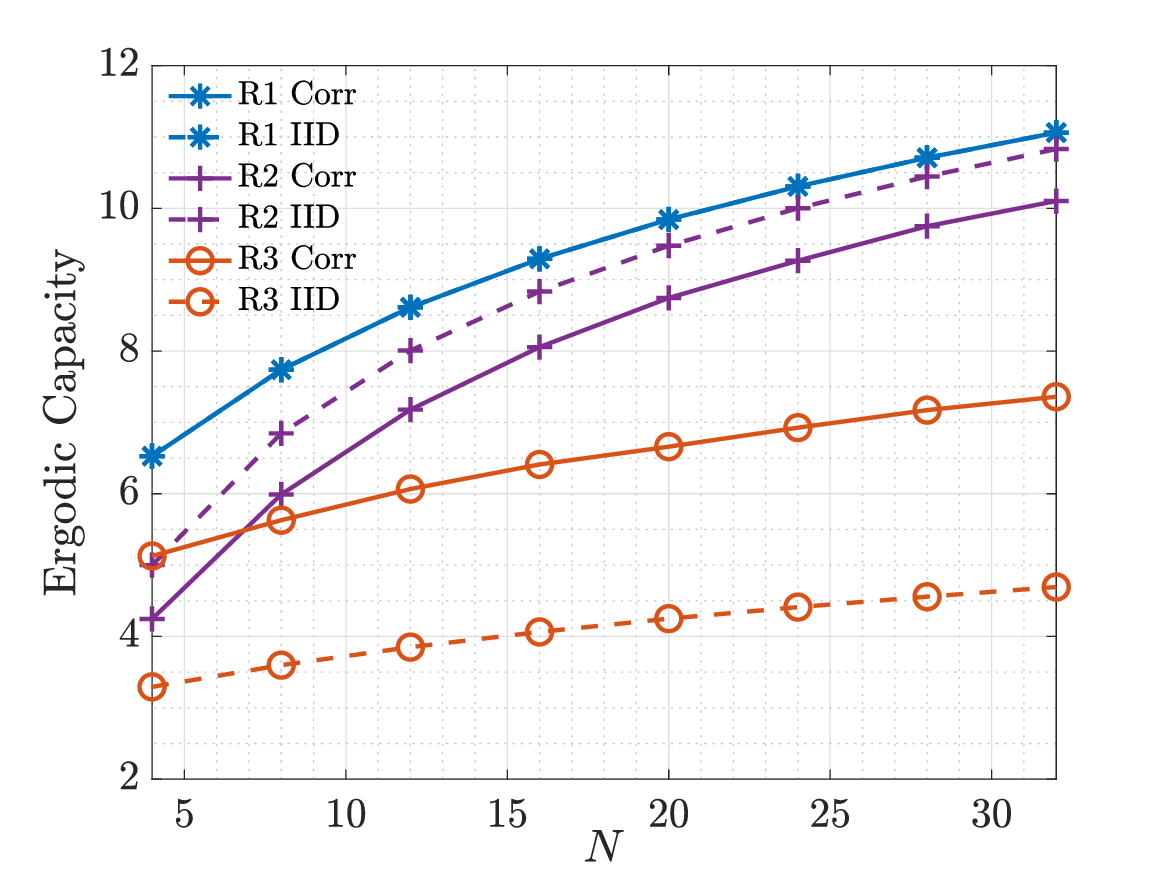}
\end{minipage}%
\begin{minipage}{0.4\textwidth}
  \centering
  \includegraphics[width=\textwidth]{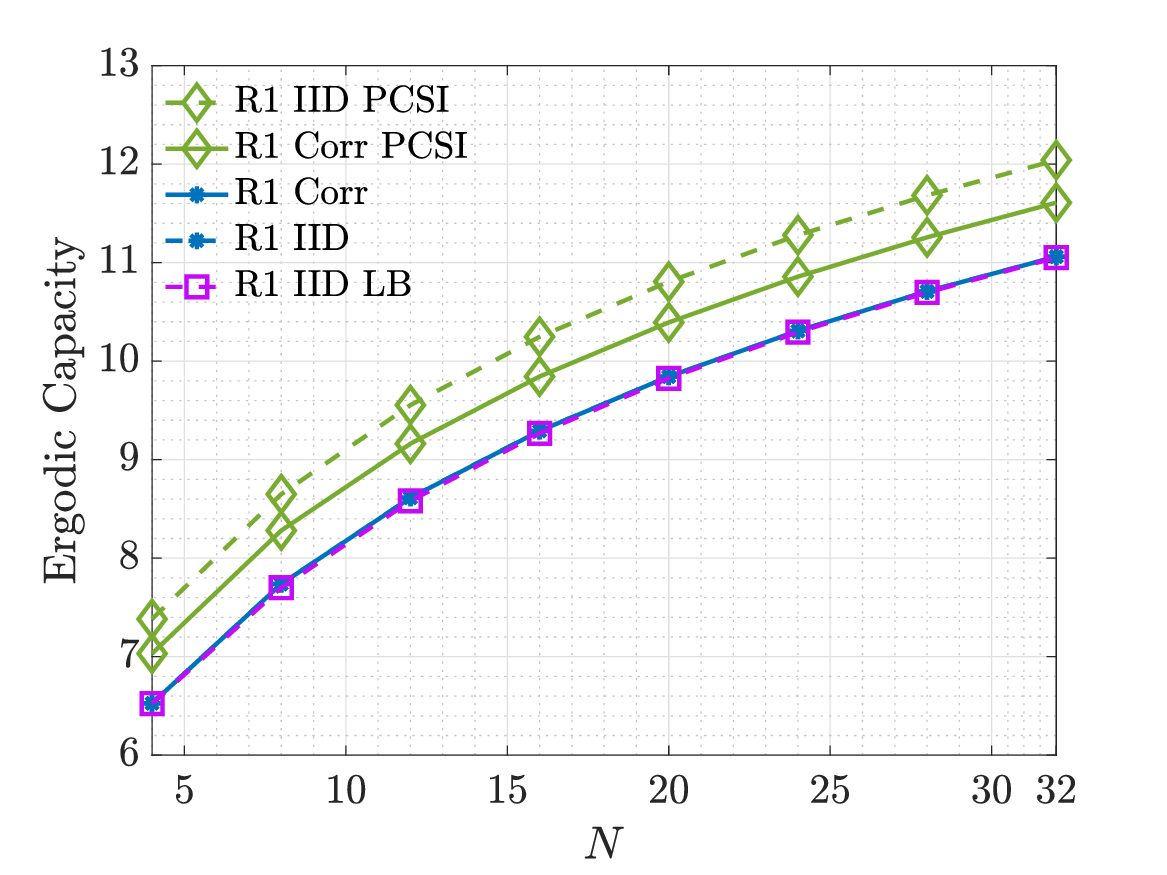}
\end{minipage}\vspace{-.2cm}
\caption{Left: EC vs. $N$ under R1, R2, R3 cases. Right: LB EC accuracy for R1 case and its comparison with the PCSI scheme.}\vspace{-.5cm}
\label{Fig2}
\end{figure}

Fig. \ref{Fig2} shows EC performance with respect to the number of RIS elements $N$ under R1, R2, and R3 cases. In Fig. \ref{Fig2} (Left), we see that EC increases with $N$ in all the cases, as expected. Particularly, the capacity increases rapidly in R1 and R2 cases compared to R3. 
This may be because the SCSI-based scheme  for R1 and R2 cases almost follows  performance trends similar to  the PCSI-based scheme for which the SNR is known to improve quadratically with $N$. Therefore, one can expect that the mean SNR will also improve quadratically in $N$ under R1 and R2 cases. In fact, this is shown to be the case in Remark \ref{remark2} for {\rm i.i.d.} scenario. However, the mean SNR in R3 improves with order between linear  and quadratic   in terms of $N$, as can be verified using Remarks \ref{remark1}.
Further, it can be observed that the performance difference between R1 and R2 saturates with the increase in $N$. This is because the large $N$ compensates for the loss due to the absence of LoS components along the direct link in R2. In other words, EC gain is contributed by two factors: 1) the number of reflecting elements $N$ and 2) the presence of LoS components. As both of these factors are present in the R1 case, it naturally outperforms the R2 and R3 cases. However, in R2, the LoS component along the direct link is absent, which reduces its capacity performance relatively. Nonetheless, the huge gain achieved by increasing $N$ helps to compensate for this relative loss. Therefore, the difference in gains under R1 and R2 saturates. Furthermore, it can be observed that the capacity performance trends in terms of the correlated and {\rm i.i.d.} scenarios for R1, R2, and R3 cases are similar to their outage performances as discussed in Fig. \ref{Fig1}. Fig. \ref{Fig2} (Right) verifies that the LB EC derived in \eqref{Pout_R1_IID} for the R1 {\rm i.i.d.} case is tight for a wide range of $N$. Moreover, it can be observed that the PCSI scheme provides better capacity than the SCSI scheme, as expected. However, unlike the correlated and {\rm i.i.d.} scenarios of R1 having equal performances under the SCSI-based scheme, R1 {\rm i.i.d.} performs better than R1 correlated case under the PCSI-based scheme. 
\begin{figure}[ht!]
\centering\vspace{-.4cm}
\begin{minipage}{.4\textwidth}
  \centering
  \includegraphics[width=\textwidth]{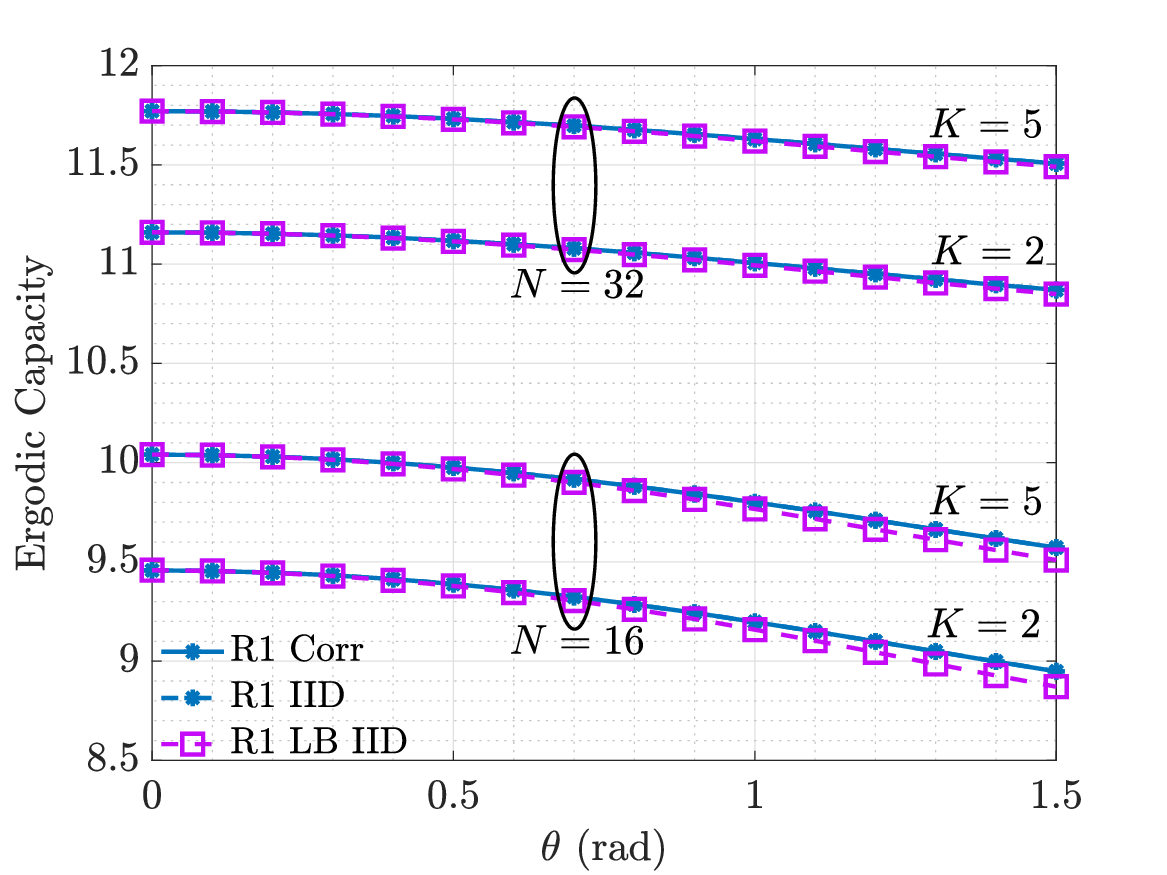}
\end{minipage}%
\begin{minipage}{.4\textwidth}
  \centering
  \includegraphics[width=\textwidth]{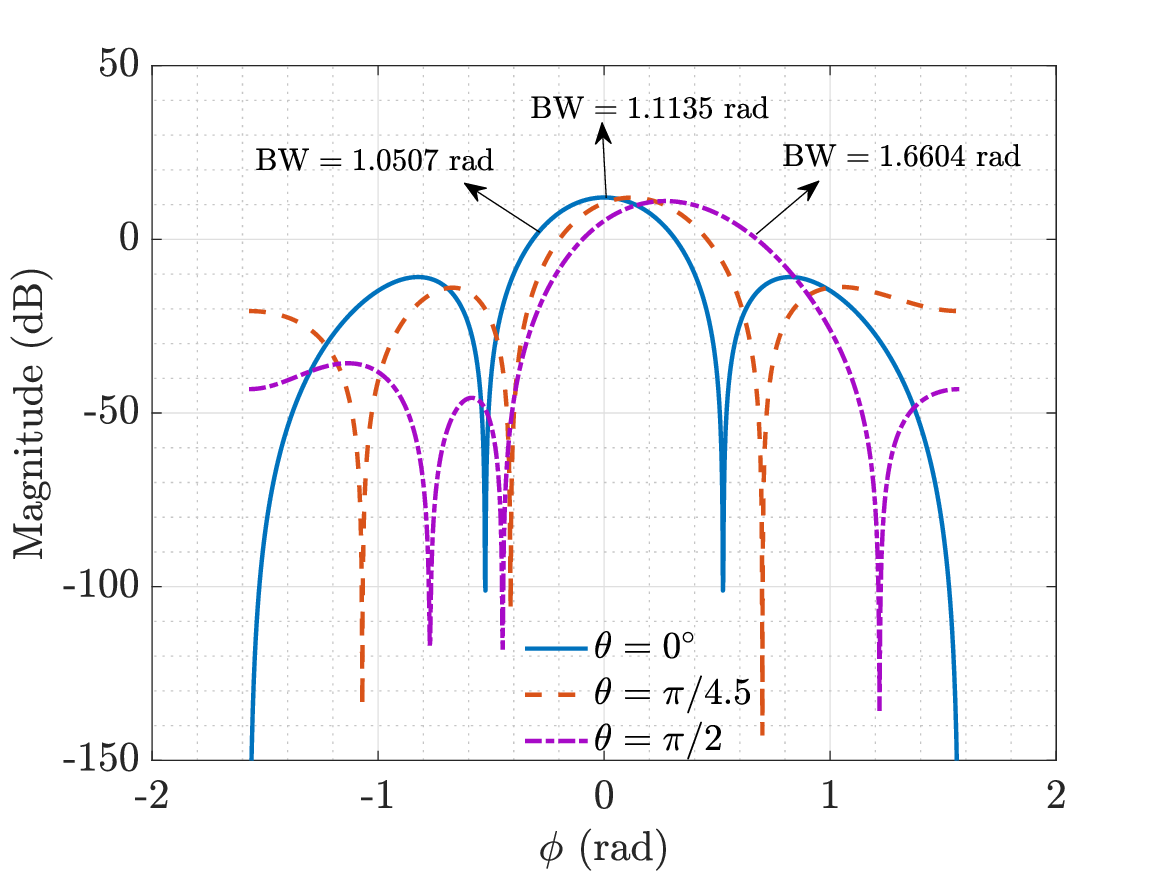}
\end{minipage}\vspace{-.2cm}
\caption{Left: EC vs. $\theta$ under R1 for correlated and {\rm i.i.d.} fading scenarios. Right: Radiation pattern for $\mathbf{f}_{\rm opt}$ with $N = 32$.}\vspace{-.6cm}
\label{Fig3}
\end{figure}

Fig. \ref{Fig3} (Left) shows that EC deteriorates  as $\theta$ increases where $\theta = |\theta_{\rm{DBR}} - \theta_{\rm{DBU}}|$ is the difference between AoDs of the direct and indirect links from the BS. This is because the optimal beamformer $\mathbf{f}$ forms a narrow lobe when $\theta$ is small. However, for larger $\theta$, the optimal beamformer $\mathbf{f}$ has a relatively wider beam 
to exploit the gains from both the direct and indirect links. This can also be observed from Fig. \ref{Fig3} (Right), where we plot the radiation pattern for an optimal transmit beamformer. The figure verifies that the main lobe beamwidth gradually increases in $\theta$ reducing the overall array gain. However, as the indirect link strength improves by increasing $N$, it can be seen that the drop in capacity with increasing $\theta$ becomes less severe. Further, it can  be observed that the EC LB is tight for any $\theta$, especially for large $N$. 
\begin{figure}[ht!]
\centering\vspace{-.4cm}
\begin{minipage}{.33\textwidth}
  \centering
  \includegraphics[width=\textwidth]{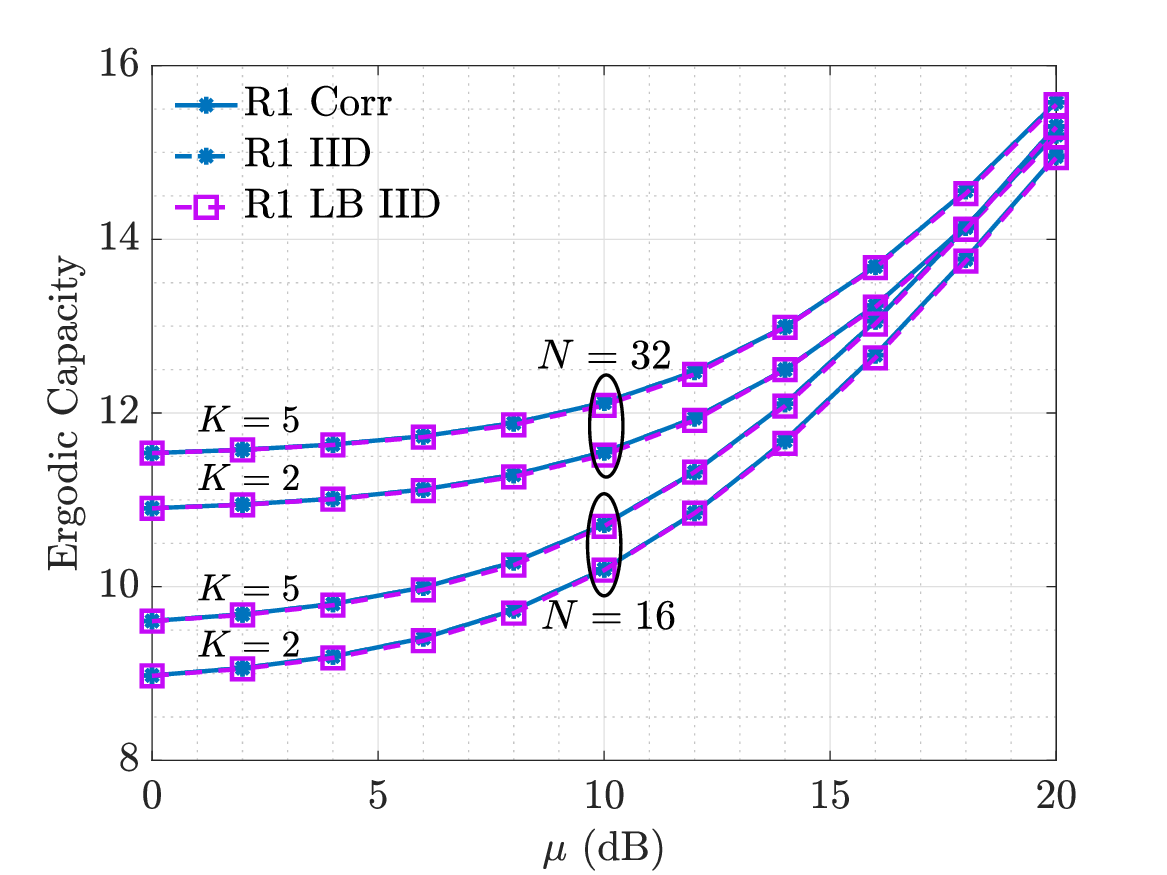}
\end{minipage}%
\begin{minipage}{.33\textwidth}
  \centering
  \includegraphics[width=\textwidth]{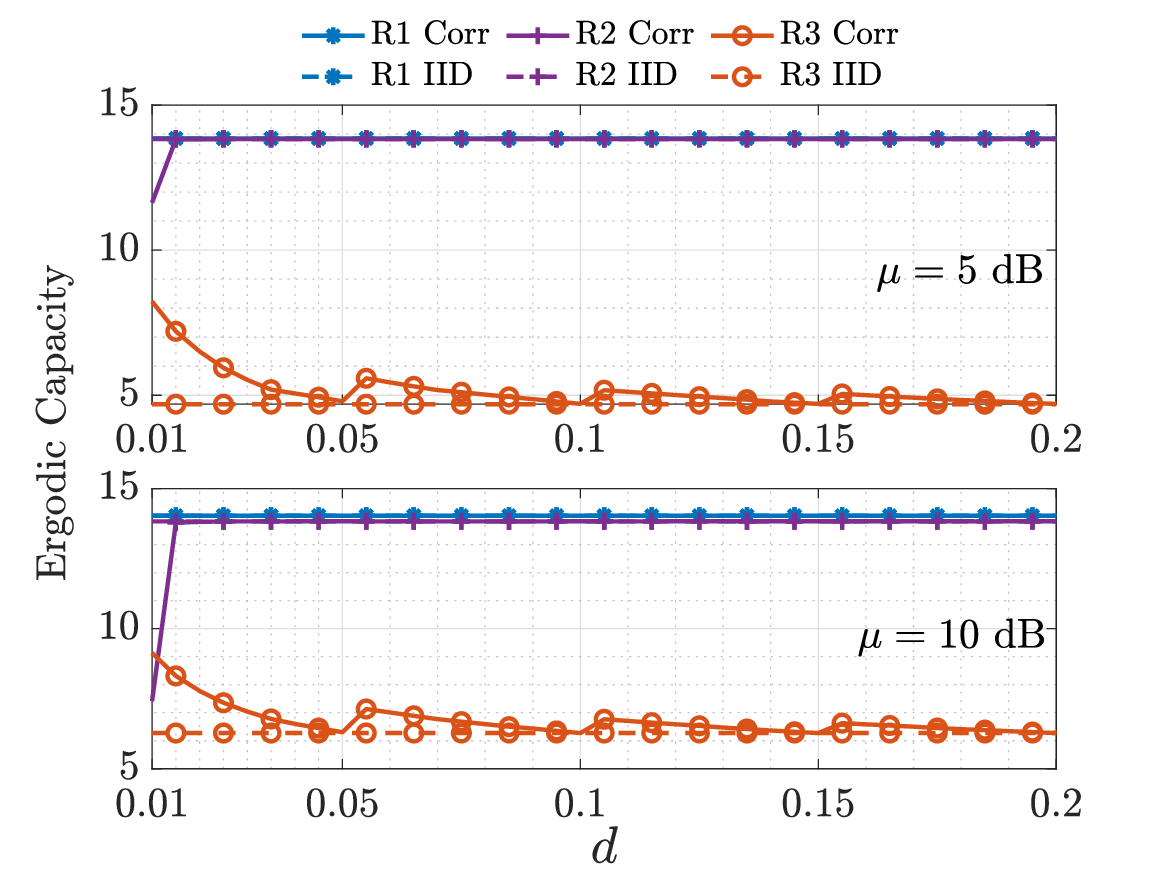}
\end{minipage}\vspace{-.3cm}
\begin{minipage}{.33\textwidth}
  \centering
  \includegraphics[width=\textwidth]{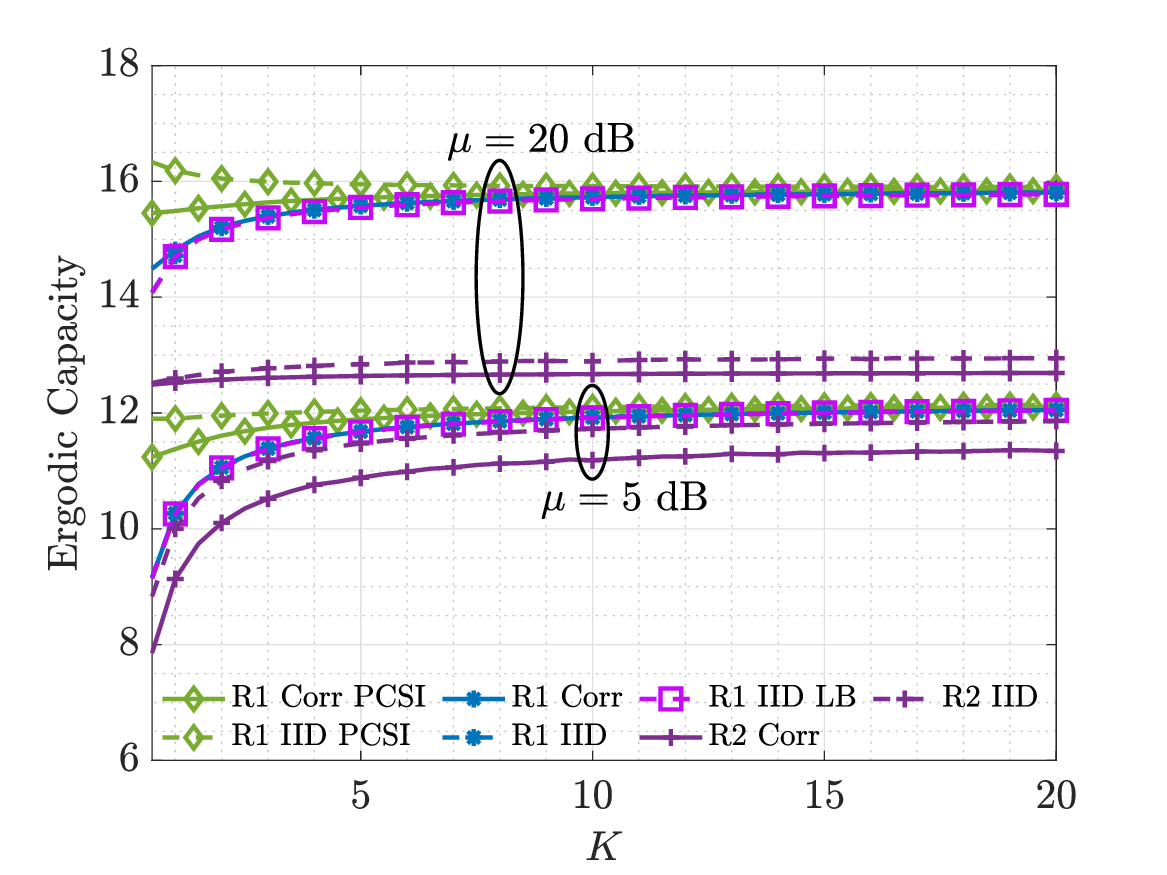}
\end{minipage}
\caption{Left: EC vs. the PLR $\mu$. Center: EC vs. distance between antennas $d$ for $M = 32$. Right: EC vs. Rician factor $K$.}\vspace{-.6cm}
\label{Fig4}
\end{figure}

Fig. \ref{Fig4} (Left) shows EC as a function of PLR $\mu$. It can be seen that the capacity increases with the increase in $\mu$. However, it can also be seen that the SCSI-based scheme performs equally when $\mu$ is large. This is mainly because the strong direct link (i.e., high $\mu$) provides enough capacity such that the presence of RIS cannot improve the mean SNR  further. Furthermore, it may be noted that the LB is accurate for a wide range of $\mu$.  
Fig. \ref{Fig4} (Center) shows the impact of the distance $d$ between consecutive antennas on EC performance under the SCSI-based scheme. Note that the correlation between the antenna elements depends on $d$ (refer to Section \ref{channel model}). Thus, Fig. \ref{Fig4} (Right) captures the impact of correlated fading on the performance of EC. First, we observe that EC under R1 correlated case remains constant as $d$ increases and is similar to the performance of the R1 {\rm i.i.d.} case. Moreover, the capacity under R3 correlated case reduces with increasing $d$ and finally converges to the capacity achieved under the R3 {\rm i.i.d.} case. This also verifies the fact that the correlated setting performs better than the {\rm i.i.d.} scenario under the R3 case, as pointed out before. Interestingly, we can observe that EC of the R3 correlated case gradually drops in a pattern with increase in $d$ and become exactly equal to capacity under the R3 {\rm i.i.d.} case when $d$ is equal to $\frac{\lambda}{2},~\lambda,~\frac{3\lambda}{2},~2 \lambda$. This is because the correlation matrix is constructed using a model based on the Sinc function which goes to zero for the above values of $d$ (meaning the correlated case reduces to  {\rm i.i.d.} for these $d$'s).
Fig. \ref{Fig4} (Right) shows that the EC increases with $K$. This is expected as $K$ indicates the strength of the LoS component that is crucial for achieving capacity under SCSI-based schemes. Furthermore, it can also be seen that the performances of the PCSI- and proposed SCSI-based schemes for the R1 case converge to the same constant value for large $K$. This is because the multi-path fading vanishes as $K \to \infty$, which implies the channel's deterministic nature for which PCSI- and SCSI-based beamforming schemes are the same. The figure also verifies the accuracy of LB EC  for a wide range of $K$.  
\vspace{-0.3cm}
\section{Conclusion}\vspace{-.2cm}
This paper investigates the statistically optimal transmit beamforming and phase shift matrix problem of a RIS-aided MISO  downlink communication system. We considered a generalized fading scenario wherein  both the direct and indirect links consist LoS paths  along with the multi-path components. Moreover, we  considered correlated fading to account for the practical deployment of RISs containing densely packed antenna elements. 
For this setting, we proposed an iterative algorithm  to optimally select the beamformers that  maximize the  mean SNR. In addition, we also derived approximately achievable OP and EC of the proposed algorithm. 
As is usually the case in the literature, this analysis relied on the numerical evaluations of the optimal beamforming solution. In order to get more crisp analytical insights, we further derived closed-form expressions for the optimal beamforming under the absence of LoS components and/or correlated fading along the direct and/or indirect links.     
Our analysis revealed some interesting interrelations between different fading scenarios.
For instance, we have shown that the maximum mean SNR improves linearly/quadratically with the number of RIS elements in the absence/presence of LoS component under {\rm i.i.d.} fading. 
Further, we have analytically shown that the statistically optimal beamforming performs better under correlated fading compared to {\rm i.i.d.} case in the absence of LoS paths. 
Our numerical results also show that the correlated fading is advantageous/disadvantageous compared to the  {\rm i.i.d.} case in the absence/presence of LoS paths. 
\vspace{-1cm}\appendix\vspace{-.5cm}
\subsection{Proof of \eqref{eq:Mean_SNR_R1}}\label{AppA}\vspace{-.3cm}
The mean {\rm SNR} for the channel model described in Section \ref{channel model} can be obtained as
\begin{align}
    &\Gamma(\mathbf{f}, \mathbf{\Phi})
    = \mathbb{E}[(\mathbf{h}^T\mathbf{\Phi Hf} + \mu \mathbf{g}^T\mathbf{f})(\mathbf{h}^T\mathbf{\Phi Hf} + \mu \mathbf{g}^T\mathbf{f})^H]\nonumber\\
     &\stackrel{(a)}{=} |\kappa_l^2\mathbf{\Bar{h}}^T\mathbf{\Phi}\mathbf{\Bar{H}f} + \kappa_l\mathbf{\Bar{g}}^T\mathbf{f}|^2+ \kappa_l^2\kappa_n^2 \mathbb{E}[|\mathbf{\Bar{h}}^T\mathbf{\Phi}\mathbf{\Tilde{H}f}|^2] + \kappa_l^2\kappa_n^2\mathbb{E}[|\mathbf{\Tilde{h}}^T\mathbf{\Phi}\mathbf{\Bar{H}f}|^2] + \kappa_n^4\mathbb{E}[|\mathbf{\Tilde{h}}^T\mathbf{\Phi}\mathbf{\Tilde{H}f}|^2].\label{ESNR}  
\end{align}
where step (a) is obtained by substituting $\mathbf{g}$, $\mathbf{h}$ and $\mathbf{H}$ from \eqref{directlink},\eqref{indirect link_h} and \eqref{indirect link_H} and simplifying. 
To  simplify further, we will use the following two identities.
    \newline$\bullet$ [I-1] For given $\mathbf{A}$ and $\mathbf{X}_{:,i} \in \mathcal{CN}(0, \mathbf{I})$, we have [Reference]
            $\mathbb{E}[\mathbf{X A}\mathbf{X}^H] = \rm{trace}\{\mathbf{A}\}\mathbf{I}.$ 
    \newline$\bullet$ [I-2] For given $\mathbf{A}$ and $\mathbf{B}$, we have
            $${\rm trace}\{\mathbf{\Phi}^H \mathbf{A} \mathbf{\Phi} \mathbf{B}\} \stackrel{(a)}{=} \sum\nolimits_{i,j}\boldsymbol{\psi}^*_i\boldsymbol{\psi}_j\mathbf{A}_{ij}\mathbf{B}_{ji}\stackrel{(b)}{=}\boldsymbol{\psi}^H(\mathbf{A}^T\odot \mathbf{B})\boldsymbol{\psi},$$
            where steps (a) and (b) follow using $\mathbf{\Phi}={\rm diag}(\boldsymbol{\psi})$ and $\boldsymbol{\psi}^H\mathbf{X}\boldsymbol{\psi}=\sum_{i,j}\boldsymbol{\psi}_i^*\mathbf{X}_{ij}\boldsymbol{\psi}_j$, respectively. 

Now, the second term in RHS of \eqref{ESNR} can be simplified as
\begingroup
\allowdisplaybreaks
\begin{align}
    \mathbb{E}[|\mathbf{\Bar{h}}^T\mathbf{\Phi}\mathbf{\Tilde{H}f}|^2] 
    &= \mathbb{E}[\mathbf{f}^H \Tilde{\mathbf{R}}_{\rm BT} \Tilde{\mathbf{H}}^H_W \Tilde{\mathbf{R}}_{\rm RR}\mathbf{\Phi}^H \Bar{\mathbf{h}}^* \Bar{\mathbf{h}}^T \mathbf{\Phi} \Tilde{\mathbf{R}}_{\rm RR} \Tilde{\mathbf{H}}_W \Tilde{\mathbf{R}}_{\rm BT}\mathbf{f}], \nonumber\\
     &\stackrel{(a)}{=}\mathbf{f}^H \Tilde{\mathbf{R}}_{\rm BT}{\rm trace}(\Tilde{\mathbf{R}}_{\rm RR}\mathbf{\Phi}^H \Bar{\mathbf{h}}^* \Bar{\mathbf{h}}^T \mathbf{\Phi} \Tilde{\mathbf{R}}_{\rm RR})  \mathbf{I}\Tilde{\mathbf{R}}_{\rm BT}\mathbf{f}, \nonumber\\
    &= \mathbf{f}^H\mathbf{R}_{\rm BT}\mathbf{f} \times {\rm trace}({\mathbf{R}}_{\rm RR}\mathbf{\Phi}^H \Bar{\mathbf{h}}^* \Bar{\mathbf{h}}^T \mathbf{\Phi}),\nonumber\\
    &\stackrel{(b)}{=}\mathbf{f}^H\mathbf{R}_{\rm BT}\mathbf{f} \times \boldsymbol{\psi}^H({\mathbf{R}}_{\rm RR} \odot \mathbf{\Bar{h}}^*\mathbf{\Bar{h}}^T)\boldsymbol{\psi}.\label{Exp1}
\end{align}
\endgroup
where   step (a) follows from the identity I-1, and the  step (b) from identity I-2. Similarly, using I-1 and I-2, we simplify the third and fourth terms in  the RHS of \eqref{ESNR} as below
\begin{align}
    \mathbb{E}[|\mathbf{\Tilde{h}}^T\mathbf{\Phi}\mathbf{\Bar{H}f}|^2] &=\boldsymbol{\psi}^H(\mathbf{R}_{\rm RT} \odot \mathbf{\Bar{H}f}\mathbf{f}^H\mathbf{\Bar{H}}^H)\boldsymbol{\psi},\label{Exp2}\\
   \text{~~and~~} \mathbb{E}[|\mathbf{\Tilde{h}}^T\mathbf{\Phi}\mathbf{\Tilde{H}f}|^2] 
    &=\mathbf{f}^H\mathbf{R}_{\rm BT}\mathbf{f} \times \boldsymbol{\psi}^H({\mathbf{R}}_{\rm RR} \odot \mathbf{R}_{\rm RT})\boldsymbol{\psi}.\label{Exp3}
\end{align}
Further, substituting \eqref{Exp1}, \eqref{Exp2}, and \eqref{Exp3} in \eqref{ESNR}, we obtain \eqref{eq:Mean_SNR_R1}.
\vspace{-.4cm}
\subsection{Distributions of $\xi_1$ and $\xi_2$}\label{AppB}
We first derive the means and variances of $\xi_1=\mathbf{h}^T\mathbf{\Phi}\mathbf{H f}$ and $\xi_2=\mathbf{g}^T\mathbf{f}$.
 From the definitions of $\mathbf{H}$, $\mathbf{h}$ and $\mathbf{g}$ given in Section \ref{channel model} and their independence, the means of $\xi_1$ and $\xi_2$ becomes
 \begin{align}
       \mu_1=\mathbb{E}[\xi_1] =  \mathbb{E}[\mathbf{h}^T\mathbf{\Phi}\mathbf{H f}]=\kappa_l^2 \Bar{\mathbf{h}} \mathbf{\Phi} \Bar{\mathbf{H}} \mathbf{f},~~\text{and}~~ \mu_2=\mathbb{E}[\xi_2] = \mathbb{E}[\mathbf{g}^T\mathbf{f}]=\kappa_l\mathbf{\bar{g}}^T\mathbf{f}.  \label{appendixB_mu1mu2}
 \end{align}
The variance of $\xi_1$  can be obtained as
 \begin{align}
       \sigma_1^2&=\mathbb{E}[\mathbf{h}^T\mathbf{\Phi}\mathbf{H}\mathbf{f}(\mathbf{h}^T\mathbf{\Phi}\mathbf{H}\mathbf{f})^H]-|\mu_1|^2,\nonumber\\
       &=  \kappa_l^2\kappa_n^2 \mathbb{E}[|\mathbf{\Bar{h}}^T\mathbf{\Phi}\mathbf{\Tilde{H}f}|^2] + \kappa_l^2\kappa_n^2\mathbb{E}[|\mathbf{\Tilde{h}}^T\mathbf{\Phi}\mathbf{\Bar{H}f}|^2] + \kappa_l^2\kappa_n^2\mathbb{E}[|\mathbf{\Tilde{h}}^T\mathbf{\Phi}\mathbf{\Tilde{H}f}|^2],\nonumber\\
       &=\kappa_l^2\kappa_n^2\boldsymbol{\psi}^H\mathbf{Z}_1 \boldsymbol{\psi} + \kappa_n^2\mathbf{f}^H\mathbf{R}_{\rm BT}\mathbf{f}[\boldsymbol{\psi}^H\mathbf{Z}_2\boldsymbol{\psi}],\label{appendixB_sigam_1}
 \end{align}
 where the last equality follows using \eqref{Exp1}, \eqref{Exp2} and \eqref{Exp3}. The variance of $\xi_2$ becomes 
 \begin{align}
     \sigma_2^2&=\mathbb{E}[\mathbf{f}^H\mathbf{{g}}^*\mathbf{{g}}^T\mathbf{f}]-|\mu_2|^2=\kappa_n^2\mathbf{f}^H\mathbf{R}_{\rm BT}\mathbf{f}.
 \end{align}
We now comment on the distributions of $\xi_1$ and $\xi_2$. Note that $\xi_2 = \kappa_l \Bar{\mathbf{g}}^T\mathbf{f} + \kappa_n \Tilde{\mathbf{g}}^T\mathbf{f}$ 
where $\Bar{\mathbf{g}}$ is deterministic and $\Tilde{\mathbf{g}} \sim \mathcal{CN}(0, \mathbf{R}_{\rm BT})$. It can be easily shown that $\Tilde{\mathbf{g}}^T\mathbf{f} \sim \mathcal{CN}(0, \mathbf{f}^H\mathbf{R}_{\rm BT}\mathbf{f})$. Thus,  $\xi_2$ becomes complex Gaussian with mean $\mu_2$ and variance $\sigma^2_2$ as given in \eqref{eq:xi1xi2}.
Next, $\xi_1=\mathbf{h}^T\mathbf{\Phi}{\mathbf{H}}\mathbf{f}$ can be expanded as 
$\xi_1  = \kappa_l^2 \Bar{\mathbf{h}}^T\mathbf{\Phi}\Bar{\mathbf{H}}\mathbf{f}  +  \kappa_l \kappa_n \Tilde{\mathbf{h}}^T\mathbf{\Phi}\Bar{\mathbf{H}}\mathbf{f}  +  \kappa_l \kappa_n \Bar{\mathbf{h}}^T\mathbf{\Phi}\Tilde{\mathbf{H}}\mathbf{f}  +  \kappa_n^2\Tilde{\mathbf{h}}^T\mathbf{\Phi}\Tilde{\mathbf{H}}\mathbf{f}$.
Here, the first term is deterministic, whereas the  second and third terms are complex Gaussian as they are linear combinations of elements of $\tilde{\mathbf{h}}$ and $\tilde{\mathbf{H}}$, respectively. 
However, the last term is the sum of products of two zero-mean complex Gaussian random variables as given by 
$$\Tilde{\mathbf{h}}^T\mathbf{\Phi}\Tilde{\mathbf{H}}\mathbf{f} = \sum\nolimits_{i = 1}^N \sum\nolimits_{j = 1}^M \Tilde{\mathbf{h}}_i \Tilde{\mathbf{H}}_{ij} \mathbf{\boldsymbol{\psi}}_i \mathbf{f}_j.$$
The exact distribution of the above form is challenging to derive  as the distribution of the product of two complex Gaussian random variables itself is in a complicated form \cite{Donoughue_GuassianProduct}, which naturally will lead to intractability in further analysis. However, as the above summation includes many terms (mainly when the number of RIS elements is large),  we can apply the central limit theorem to approximate its distribution  using a complex Gaussian. Thus, we can conclude that $\xi_2$ closely follows complex Gaussian distribution with mean $\mu_2$ and variance $\sigma^2_2$ as in \eqref{eq:xi1xi2}.
\vspace{-0.4cm}
\bibliographystyle{IEEEtran}
\bibliography{Reference}

\end{document}